\documentclass[letterpaper, 10 pt, journal, twoside]{ieeetran}
\usepackage{times}
\usepackage{arydshln}
\usepackage{tikz}
\usetikzlibrary{calc}
\usepackage[numbers,sort,compress]{natbib}
\usepackage{multicol}
\usepackage{svg}
\usepackage[bookmarks=true]{hyperref}
 
\usepackage{comment}
\usepackage{amsthm}
\usepackage{mathtools}
\usepackage{amsmath}
\usepackage{amsfonts}
\usepackage{soul} 
\usepackage{amsmath,amsfonts,amssymb}
\usepackage[mathscr]{euscript}
\usepackage{hyperref}
\usepackage{graphicx}
\usepackage[dvipsnames]{xcolor} 

\usepackage{algorithm}
\usepackage[noend]{algpseudocode}
\algrenewcommand{\algorithmicrequire}{\textbf{Input:}}
\algrenewcommand{\algorithmicensure}{\textbf{Output:}}

\hypersetup{
    colorlinks=true,
    linkcolor=black,
    citecolor=blue,
    urlcolor=black
}

\newtheorem{corollary}{Corollary}[section]
\newtheorem{theorem}{Theorem}[section]
\newtheorem{lemma}{Lemma}[section]
\theoremstyle{definition}
\newtheorem{assumption}{Assumption}[section]
\newtheorem{problem}{Problem}[section]

\newtheorem{definition}{Definition}[section]
\newcounter{subproblem}[problem] 

\usepackage{xcolor}

\renewcommand{\bf}[1]{\textbf{#1}}
\newcommand{\mycomment}[1]{}

\usepackage[capitalise]{cleveref}
 \crefname{theorem}{Theorem}{Theorem}
 \crefname{problem}{Prob.}{Problem}
 \crefname{assumption}{Assumption}{Assumption}
 \crefname{algorithm}{Alg.}{Algs.}
\crefname{definition}{Def.}{Defs.}
\Crefname{definition}{Def.}{Defs.}
\crefname{corollary}{Cor.}{Cors.}
\crefname{theorem}{Thm.}{Thms.}

\begin{document}

\title{PRISM: Efficient and Locally Optimal Probabilistic Planning with Reachability Guarantees}

\author{Alex Rose, Christopher Jewison
 and Jonathan P. How, \IEEEmembership{IEEE Fellow} 
\thanks{This work was supported by the Draper Scholars program. Alex Rose and Jonathan P. How are with Laboratory for Information and Decision Systems, Massachusetts
Institute of Technology, Cambridge MA 02139. Alex Rose is also a Draper Scholar and Christopher Jewison is with Draper, Cambridge MA 02139. Email: $\{\texttt{ameredit}, \texttt{jhow} \}$@mit.edu, cjewison@draper.com.
}}



%

\maketitle

\begin{abstract}
Belief-space planning under motion uncertainty and state and control constraints remains a fundamental challenge, largely due to the difficulty of establishing reachability guarantees in constrained belief spaces. Existing constrained belief-space planners rely on sampling to construct multi-query belief roadmaps and explicitly find feasible trajectories between sampled nodes to establish reachability. These methods often struggle to cover the belief space or use robust control techniques that improve coverage at the cost of indirect, high-cost trajectories; they also lack finite-time or finite-memory completeness guarantees. We propose PRISM, a multi-query motion planning algorithm for belief spaces with state and control constraints that targets both high coverage and low cost. We present a new result on controllability of the state covariance under constraints, which is used by PRISM to decompose belief-space planning into deterministic mean planning and covariance shrinking. PRISM further includes an online local optimization method that reduces the cost of feasible belief-space trajectories. Under mild assumptions on the start and goal distributions, we prove that PRISM guarantees full coverage (i.e. completeness) despite actuator and obstacle constraints. In challenging simulated scenarios, PRISM achieves substantially higher roadmap coverage than state-of-the-art belief-space planning methods while producing trajectories with lower mean cost and cost variance. For example, PRISM achieves 100\% coverage in easy and medium-difficulty scenarios, and, in the hardest scenario, which violates PRISM's coverage assumptions, it still achieves 97--100\% coverage, while all other methods achieve less than 45\%.
\end{abstract}

\IEEEpeerreviewmaketitle
\section{Introduction}\label{sec: introduction}
Many real-world robotic applications require navigating through environments with obstacles under motion and observation uncertainty. Developing planning algorithms that can accomplish this task efficiently remains challenging because reachability in belief space is difficult to determine. For example, steering a robot’s belief, represented by its mean and covariance, between two Gaussian distributions may be impossible, while establishing reachability under state and control constraints generally requires solving at least one semidefinite program \cite{okamoto2018optimal, rapakoulias2023discrete}.

Recent works have addressed the challenge of reachability by designing sampling-based planning methods in belief space, and by explicitly establishing feasibility for each node-to-node maneuver. FIRM \cite{agha2014firm} was the first method to break the ``curse of history'' by explicitly establishing reachability between nodes, but only sampled stationary nodes. CS-BRM \cite{zheng2024cs} built on FIRM by establishing reachability between non-stationary nodes. However, belief space is high-dimensional, so methods like FIRM and CS-BRM that jointly sample the mean and covariance of each node may require impractically many nodes to achieve good coverage of the belief space. Further, FIRM and CS-BRM do not explicitly enforce chance constraint satisfaction for each edge, and increase the cost of edges that violate chance constraints instead. Therefore, in the presence of obstacles or actuator constraints, trajectories produced by FIRM and CS-BRM may not be \textit{sound} (i.e. may violate constraints).

\begin{figure}[tbp]
\includegraphics[width=\columnwidth]{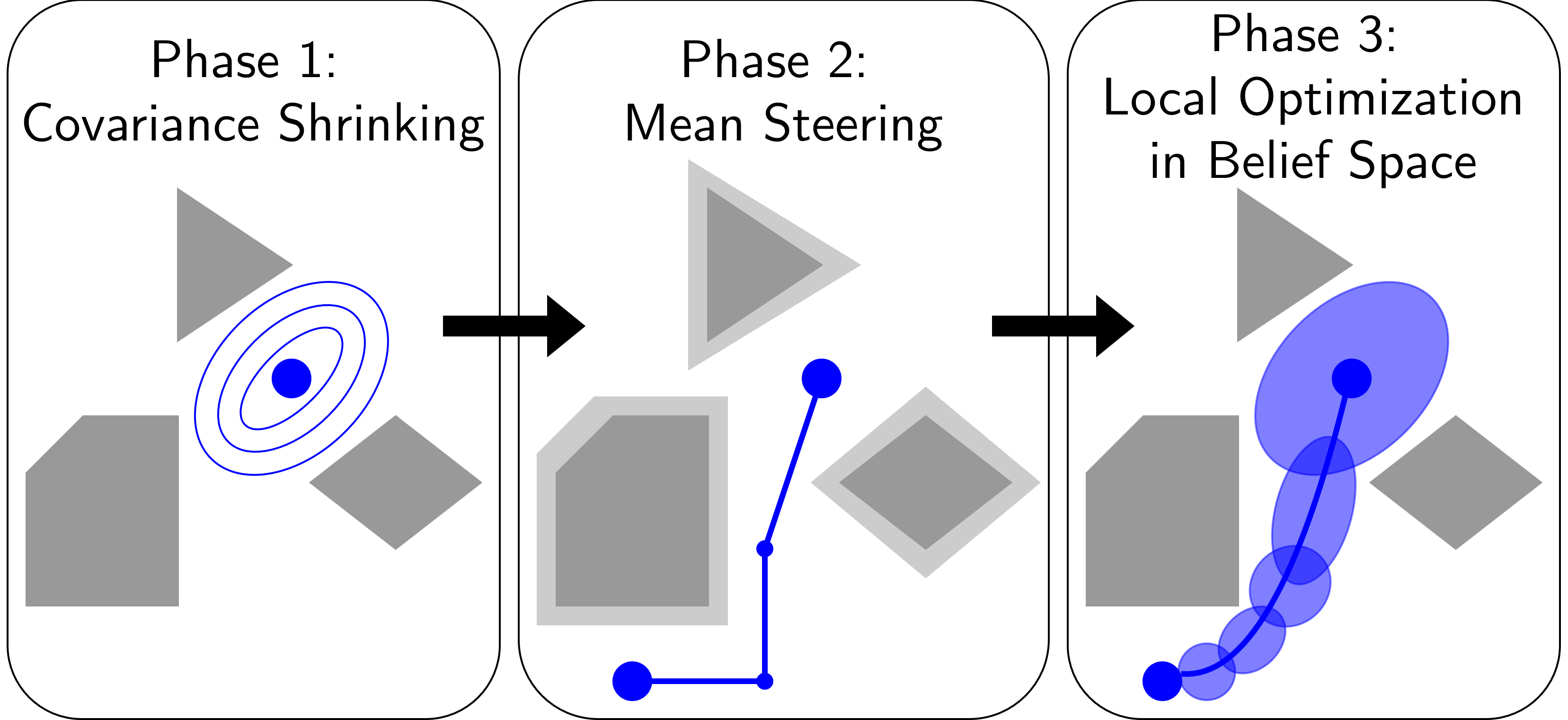}
\vspace{-0.4cm}
\caption{PRISM leverages our new result on reachability of the state covariance (see \cref{sec: shrinkability}) to separate the belief space planning problem into a covariance shrinking phase and a mean steering phase with inflated obstacles and tightened constraints, then jointly locally optimizes the mean and covariance trajectories in belief space.}
\label{fig: prism_overview}
\vspace{-0.5cm}
\end{figure}

More recently, several algorithms have been developed to build \textit{maximal coverage} roadmaps in belief space; that is, belief roadmaps that cover as much of the belief space as possible while maintaining node sparsity, in order to maximize the size of the set of initial distributions that can be steered to a goal. MAX-COVAR \cite{aggarwal2024sdp} introduced an edge controller that maximizes coverage when nodes are represented by Gaussian distributions, and new node representations that further improve coverage were introduced with MAX-ELLIPSOID \cite{aggarwal2025tac} and MAX-COV-BALL \cite{rose2025efficient}. These algorithms also guarantee chance constraint satisfaction with each edge, and therefore are guaranteed to be sound. However, because these algorithms rely on robust controllers to maximize roadmap coverage, they often produce sub-optimal paths with high control and time cost. Further, although these methods avoid sampling the covariance of each node, they still sample node means or centroids and do not guarantee completeness in finite time.

Set-based planning methods like planning through Graphs of Convex Sets (GCS) have demonstrated several advantages over sampling-based planning for deterministic path planning in the presence of obstacles \cite{marcucci2023motion, marcucci2024shortest}. Importantly, if the state space can be exactly decomposed into a union of convex sets, as in \cite{geyer2008optimal, shaikh2025exact}, planning with GCS is guaranteed to be complete \cite{marcucci2023motion}. This is a meaningful advance over sampling-based motion planning algorithms like RRT$^\star$, FMT$^\star$ and PRM \cite{karaman2011sampling, kavraki1996probabilistic, janson2015fast} that have favorable asymptotic properties but cannot guarantee optimality or completeness for a fixed map size or fixed amount of computation time. Unfortunately, although the belief space planning problem can be formulated as a shortest path problem in a graph of convex sets, finding feasible belief space trajectories inside each convex set still requires establishing reachability in belief space, which remains challenging.

Constraint tightening is a common approach used in robust \cite{richards2006robust, kohler2018novel, zanon2020safe} and stochastic \cite{lorenzen2016constraint, hewing2020recursively} model-predictive control  to guarantee recursive feasibility. Current methods for chance-constrained covariance steering \cite{okamoto2018optimal, rapakoulias2023discrete} in belief space jointly optimize the mean and covariance trajectories in belief space, and convert probabilistic chance constraints to deterministic constraints by tightening constraints based on the covariance at each time step. This coupled approach avoids conservatism when tightening constraints, but requires solving an optimization problem to explicitly establish reachability and does not allow reachability of the state mean and covariance to be analyzed separately. 

This paper addresses the problem of multi-query planning in belief space with obstacles with \textit{soundness} and \textit{completeness} guaranteed in \textit{finite time}. We introduce new sufficient conditions for the reachability of the state covariance in belief space in the presence of obstacle and actuator constraints. We leverage these conditions to develop PRISM (Precomputed Regions for Inflation-based Stochastic Motion planning), an algorithm for planning from \textit{any} start distribution to \textit{any} goal distribution in belief space while avoiding obstacles and satisfying actuator constraints. PRISM separates the global planning problem in belief space into a deterministic mean planning problem and a covariance shrinking problem, and then uses online local optimization to simultaneously optimize the resulting feasible mean and covariance trajectories (see \cref{fig: prism_overview}). PRISM always produces \textit{sound} trajectories, and guarantees \textit{completeness} when the initial and goal means are stationary, the initial covariance is upper-bounded, and the goal covariance is lower-bounded (see \cref{thm: completeness_without_obstacles} and \cref{thm: completeness_with_obstacles} for details). Our contributions are:    
\begin{enumerate}
\item{New sufficient conditions for the reachability of the state covariance under obstacle and actuator constraints.}
\item{A new algorithm for multi-query planning in belief space, PRISM, that leverages our new conditions on reachability in belief space to improve coverage. Compared to previous methods, PRISM guarantees high coverage without sampling, avoiding the inefficiency inherent to sampling in high-dimensional belief space.}
\item An online method for local optimization of feasible plans in belief space, included as a submodule of PRISM, that improves performance by more than $2.5\times$ compared to state-of-the-art methods \cite{zheng2024cs, aggarwal2025tac, karaman2011sampling, rose2025efficient}.
\item{Proofs that PRISM is sound, and that it is complete under mild assumptions on the start and goal distributions (see \cref{thm: completeness_without_obstacles,thm: completeness_with_obstacles} for details on completeness).}
\item{Simulation experiments that demonstrate that PRISM outperforms state-of-the-art belief space planning methods \cite{zheng2024cs, aggarwal2025tac, karaman2011sampling, rose2025efficient} in coverage and cost in challenging scenarios in cluttered and confined environments.}
\end{enumerate}

\section{Problem Statement}\label{sec:problem_statement}
\subsection{Notation and Mathematical Preliminaries}
We describe positive semidefinite and positive definite matrices in $\mathbb{R}^{n\times n}$ as $\mathbb{S}^n_+$ and $\mathbb{S}^n_{++}$, respectively. We denote the nonnegative reals, positive reals, and positive integers as $\mathbb{R}_{\geq 0}$, $\mathbb{R}_{>0}$, and $\mathbb{Z}_{>0}$, respectively. We use $\oplus$ to denote the Minkowski sum. We use the notation that $\mathbb{B}_n(c, r) := \{\mathbf{x} \in \mathbb{R}^n: ||\mathbf{x}-c||_2 \leq r \}$ represents an $n$-ball with respect to the Euclidean norm. We normalize all half-space constraints of the form $a^T\mathbf{x} \leq b$ with $a, \mathbf{x} \in \mathbb{R}^n$, $b \in \mathbb{R}$ such that $||a||_2 = 1$. We describe the interior of a polytope $\mathcal{S}$ as $\text{Int}(\mathcal{S})$, such that $\mathcal{S} := \{\mathbf{x} \in \mathbb{R}^n: \bigcap_{i} {a_i}^T\mathbf{x} \leq b_i \}$ and $\text{Int}(\mathcal{S}) := \{\mathbf{x} \in \mathbb{R}^n: \bigcap_{i} {a_i}^T\mathbf{x} < b_i \}$ and describe the interior of an $n$-ball as $\text{Int}(\mathbb{B}_n(c, r)) := \{\mathbf{x} \in \mathbb{R}^n: ||\mathbf{x}-c||_2 < r  \}$. We denote $n$-dimensional multivariate Gaussian distributions by $\mathcal{N}(\mu, \Sigma)$, where $\mu \in \mathbb{R}^n$ is the distribution mean and $\Sigma \in \mathbb{S}^n_+$ is the covariance. We describe planning over $n$-dimensional multivariate Gaussian distributions in $\mathbb{R}^n \times \mathbb{S}^n_+$ as planning in \textit{belief space}. Also, to ease notation, we define $[N] := \{1,\ldots,N\}$.
\subsection{Problem}
Consider a discrete-time linear time-invariant system with dynamics
\begin{equation}\label{eq: dynamics}
\mathbf{x}_{k+1} = A\mathbf{x}_k + B\mathbf{u}_k + D\mathbf{w}_k
\end{equation}
where $\mathcal{X} \subseteq \mathbb{R}^n$ and $\mathcal{U} \subseteq \mathbb{R}^m$ are the safe state and control spaces and $\mathbf{w}_k \sim \mathcal{N}(0, I_n)$ with $\{\mathbf{w}_k\}$ i.i.d. and $A$ non-singular. We assume the state and control constraints are probabilistic, with $\mathbb{P}(\mathbf{x}_k \in \mathcal{X}) \geq 1 - \epsilon_1$ and $\mathbb{P}(\mathbf{u}_k \in \mathcal{U}) \geq 1 - \epsilon_2$ for all $k$.

We define a finite-horizon steering problem where an initial Gaussian distribution $\mathcal{I} := \mathcal{N}(\mu_\mathcal{I}, \Sigma_\mathcal{I})$ is steered to a goal Gaussian distribution $\mathcal{G} := \mathcal{N}(\mu_\mathcal{G}, \Sigma_\mathcal{G})$ in Problem \ref{prob: time_optimal_covariance_steering}, subject to the dynamics in \eqref{eq: dynamics}, minimizing quadratic control cost augmented with a cost on the maneuver time $N$. We assume a constraint on the maximum maneuver time $N \leq N_{\max}$. For fixed $N$, Problem \ref{prob: time_optimal_covariance_steering} reduces to the optimization in \cite[Section II]{rapakoulias2023discrete}. The right panel in \cref{fig: prism_overview} visualizes this belief space steering problem with obstacle constraints.

\begin{problem}\label{prob: time_optimal_covariance_steering}
\textit{Steer from initial distribution $\mathcal{I} := \mathcal{N}(\mu_\mathcal{I}, \Sigma_\mathcal{I})$ to reach a goal distribution $\mathcal{G} := \mathcal{N}(\mu_\mathcal{G}, \Sigma_\mathcal{G})$, minimizing time-augmented control cost, subject to state and control chance constraints.}
\begin{equation}
\min_{\mathbf{u}_k, N} J = \mathbb{E}\left[\sum_{k=1}^{N-1} (\mathbf{u}_k^T R_k \mathbf{u}_k + c_T)\right]
\end{equation}
such that for all $k \in [N]$:
\begin{align}
&\mathbf{x}_{k+1} = A\mathbf{x}_k + B\mathbf{u}_k + D\mathbf{w}_k, \notag \\
&\mathbf{x}_1 \sim \mathcal{N}(\mu_\mathcal{I}, \Sigma_\mathcal{I}), \quad \mathbf{x}_N \sim \mathcal{N}(\mu_N, \Sigma_N), \quad N \leq N_{\max}, \notag \\
&\mathbb{P}(\mathbf{x}_k \in \mathcal{X}) \geq 1 - \epsilon_1 \ \forall k, \quad \mathbb{P}(\mathbf{u}_k \in \mathcal{U}) \geq 1 - \epsilon_2 \ \forall k.
\end{align}
\end{problem}
Under state, control, and maximum time constraints, Problem \ref{prob: time_optimal_covariance_steering} may not have a feasible solution; however, it may still be possible to steer from $\mathcal{I}$ to $\mathcal{G}$ by composing multiple maneuvers. In this paper, we address the following problem:

\begin{problem}\label{prob: find_all_paths}
\textit{Find paths that minimize time-augmented control cost from all initial Gaussian distributions to all final Gaussian distributions, subject to state and control constraints.}
\end{problem}

We solve this problem by constructing a graph-based roadmap in belief space that is guaranteed to be complete for a large subset of the belief space. Our roadmap construction algorithm, PRISM, leverages our new theorem on the reachability of the system covariance under obstacle and actuator constraints to decompose the belief space planning problem into a covariance shrinking problem and a mean steering problem in deterministic space. PRISM builds a graph-based roadmap in deterministic space with tightened constraints and then lifts it into belief space while maintaining feasibility guarantees. Upon extracting a feasible path from the graph-based roadmap, PRISM applies a novel local optimization method to refine the path to local optimality in belief space.

The rest of this paper is organized as follows: in \cref{sec: preliminaries} we present background information on covariance steering in belief space. \cref{sec: shrinkability} presents new results on the reachability of the state covariance in belief space under actuator and obstacle constraints, and in \cref{sec: approach} we use these results to construct a novel global planning algorithm in belief space with finite-time coverage guarantees. \cref{subsec: local_refinement} develops a new local optimization algorithm in belief space and  \cref{sec: theoretical_results} provides results on soundness and completeness for the global and local planning algorithms. \cref{sec: experiments} presents the results of our simulation experiments. Proofs for all lemmas, theorems, and corollaries stated in the main text of the paper are presented in Appendix \ref{sec:appendix}.
\section{Covariance Steering Preliminaries}
\label{sec: preliminaries}
In the absence of chance constraints and for a fixed time horizon $N$, the optimal control for \cref{prob: time_optimal_covariance_steering} is given by a linear state feedback law of the form \cite{rapakoulias2023discrete, liu2024optimal}:
\begin{equation}\label{eq: feedback_control_law}
\mathbf{u}_k = K_k(\mathbf{x}_k - \mu_k) + \mathbf{v}_k,
\end{equation}
where $K_k \in \mathbb{R}^{m \times n}$ is a feedback gain matrix and $\mathbf{v}_k \in \mathbb{R}^m$ is a feedforward control term. Accordingly, the dynamics and cost in \cref{prob: time_optimal_covariance_steering} can be reformulated as
\begin{align}
\mu_{k+1} &= A\mu_k + B\mathbf{v}_k,\label{eq: mean_dynamics}\\
\Sigma_{k+1} &= (A + BK_k)\Sigma_k(A + BK_k)^T + DD^T, \label{eq: covariance_dynamics}\\
J &= \sum_{k=0}^{N-1} (\text{tr}(RK_k\Sigma_kK_k^T) + \mathbf{v}_k^T R_k \mathbf{v}_k + c_T), \label{eq: cost_function}
\end{align}
and the initial and final distribution constraints become $\mu_1 = \mu_\mathcal{I}$, $\mu_N = \mu_\mathcal{G}$, $\Sigma_1 = \Sigma_\mathcal{I}$, and $\Sigma_N = \Sigma_\mathcal{G}$. We relax the final covariance constraint to $\Sigma_N \preceq \Sigma_\mathcal{G}$, adding flexibility without increasing uncertainty at the goal distribution. 

Following \cite{rapakoulias2023discrete, aggarwal2024sdp}, the safe control set $\mathcal{U}$ is a closed polytope in $\mathbb{R}^m$, where $m$ is the control dimension, with $\mathcal{U} = \{\mathbf{u} \in \mathbb{R}^m : \bigcap_{i=1}^{M_u} {a_i^u}^T\mathbf{u} \leq b_i^u\}$. Similarly, $\mathcal{X} = \{\mathbf{x} \in \mathbb{R}^n : \bigcap_{i=1}^{M_x}{a_i^x}^T\mathbf{x} \leq b_i^x\}$. We assume that $\mathcal{X}$ can be separated into a position constraint polytope and a polytope of constraints on other parts of the state. Specifically, we assume that position space has dimension $p$ with $\mathbb{R}^p \subseteq \mathbb{R}^n$, that there exists a transformation matrix $P \in \mathbb{R}^{p\times n}$ such that for any state $\mathbf{x} \in \mathbb{R}^n$, its position is given by $\mathbf{p}  = P\mathbf{x}$, and that there exists a transformation matrix $S \in \mathbb{R}^{(n-p)\times n}$ such that for any state $\mathbf{x} \in \mathbb{R}^n$, its non-position state components are given by $\mathbf{s} = S\mathbf{x}$. We assume that for any stationary point $\mathbf{x} \in \mathbb{R}^n$ satisfying $A\mathbf{x} = \mathbf{x}$, $S\mathbf{x} = 0$. We assume that $\mathcal{X} = \mathcal{X}_p \bigcap \mathcal{X}_s$, with $\mathcal{X}_p = \{\mathbf{x} \in \mathbb{R}^n : \bigcap_{i=1}^{M_{x_p}}{a_i^{x_p}}^TP\mathbf{x} \leq b_i^{x_p} \}$ and $\mathcal{X}_s = \{\mathbf{x} \in \mathbb{R}^n : \bigcap_{i=1}^{M_{x_s}}{a_i^{x_s}}^TS\mathbf{x} \leq b_i^{x_s} \}$, with $M_x = M_{x_s} + M_{x_p}$.

Polytopic obstacles also exist in position space $\mathbb{R}^p \subseteq \mathbb{R}^n$. We lift the obstacle set to $\mathbb{R}^n$ by defining it as $\mathcal{P} := \bigcup_{j=1}^{M_p} \mathcal{P}_j$, with $\mathcal{P}_j := \{\mathbf{x} \in \mathbb{R}^n: \bigcap_{i=1}^{M_{p_j}}{a_{i}^{p_j}}^TP\mathbf{x} \leq b_i^{p_j} \}$, with $a_i^{p_j} \in \mathbb{R}^p$ for all $i$. Then, the safe state set is $\mathcal{X}_s \bigcap (\mathcal{X}_p \setminus \mathcal{P})$. We restrict the position domain of \cref{prob: time_optimal_covariance_steering} to a sequence of safe polytopes $\mathcal{S}_1, \ldots \mathcal{S}_N$, with $\mathcal{S}_k \subseteq \mathcal{X}_p \setminus \mathcal{P}$ for all $k$. We lift $\mathcal{S}_k$ to $\mathbb{R}^n$ as $\mathcal{S}_k := \{\mathbf{x} \in \mathbb{R}^n: \bigcap_{i=1}^{M_{s_k}}{a_i^{s_k}}^TP\mathbf{x} \leq b_i^{s_k} \}$. Then, as in \cite{okamoto2018optimal, rapakoulias2023discrete}, the state chance constraint $\mathbb{P}(\mathbf{x}_k \in \mathcal{S}_k \bigcap\mathcal{X}_s) \geq 1-\epsilon_1$ can be written as
\begin{align}\label{eq: pos_chance_constraint}
\Phi^{-1}(1-\epsilon_x)\sqrt{{a_\ell^{s_k}}^TP\Sigma_kP^Ta_\ell^{s_k}} +{a_\ell^{s_k}}^TP\mu_k \leq b_\ell^{s_k}, 
\end{align}
for all $k \in [N]$, $\ell \in [M_{s_k}]$, with $\Phi^{-1}(\cdot)$ as the inverse cumulative distribution function of the standard normal distribution, with the non-position state chance constraints as
\begin{align}\label{eq: nonpos_state_chance_constraint}
\Phi^{-1}(1-\epsilon_x)\sqrt{{a_i^{x_s}}^TS\Sigma_kS^Ta_i^{x_s}} +{a_i^{x_s}}^TS\mu_k \leq b_i^{x_s},
\end{align}
for $k \in [N]$, $i \in [M_{x_s}]$. Similarly, the control chance constraints state that for $k \in [N-1]$, $j \in [M_u]$,
\begin{equation}\label{eq: control_chance_constraint}
\Phi^{-1}(1-\epsilon_u)\sqrt{{a_j^u}^TK_k\Sigma_kK_k^Ta_j^u} +{a_j^u}^T\mathbf{v}_k \leq b_j^u. 
\end{equation}
Then, we can define soundness of a covariance steering trajectory based on satisfaction of the reformulated constraints.
\begin{definition}[{Soundness}]\label{def: soundness}
Given an initial Gaussian distribution $\mathcal{I} = \mathcal{N}(\mu_\mathcal{I}, \Sigma_\mathcal{I})$ and a goal Gaussian distribution $\mathcal{G} = \mathcal{N}(\mu_\mathcal{G}, \Sigma_\mathcal{G})$, a covariance steering trajectory $(\mathbf{\mu}_{1:N}, \mathbf{v}_{1:N-1}, \Sigma_{1:N}, K_{1:N-1})$ is sound if \cref{eq: mean_dynamics} is satisfied for $k\in[N-1]$, \cref{eq: covariance_dynamics} is satisfied for $k \in[N-1]$, \cref{eq: pos_chance_constraint} is satisfied for $k \in [N]$, $\ell \in [M_{s_k}]$, \cref{eq: nonpos_state_chance_constraint} is satisfied for $k \in [N]$, $i \in [M_{x_s}]$, \cref{eq: control_chance_constraint} is satisfied for $k \in [N-1]$, $j \in [M_u]$, and the boundary constraints $\mu_1 = \mu_\mathcal{I}$, $\mu_N = \mu_\mathcal{G}$, $\Sigma_1 \succeq \Sigma_\mathcal{I}$, and $\Sigma_N \preceq \Sigma_\mathcal{G}$ are satisfied.
\end{definition}
\cref{eq: covariance_dynamics}, \cref{eq: cost_function}, and \cref{eq: control_chance_constraint} are bilinear in $K_k$ and $\Sigma_k$. Following \cite{rapakoulias2023discrete, liu2024optimal}, we (losslessly) substitute $U_k = K_k\Sigma_k$, add a variable $Y_k = U_k\Sigma_k^{-1}U_k^T$, reformulate \cref{eq: covariance_dynamics}, \cref{eq: cost_function}, \cref{eq: control_chance_constraint}, add a new constraint $\forall k$, then rewrite \cref{prob: time_optimal_covariance_steering}:
\begin{align}
&\Sigma_{k+1}\!\!=\!\! A\Sigma_kA^T \!+\! BU_kA^T \!+\! AU_k^TB^T \!+\! BY_kB^T \!+\! DD^T,\label{eq: convex_covariance_dynamics}\\
&J = \sum_{k=0}^{N-1} (\text{tr}(RY_k) + \mathbf{v}_k^T R \mathbf{v}_k + c_T),\label{eq: convex_cost}\\
&\Phi^{-1}(1-\epsilon_u)\sqrt{{a_j^u}^TY_ka_j^u} +{a_j^u}^T\mathbf{v}_k \leq b_j^u,\label{eq: Yk_control_constraint} \\
&\begin{bmatrix}
\Sigma_k & U_k^T \\
U_k & Y_k
\end{bmatrix} \succeq 0. \label{eq: schur_complement_constraint}
\end{align}
\begin{problem}\label{prob: reformulated}
\textit{Steer from initial distribution $\mathcal{I}$ to goal distribution $\mathcal{G}$, minimizing time-augmented control cost, subject to state and control chance constraints, with the position domain restricted to a sequence of safe polytopes $\mathcal{S}_1, \ldots \mathcal{S}_N$.}


\begin{equation}
\min_{N, \mu_k, \mathbf{v}_k, \Sigma_k U_k, Y_k} (\ref{eq: convex_cost}) \notag
\end{equation}
such that for all $i \in [M_{x_s}]$, $j \in [M_u]$, $k \in [N]$, $\ell \in [M_{s_k}]$:
\begin{align}
&(\ref{eq: mean_dynamics}), (\ref{eq: convex_covariance_dynamics}), (\ref{eq: schur_complement_constraint}), (\ref{eq: Yk_control_constraint}), (\ref{eq: nonpos_state_chance_constraint}), (\ref{eq: pos_chance_constraint}), \notag \\
&\mu_1=\mu_\mathcal{I},\ \Sigma_1=\Sigma_\mathcal{I},\ \mu_N=\mu_\mathcal{G},\ \Sigma_N \preceq \Sigma_\mathcal{G},\ N \leq N_{\max}. \notag
\end{align}
\end{problem}
 \cref{eq: pos_chance_constraint}, \cref{eq: control_chance_constraint}, and \cref{eq: nonpos_state_chance_constraint} require taking a square root of a matrix decision variable and are non-convex. Following \cite{rapakoulias2023discrete}, we overestimate the square root using a tangent line. For any symmetric positive semidefinite matrix $M_k$, we define
\begin{equation}
\mathbb{M}_1(a, M_r, M_k) := \frac{1}{2\sqrt{a^TM_ra}}a^TM_ka + \frac{\sqrt{a^TM_ra}}{2}, \notag
\end{equation}
with $\mathbb{M}_1(a, M_r, M_k) \leq \sqrt{a^T M_k a}$ for any $a \in \mathbb{R}^n$, $M_r, M_k \in \mathbb{S}^n_+$. As such, we can rewrite the chance constraints as
\begin{align}
&\Phi^{-1}(1-\epsilon_x)\mathbb{M}_1(P^Ta_\ell^{s_k}, \Sigma_{r, k}^\ell, \Sigma_k) + {a_\ell^{s_k}}^TP\mu_k \leq b_\ell^{s_k}, \label{eq: convex_pos_chance_constraint} \\
&\Phi^{-1}(1-\epsilon_x)\mathbb{M}_1(S^Ta_i^{x_s}, \Sigma_{r, k}^i, \Sigma_k) + {a_i^{x_s}}^TS\mu_k \leq b_i^{x_s}, \label{eq: convex_nonpos_state_chance_constraint} \\
&\Phi^{-1}(1-\epsilon_u)\mathbb{M}_1(a_j^u, Y_{r, k}^j, Y_k) + {a_j^u}^T\mathbf{v}_k \leq b_j^u. \label{eq: convex_ctrl_chance_constraint}
\end{align}
The convexification of the chance constraints is not lossless, but $\mathbb{M}_1(a, M_r, M_k) = \sqrt{a^T M_k a}$ when $M_r = M_k$. So, for any feasible solution to \cref{prob: reformulated}, it is possible to choose $\Sigma_{r, k}^\ell$, $\Sigma_{r, k}^i$, $Y_{r, k}^j$ such that the constraints given by \cref{eq: convex_pos_chance_constraint} -- \cref{eq: convex_ctrl_chance_constraint} are satisfied. For a fixed final time $N$, we rewrite \cref{prob: reformulated} as a convex semidefinite program:
\begin{problem}\label{prob: steer_edge}
\textit{Steer from $\mathcal{I}$ to $\mathcal{G}$ over $N$ steps, minimizing time-augmented control cost, subject to convex tightened state and control chance constraints, with the position domain restricted to a sequence of safe polytopes $\mathcal{S}_1, \ldots \mathcal{S}_N$.}


\begin{equation}
\min_{\mu_k, \mathbf{v}_k, \Sigma_k, U_k, Y_k} (\ref{eq: convex_cost}) \notag
\end{equation}
such that for all $i \in [M_{x_s}]$, $j \in [M_u]$, $k \in [N]$, $\ell \in [M_{s_k}]$:
\begin{align}
&(\ref{eq: mean_dynamics}), (\ref{eq: convex_covariance_dynamics}), (\ref{eq: schur_complement_constraint}), (\ref{eq: convex_ctrl_chance_constraint}), (\ref{eq: convex_nonpos_state_chance_constraint}), (\ref{eq: convex_pos_chance_constraint}), \notag \\
&\mu_1=\mu_\mathcal{I},\ \Sigma_1=\Sigma_\mathcal{I},\ \mu_N=\mu_\mathcal{G},\ \Sigma_N \preceq \Sigma_\mathcal{G}. \notag
\end{align}
\end{problem}
Even with the convex reformulated chance constraints, covariance steering with free final time is a mixed-integer semidefinite program. It can be solved (inefficiently) by solving \cref{prob: steer_edge} for $N \in [N_{\max}]$, interpolating the sequence of safe polytopes and reference values for linearization if needed, and selecting the trajectory with the lowest cost.


\section{Reachability of the State Covariance}\label{sec: shrinkability}
Consider the discrete-time linear time-invariant dynamics given by \cref{eq: dynamics}. The controllability of the state covariance over a finite horizon in the absence of state and control constraints, but in the presence of additive noise, has previously been studied in \cite{liu2024reachability}, and has been studied in the unconstrained and noise-free case in \cite{abdelgalil2025collective, liu2026reachability}. However, to the authors' knowledge, no previous studies of the controllability of the state covariance in the presence of state or control constraints appear in the literature. Further, even in the absence of state and control constraints, when $D \neq 0$ for all time, the state covariance is not controllable from time $1$ to any time $k>1$ (see Theorem 4 in \cite{liu2024reachability}). Accordingly, we introduce the concept of {shrinkability}, where $\Sigma_1$ is defined to be {shrinkable} over $k$ time steps if and only if there exists $\bar{\Sigma}_k \in \mathbb{S}^n_+$ such that $\bar{\Sigma}_k \prec \Sigma_1$ and such that $\bar{\Sigma}_k$ is reachable from $\Sigma_1$ over $k$ time steps. We use the concept of shrinkability to develop sufficient conditions on the reachability of the state covariance in the presence of actuator constraints (see \cref{lemma: delta_shrinkability}) and in the presence of actuator and obstacle constraints (see \cref{thm: r_c_shrinkability}).

\subsection{Reachability Without Position Constraints}
\label{subsec: shrinkability_without_position_constraints}
Suppose the state $\mu_1$ is stationary, meaning that $A\mu_1 = \mu_1$. Then, under the feedback control law given by \cref{eq: feedback_control_law} and mean dynamics given by \cref{eq: mean_dynamics}, if $\mathbf{v}_k = 0$ for all $k$, then $\mu_k = \mu_1$ for all $k \geq 1$. Further, because $\mu_1$ is stationary, its non-position state component vector $\mathbf{s}_1 = S\mu_1 = 0$, so $\mathbf{s}_k = 0$ $\forall k$ and thus the non-position state constraints at time $k$ depend only on the covariance $\Sigma_k$. Therefore, in the presence of control constraints and non-position state constraints (but in the absence of position constraints), we can separate the mean and covariance steering problems if $\mu_1$ is stationary and $\mu_k = \mu_1$ for all $k$.

Then, a Gaussian distribution $\mathcal{N}(\mu_1, \Sigma_1)$ with stationary $\mu_1$ has {${\delta}$-strictly shrinkable} covariance over $N$ steps if the following optimization problem has a solution, with $\delta \in \mathbb{R}_{\geq 0}$:
\begin{problem}[{Strictly Shrink Covariance by $\delta$}]\label{prob: covariance_shrinking_delta}
\begin{equation}
\min_{\Sigma_k, U_k, Y_k} 0
\end{equation}
\text{such that for $i\in[M_{x_s}]$, $j\in[M_u]$, $k\in[N]$:}
\begin{align}
&\Phi^{-1}(1-\epsilon)^2({a_j^u}^TY_ka_j^u) \leq ({b_j^u}-\delta)^2, \label{eq: control_constraint_fixed_v_delta} \\
&\Phi^{-1}(1-\epsilon)^2({a_i^{x_s}}^TS\Sigma_kS^T{a_i^{x_s}}) \leq ({b_i}^{x_s}-\delta)^2, \label{eq: non_state_constraint_fixed_delta} \\
& (\ref{eq: convex_covariance_dynamics}), \quad (\ref{eq: schur_complement_constraint}), \quad \lambda_{\max}(\Sigma_N) \leq \lambda_{\min}(\Sigma_1)-\delta.
\end{align}
\end{problem}

For any $\delta \in \mathbb{R}_{\geq 0}$, we can define $C_\delta \subset \mathbb{R}$ as $C_\delta := \{c \in \mathbb{R}_{\geq 0}: cI\text{ is $\delta$-strictly shrinkable in }$N$ \text{ steps} \}$, with $c_{1, \delta}^{(N)} = \inf C_\delta$, $c_{2, \delta}^{(N)} = \sup C_\delta$. Then, if $\Sigma_1 \prec c_{1, \delta}^{(N)}I$ it is not $\delta$-strictly shrinkable in $N$ steps and if $\Sigma_1 \succ c_{2, \delta}^{(N)}I$, $\Sigma_1$ is not $\delta$-strictly shrinkable in $N$ steps. When $\delta > 0$, $c_{1, \delta}^{(N)} I \preceq \Sigma_1 \preceq c_{2, \delta}^{(N)} I$ is a sufficient condition for $\delta$-strict shrinkability over a finite horizon, and $\delta-$strict shrinkability implies shrinkability. We will define the upper bound on the finite time horizon required to shrink $a I$ to $b I$, with $c_{1, \delta}^{(N)} \leq b \leq a \leq c_{2, \delta}^{(N)}$ as:
\begin{equation}
\mathbb{T}(a, b, \delta, N) = \left\lceil\frac{a - b}{\delta}\right\rceil(N-1) + N.
\end{equation}

\begin{theorem}\label{lemma: delta_shrinkability}
For any $\delta \in \mathbb{R}_{\geq 0}$, suppose that $c_{1, \delta}^{(N)} = \inf \{c \in \mathbb{R}_{\geq 0}: cI \text{ is $\delta$-strictly shrinkable in }$N$ \text{ steps} \}$ and $c_{2, \delta}^{(N)} = \sup \{c \in \mathbb{R}_{\geq 0}: cI \text{ is $\delta$-strictly shrinkable in }$N$ \text{ steps} \}$. Then, for any initial covariance $\Sigma_1$ under the same control and non-position state constraints:
\begin{itemize}
    \item{If $\Sigma_1 \prec c_{1, \delta}^{(N)} I$, $\Sigma_1$ is not $\delta$-strictly shrinkable over $N$ time steps.}
    \item{If $\delta > 0$ and $c_{1, \delta}^{(N)} I \preceq \Sigma_1 \preceq c_{2, \delta}^{(N)} I$, $\Sigma_1$ is $\delta$-strictly shrinkable over $T = \mathbb{T}(\lambda_{\max}(\Sigma_1), \lambda_{\min}(\Sigma_1), \delta, N)$ time steps.}
    \item{If $c_{2, \delta}^{(N)}I \prec \Sigma_1$, $\Sigma_1$ is not $\delta$-strictly shrinkable over $N$ time steps.}
\end{itemize}
\end{theorem}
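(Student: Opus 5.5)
The plan rests on one monotonicity fact, and I would prove it first. Fix feedback gains $K_1,\dots,K_{N-1}$. Suppose two initial covariances satisfy $\Sigma_1' \preceq \Sigma_1$. Propagating both through \cref{eq: covariance_dynamics} with the same gains gives $\Sigma_k' \preceq \Sigma_k$ for all $k$, because congruence $X \mapsto (A+BK_k)X(A+BK_k)^T$ preserves the Loewner order and $DD^T$ is common to both. Set $U_k' = K_k\Sigma_k'$ and $Y_k' = K_k\Sigma_k'K_k^T$. Then \cref{eq: schur_complement_constraint} holds with equality in the Schur complement, and $Y_k' \preceq Y_k$. So every quadratic form ${a_j^u}^TY_k'a_j^u$ and ${a_i^{x_s}}^TS\Sigma_k'S^Ta_i^{x_s}$ is no larger than its counterpart, and \cref{eq: control_constraint_fixed_v_delta,eq: non_state_constraint_fixed_delta} are inherited. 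Also $\lambda_{\max}(\Sigma_N') \le \lambda_{\max}(\Sigma_N)$. Since $U_k = K_k\Sigma_k$ losslessly, any feasible point of \cref{prob: covariance_shrinking_delta} can be written through gains $K_k$ (the Schur constraint can be taken tight without loss, as shrinking $Y_k$ only helps).

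For the first and third bullets I would argue by contradiction. Suppose $\Sigma_1$ is $\delta$-strictly shrinkable over $N$ steps, and let $c = \lambda_{\min}(\Sigma_1)$, so that $cI \preceq \Sigma_1$. Apply the gains of the $\Sigma_1$ solution to $cI$. By monotonicity all constraints hold, and $\lambda_{\max}(\Sigma_N') \le \lambda_{\max}(\Sigma_N) \le \lambda_{\min}(\Sigma_1) - \delta = c - \delta = \lambda_{\min}(cI) - \delta$. Hence $c \in C_\delta$. If $\Sigma_1 \prec c_{1,\delta}^{(N)}I$, then $c < c_{1,\delta}^{(N)} = \inf C_\delta$, a contradiction. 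If $\Sigma_1 \succ c_{2,\delta}^{(N)}I$, then $c > \sup C_\delta$, again a contradiction.

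For the second bullet, I first need $[c_{1,\delta}^{(N)}, c_{2,\delta}^{(N)}] \subseteq C_\delta$, which I expect to be the main obstacle. To show $C_\delta$ is an interval, I would check that the feasible set of \cref{prob: covariance_shrinking_delta} with $\Sigma_1 = cI$ is jointly convex in $(c, \Sigma_k, U_k, Y_k)$:
\begin{itemize}
\item \cref{eq: convex_covariance_dynamics} is affine with a constant term, so it is preserved under convex combinations;
\item the LMI \cref{eq: schur_complement_constraint} is convex;
\item \cref{eq: control_constraint_fixed_v_delta,eq: non_state_constraint_fixed_delta} are equivalent to the convex second-order-cone conditions $\Phi^{-1}(1-\epsilon)\sqrt{a^TYa} \le b - \delta$ whenever $b \ge \delta$, which must hold for any feasible point to exist;
\item $\lambda_{\max}(\Sigma_N) \le c - \delta$ is convex jointly in $(\Sigma_N, c)$.
\end{itemize}
Convex combinations of solutions for $c, c' \in C_\delta$ therefore give solutions for intermediate values. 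For the endpoints, I would show the feasible set is closed, using boundedness from the constraints, so that the infimum and supremum are attained. If this fails, I would restrict the claim to the open interval.

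With this in hand, the chained construction goes as follows. Let $a = \lambda_{\max}(\Sigma_1)$ and $b = \lambda_{\min}(\Sigma_1)$, both in $C_\delta$, and define levels $c_j = \max(a - j\delta, b)$ for $j = 0, \dots, \lceil (a-b)/\delta\rceil$. Each $c_j$ lies in $[b,a] \subseteq C_\delta$. Inductively, if the current covariance satisfies $\Sigma \preceq c_jI$, apply the $N$-step gains from the $c_jI$ solution; by monotonicity this is feasible and ends with $\Sigma \preceq (c_j - \delta)I \preceq c_{j+1}I$. After $\lceil (a-b)/\delta\rceil$ stages we have $\Sigma \preceq bI$. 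One final stage with the $bI$ gains gives $\lambda_{\max} \le b - \delta = \lambda_{\min}(\Sigma_1) - \delta$. Consecutive stages share their endpoint time index, where the state constraints coincide, so each stage contributes $N-1$ transitions. The total time is $\lceil (a-b)/\delta\rceil(N-1) + (N-1) + 1 = \mathbb{T}(a, b, \delta, N)$. Positivity of $\delta$ is what makes this number of stages finite.
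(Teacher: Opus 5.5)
Your proposal is correct and follows the paper's route. The shared ingredients are monotonicity of the covariance recursion under fixed gains, and the contradiction obtained by applying a shrinking solution's gains to $\lambda_{\min}(\Sigma_1)I$ for the first and third bullets. For the second bullet, both arguments concatenate $N$-step stages at levels decreasing by $\delta$, ending with a stage from $\lambda_{\min}(\Sigma_1)I$, which gives the same count $\mathbb{T}$. Your joint-convexity argument that $C_\delta$ is an interval, and your observation that closedness of its projection needs boundedness, fill in steps the paper only asserts. For that boundedness, a bounded $\mathcal{U}$ bounds $Y_k$, hence $U_k$ through the Schur block and $\Sigma_k$ through the dynamics.
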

If $\delta > 0$ and $c_{1, \delta}^{(N)} I \preceq \Sigma_1 \preceq c_{2, \delta}^{(N)} I$, $\Sigma_1$ is not only $\delta$-strictly shrinkable in finite time; it can also shrink to $cI$ for any $c \in [c_{1, \delta}^{(N)}, c_{2, \delta}^{(N)}]$ in finite time.
\begin{lemma}\label{lemma: delta_shrinkability_induction}
Suppose $\delta \in \mathbb{R}_{>0}, N \in \mathbb{Z}_{>0}$. For any $\Sigma_1$ satisfying $\Sigma_1 \preceq c_{2, \delta}^{(N)}I$ and any $c \in [c_{1, \delta}^{(N)}, \lambda_{\max}(\Sigma_1)]$, there exists a control policy over $T = \mathbb{T}(\lambda_{\max}(\Sigma_1), c, \delta, N)$ steps that steers $\Sigma_1$ to $\Sigma_T \preceq (c-\delta)I$ while satisfying all control and non-position state constraints.
\end{lemma}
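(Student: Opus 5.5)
The plan is to chain together $N$-step shrinking maneuvers, each designed for a scalar covariance $\ell I$, and apply each one to the actual covariance. Two facts make this work: the feasible set of \cref{prob: covariance_shrinking_delta} is convex, and the covariance dynamics are monotone under a fixed feedback gain. Throughout, write $a := \lambda_{\max}(\Sigma_1)$, so that $\Sigma_1 \preceq aI$ and $c_{1,\delta}^{(N)} \le c \le a \le c_{2,\delta}^{(N)}$.

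\textbf{Step 1 ($C_\delta$ contains $[c_{1,\delta}^{(N)}, c_{2,\delta}^{(N)}]$).} I would first show that $C_\delta$ is an interval. In the variables $(\Sigma_k,U_k,Y_k)$, with $\Sigma_1=cI$ entering linearly, every constraint of \cref{prob: covariance_shrinking_delta} is jointly convex:
\begin{itemize}
\item \cref{eq: convex_covariance_dynamics} is linear;
\item \cref{eq: schur_complement_constraint} is an LMI;
\item \eqref{eq: control_constraint_fixed_v_delta} and \eqref{eq: non_state_constraint_fixed_delta} are equivalent to second-order cone constraints $\Phi^{-1}(1-\epsilon)\|{Y_k}^{1/2}a\|\le b-\delta$, valid when $b\ge\delta$, which holds whenever the problem is feasible;
\item $\lambda_{\max}(\Sigma_N)\le c-\delta$ is convex, since $\lambda_{\max}$ is convex and the right-hand side is affine in $c$.
\end{itemize}
Hence a convex combination of solutions for $c$ and $c'$ is a solution for the corresponding convex combination of $c$ and $c'$, so $C_\delta$ is convex. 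To include the endpoints $c_{1,\delta}^{(N)}$ and $c_{2,\delta}^{(N)}$ themselves, I would show the feasible set is closed and bounded. Boundedness follows because $Y_k$ is bounded by the control constraints (assuming $\{a^u_j\}$ spans, i.e. $\mathcal{U}$ is bounded), then $U_k$ is bounded via the Schur complement, and $\Sigma_k$ is bounded by propagating the dynamics. A limit of solutions is then again a solution. This compactness step is where I expect the main technical friction. If $\mathcal{U}$ is unbounded, I would fall back on working with $c$ in the open interval and approximating.

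\textbf{Step 2 (monotonicity under a fixed gain).} Take a feasible solution for $\Sigma_1=\ell I$ and recover the gains $K_k=U_k\Sigma_k^{-1}$. Replacing the relaxed $Y_k$ by $K_k\Sigma_kK_k^T\preceq Y_k$ keeps the solution feasible. Now apply the same gains to any $\Sigma'_1\preceq \ell I$. By induction using \cref{eq: covariance_dynamics}, $\Sigma'_k\preceq\Sigma_k$ for all $k$, because $M\mapsto (A+BK)M(A+BK)^T+DD^T$ is order preserving. Consequently:
\begin{itemize}
\item $K_k\Sigma'_kK_k^T\preceq K_k\Sigma_kK_k^T$, so the control constraints still hold;
\item $S\Sigma'_kS^T\preceq S\Sigma_kS^T$, so the non-position state constraints still hold;
\item $\Sigma'_N\preceq\Sigma_N\preceq(\ell-\delta)I$.
\end{itemize}
The mean stays fixed at the stationary $\mu_1$ with $\mathbf{v}_k=0$, as discussed before \cref{prob: covariance_shrinking_delta}.

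\textbf{Step 3 (induction over stages).} Let $J=\lceil (a-c)/\delta\rceil$ and define levels $\ell_j=\max(a-j\delta,\,c)$ for $j=0,\dots,J$. Each $\ell_j$ lies in $[c,a]\subseteq[c_{1,\delta}^{(N)},c_{2,\delta}^{(N)}]\subseteq C_\delta$ by Step 1, and $\ell_J=c$. Stage $j$ uses the $N$-step maneuver for $\ell_jI$. Suppose inductively that the covariance at the start of stage $j$ satisfies $\preceq \ell_jI$. Step 2 then gives a covariance $\preceq(\ell_j-\delta)I\preceq \ell_{j+1}I$ at the end of the stage. The end time of one stage is the start time of the next, so stages overlap in one index. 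This gives $J+1$ stages and a total of $N+J(N-1)=\mathbb{T}(a,c,\delta,N)$ time indices. The final stage starts from $\preceq cI$ and ends with $\Sigma_T\preceq(c-\delta)I$. All constraints hold at every step, since each stage's constraints hold by Step 2.
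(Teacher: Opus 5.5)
Your proposal is correct and follows essentially the same route as the paper. Both arguments apply the $N$-step gains built for the scalar levels $\lambda_{\max}(\Sigma_1)-j\delta$ (and finally $c$) to the actual covariance, use monotonicity of the covariance propagation under a fixed gain to carry over the control, non-position state, and terminal constraints, and concatenate $\lceil(\lambda_{\max}(\Sigma_1)-c)/\delta\rceil+1$ stages sharing endpoints to reach $\mathbb{T}(\lambda_{\max}(\Sigma_1),c,\delta,N)$ steps. Your Step 1 usefully supplies what the paper only asserts: that $C_\delta$ is an interval, and that its endpoints are attained, which needs your compactness argument because the projection of a closed set need not be closed. One correction there: the squared constraints are already linear in $Y_k$ and $\Sigma_k$, so you need neither the second-order-cone rewriting nor the claim that feasibility forces $b\ge\delta$, which the squared form does not guarantee.
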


A natural corollary to \cref{lemma: delta_shrinkability_induction} is that if $\delta > 0$ and $\Sigma_1 \preceq c_{2, \delta}^{(N)} I$, then $\Sigma_1$ can be reduced to $c_{1, \delta}^{(N)} I$ in finite time. 
\begin{corollary}\label{corollary: finite_time_shrinkability}
Suppose $\delta \in \mathbb{R}_{>0}, N \in \mathbb{Z}_{>0}$. For any $\Sigma_1 \preceq c_{2, \delta}^{(N)}I$, there exists a control policy satisfying all control and non-position state constraints that shrinks $\Sigma_1$ to $\Sigma_T \preceq (c_{1, \delta}^{(N)}-\delta) I$ over $T = \mathbb{T}(\lambda_{\max}(\Sigma_1), c_{1, \delta}^{(N)}, \delta, N)$ time steps.
\end{corollary}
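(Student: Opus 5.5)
The corollary should follow from \cref{lemma: delta_shrinkability_induction} by choosing $c = c_{1, \delta}^{(N)}$. That lemma applies to any $\Sigma_1 \preceq c_{2, \delta}^{(N)} I$ and any $c \in [c_{1, \delta}^{(N)}, \lambda_{\max}(\Sigma_1)]$. It supplies a constraint-satisfying policy over $\mathbb{T}(\lambda_{\max}(\Sigma_1), c, \delta, N)$ steps that ends with $\Sigma_T \preceq (c-\delta)I$. With $c = c_{1, \delta}^{(N)}$, the horizon becomes exactly $T = \mathbb{T}(\lambda_{\max}(\Sigma_1), c_{1, \delta}^{(N)}, \delta, N)$ and the terminal bound becomes $\Sigma_T \preceq (c_{1, \delta}^{(N)}-\delta) I$, which is what the corollary claims.

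First I check the admissibility condition $c_{1, \delta}^{(N)} \le \lambda_{\max}(\Sigma_1)$. If it holds, the argument above is complete. The remaining work is the degenerate case $\lambda_{\max}(\Sigma_1) < c_{1, \delta}^{(N)}$, where the lemma cannot be invoked directly.

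To handle that case I would use a monotonicity step. Set $\Sigma_1' := c_{1, \delta}^{(N)} I \succeq \Sigma_1$. Because $C_\delta$ is nonempty (otherwise $c_{2,\delta}^{(N)}$ is not meaningful), we have $c_{1, \delta}^{(N)} \le c_{2, \delta}^{(N)}$. So $\Sigma_1'$ satisfies the hypotheses of the lemma with $\lambda_{\max}(\Sigma_1') = c = c_{1, \delta}^{(N)}$. The lemma then gives gains $K_k$ steering $\Sigma_1'$ to $\Sigma_T' \preceq (c_{1, \delta}^{(N)}-\delta)I$ in $\mathbb{T}(c_{1,\delta}^{(N)}, c_{1,\delta}^{(N)}, \delta, N) = N$ steps.

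I would apply the same gains $K_k$ to $\Sigma_1$. The map $\Sigma \mapsto (A+BK_k)\Sigma(A+BK_k)^T + DD^T$ is monotone in the Loewner order, so induction gives $\Sigma_k \preceq \Sigma_k'$ for all $k$. The control constraints involve $a^T K_k \Sigma_k K_k^T a$, and the non-position state constraints involve $a^T S \Sigma_k S^T a$. Both are monotone in $\Sigma_k$, so they remain satisfied. The mean stays at the stationary $\mu_1$ with $\mathbf{v}_k=0$, as in the setup of \cref{subsec: shrinkability_without_position_constraints}.

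The horizon $N$ is at most $T$ because $\lceil\cdot\rceil \ge 0$ here, or more precisely $\mathbb{T}\ge N$. If exactly $T$ steps are required, I would pad with additional steps that keep the covariance below $(c_{1,\delta}^{(N)}-\delta)I$. For this I would reuse the lemma's policy repeatedly: restart from any covariance $\preceq c_{1,\delta}^{(N)}I$ and apply monotonicity again.

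The main obstacle is this degenerate-case bookkeeping: ensuring that a policy exists over exactly $T$ steps rather than "at most $T$". I would resolve it with the monotone comparison argument above, together with the fact that, to whatever extent the lemma's horizon count is an upper bound, the lemma's own proof already chains $N$-step blocks of length $N-1$ overlaps.
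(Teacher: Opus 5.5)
Your first paragraph is exactly the paper's proof. The paper just notes that the corollary is the special case $c = c_{1,\delta}^{(N)}$ of \cref{lemma: delta_shrinkability_induction}, and that substitution is all that is needed.

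The degenerate-case detour is unnecessary. The paper defines $\mathbb{T}(a,b,\delta,N)$ only for $c_{1,\delta}^{(N)} \le b \le a \le c_{2,\delta}^{(N)}$, so the corollary implicitly assumes $\lambda_{\max}(\Sigma_1) \ge c_{1,\delta}^{(N)}$.

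As written, the detour also contains a false claim. When $\lambda_{\max}(\Sigma_1) < c_{1,\delta}^{(N)}$, the ceiling $\lceil(\lambda_{\max}(\Sigma_1)-c_{1,\delta}^{(N)})/\delta\rceil$ is $\le 0$, not $\ge 0$. Hence $\mathbb{T} \ge N$ holds only when $\lambda_{\max}(\Sigma_1) > c_{1,\delta}^{(N)}-\delta$:
\begin{itemize}
\item For $\lambda_{\max}(\Sigma_1) \in (c_{1,\delta}^{(N)}-\delta,\, c_{1,\delta}^{(N)})$, $T=N$, and your monotone comparison with $c_{1,\delta}^{(N)}I$ works with no padding.
\item For $\lambda_{\max}(\Sigma_1) \in (c_{1,\delta}^{(N)}-2\delta,\, c_{1,\delta}^{(N)}-\delta]$, the formula gives $T=1$ and the claim is trivial.
\item For $\lambda_{\max}(\Sigma_1) \le c_{1,\delta}^{(N)}-2\delta$ and $N \ge 2$, it gives $T \le 0$. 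This is why that range has to be excluded rather than proved.
\end{itemize}

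Your padding mechanism is never needed, and it would not work as stated. The lemma's policies guarantee the terminal bound only at block endpoints, so padding could only add steps in multiples of $N-1$.
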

An important consequence of \cref{lemma: delta_shrinkability} is that for any $N \in \mathbb{Z}_{>0}$ and $\delta \geq 0$, $c_{1, \delta}^{(N)} I$ is 0-strictly shrinkable in $N$ steps:
\begin{corollary}\label{corollary: maintainability}
For any $\delta \geq 0$, $N \in \mathbb{Z}_{>0}$, there exists a control policy satisfying all control and non-position state constraints that shrinks $c_{1, \delta}^{(N)} I$ to $\Sigma_N \preceq c_{1, \delta}^{(N)} I$ over $N$ steps.
\end{corollary}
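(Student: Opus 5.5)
Since $c_{1,\delta}^{(N)}$ is an infimum, there is a minimizing sequence $c_n \downarrow c_{1,\delta}^{(N)}$ with each $c_nI$ $\delta$-strictly shrinkable in $N$ steps, because $\delta$-strict shrinkability can be decided for $cI$ by \cref{prob: covariance_shrinking_delta}. The plan is to show that for each such $c_n$ the initial covariance $c_{1,\delta}^{(N)}I \preceq c_nI$ is also feasible with a slight loss, and then to pass to the limit. The limit step uses compactness of the feasible set of \cref{prob: covariance_shrinking_delta} with $\delta$ replaced by $0$.

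\textbf{Step 1 (monotone comparison).} Fix $n$ and a feasible solution $(\Sigma_k^{(n)},U_k^{(n)},Y_k^{(n)})$ with $\Sigma_1^{(n)}=c_nI$. Work with gains $K_k=U_k\Sigma_k^{-1}$ and the closed-loop recursion \cref{eq: covariance_dynamics}. Keep the same gains but start from $\Sigma_1'=c_{1,\delta}^{(N)}I\preceq c_nI$. By monotonicity of $\Sigma\mapsto(A+BK)\Sigma(A+BK)^T+DD^T$, induction gives $\Sigma_k'\preceq\Sigma_k^{(n)}$ for all $k$. Hence $Y_k'=K_k\Sigma_k'K_k^T\preceq Y_k^{(n)}$, and $S\Sigma_k'S^T\preceq S\Sigma_k^{(n)}S^T$. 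The tightened constraints \cref{eq: control_constraint_fixed_v_delta}, \cref{eq: non_state_constraint_fixed_delta} therefore hold with margin $\delta\geq 0$, and so with margin $0$. The terminal covariance satisfies $\lambda_{\max}(\Sigma_N')\le\lambda_{\max}(\Sigma_N^{(n)})\le c_n-\delta\le c_n$.

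\textbf{Step 2 (limit).} Step 1 gives, for every $n$, a policy from $c_{1,\delta}^{(N)}I$ that satisfies all control and non-position constraints and reaches $\Sigma_N\preceq c_nI$. We want the endpoint to be $c_{1,\delta}^{(N)}I$ itself.

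If $\delta>0$, this is immediate: choose $n$ with $c_n-\delta\le c_{1,\delta}^{(N)}$, which is possible since $c_n\to c_{1,\delta}^{(N)}$.

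If $\delta=0$, use compactness. The closed-loop dynamics are affine in $\Sigma_1$ for fixed gains, so the terminal covariance from $c_{1,0}^{(N)}I$ under the $n$-th gains differs from $\Sigma_N^{(n)}$ by at most a term of order $(c_n-c_{1,0}^{(N)})$. That alone does not bound $\Sigma_N'$ by $c_{1,0}^{(N)}I$, so instead take the limit of the optimization variables. The constraints bound $\Sigma_k'$ via $\Sigma_k'\preceq\Sigma_k^{(1)}$. The $Y_k'$ are bounded in the directions $a_j^u$ by the control constraint, and if $\mathcal{U}$ is bounded the $a_j^u$ span $\mathbb{R}^m$, so $Y_k'$ is bounded. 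The Schur complement \cref{eq: schur_complement_constraint} then bounds $U_k'$. Extract a convergent subsequence. All constraints in \cref{eq: convex_covariance_dynamics}, \cref{eq: schur_complement_constraint} and the quadratic chance constraints are closed, so the limit is feasible from $c_{1,0}^{(N)}I$ with $\lambda_{\max}(\Sigma_N)\le\lim c_n=c_{1,0}^{(N)}$. Recover the gains as $K_k=U_k\Sigma_k^{-1}$; this is valid because $\Sigma_k\succeq DD^T$ for $k\ge2$ and $\Sigma_1\succ0$ when $c_{1,0}^{(N)}>0$.

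\textbf{Main obstacle.} The delicate point is the boundedness and closedness argument when $\delta=0$. It needs $\mathcal{U}$ to be bounded so that $Y_k$ is bounded, and it needs invertibility of $\Sigma_k$ so that the limit lies in the lifted variables and not in a degenerate case. If $c_{1,0}^{(N)}=0$, I would handle $\Sigma_1=0$ separately by taking $K_k$ as the limit directly; bounded $U_k$ with $\Sigma_1=0$ forces $U_1=0$. A second potential gap is the case where $C_\delta$ is empty, in which the infimum is $+\infty$ and the statement is vacuous or must be excluded. I would state this assumption explicitly.
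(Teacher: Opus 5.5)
Your argument is correct in substance but takes a different route from the paper. The paper's proof is one line. It takes a feasible solution of \cref{prob: covariance_shrinking_delta} with $\Sigma_1 = c_{1,\delta}^{(N)}I$ and reads off $\Sigma_N \preceq (c_{1,\delta}^{(N)}-\delta)I$. The existence of that solution rests on the assertion, in the proof of \cref{lemma: delta_shrinkability_induction}, that $\mathcal{C}_\delta$ is closed ``because the problem is bounded by equalities and non-strict inequalities,'' so the infimum is a minimum. That justification is incomplete, since the projection of a closed set onto the $c$-coordinate need not be closed.

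Your $\delta>0$ case avoids attainment entirely. The monotone comparison under frozen gains, together with the slack $\delta$, absorbs the gap $c_n - c_{1,\delta}^{(N)}$. This yields only $\Sigma_N \preceq c_{1,\delta}^{(N)}I$ rather than $(c_{1,\delta}^{(N)}-\delta)I$, but that is all the corollary claims, and it needs no boundedness of $\mathcal{U}$. For $\delta=0$ the statement is exactly attainment of the infimum, so a compactness argument is unavoidable. Yours, with bounded $\mathcal{U}$ giving bounded $Y_k$, supplies what the paper's closedness claim leaves out. You don't need Step 1 there: the same limit applied directly to the original solutions from $c_nI$ gives a feasible point at $\Sigma_1 = c_{1,\delta}^{(N)}I$ with $\lambda_{\max}(\Sigma_N)\le c_{1,\delta}^{(N)}-\delta$ for every $\delta\ge 0$. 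That proves the paper's closedness claim and recovers its stronger endpoint.

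Two steps need repair.

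\textbf{Boundedness of $\Sigma_k'$.} The bound $\Sigma_k' \preceq \Sigma_k^{(1)}$ is unjustified. The $n$-th gains only give $\Sigma_k' \preceq \Sigma_k^{(n)}$, and solutions for different $n$ are unrelated. Bound inductively instead: $\Sigma_{k+1} - DD^T = [A\;\,B]\,M_k\,[A\;\,B]^T$, where $M_k\succeq 0$ is the block matrix in \cref{eq: schur_complement_constraint}. Since $\lambda_{\max}(M_k)\le \text{tr}(\Sigma_k)+\text{tr}(Y_k)$, boundedness of $\Sigma_1$ and of every $Y_k$ propagates to all $\Sigma_k$, and then to $U_k$.

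\textbf{Recovering the gains.} $\Sigma_k\succeq DD^T$ gives invertibility only if $DD^T\succ 0$, which is not assumed. Use $K_k = U_k\Sigma_k^{\dagger}$ instead. The Schur constraint forces $\text{range}(U_k^T)\subseteq\text{range}(\Sigma_k)$, hence $K_k\Sigma_k = U_k$ and $K_k\Sigma_kK_k^T = U_k\Sigma_k^{\dagger}U_k^T \preceq Y_k$. Your monotone comparison then shows the true closed-loop covariance is dominated by the lifted one. This also covers the inversion of $\Sigma_k^{(n)}$ in Step 1 and makes the separate $c_{1,0}^{(N)}=0$ case unnecessary.
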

\subsection{Reachability With Position Constraints}
\label{subsec: shrinkability_with_position_constraints}
In the presence of position constraints, the covariance steering problem depends on the state $\mu_k$ and the geometry of the safe set even if $\mu_1$ is stationary and $\mu_k = \mu_1$ for all $k$. We add position constraints to \cref{prob: covariance_shrinking_delta} and define a Gaussian distribution $\mathcal{N}(\mu_1, \Sigma_1)$ with stationary $\mu_1$ to have ${\delta}${-strictly shrinkable covariance in a safe polytope ${\mathcal{S}}$}, with $\mathcal{S} \subset \mathbb{R}^p$, over $N$ steps if \cref{prob: covariance_shrinking_delta_S} is feasible, with $\delta \in \mathbb{R}_{\geq 0}$.
\begin{problem}[Strictly Shrink Covariance by $\delta$ in Polytope]\label{prob: covariance_shrinking_delta_S}
\begin{equation}
\min_{\Sigma_k, U_k, Y_k} 0
\end{equation}
\text{such that for $i\in[M_{x_s}]$, $j\in[M_u]$, $k\in[N]$:}
\begin{align}
&\Phi^{-1}(1-\epsilon)^2({a_i^{s}}^TP\Sigma_kP^T{a_i^{s}}) \leq ({b_i}^{s} - {a_i^s}^TP\mu_1)^2, \label{eq: position_constraint_fixed_delta} \\
& (\ref{eq: control_constraint_fixed_v_delta}), \quad (\ref{eq: non_state_constraint_fixed_delta}), \quad (\ref{eq: convex_covariance_dynamics}), \quad (\ref{eq: schur_complement_constraint}), \quad \lambda_{\max}(\Sigma_N) \leq \lambda_{\min}(\Sigma_1)-\delta. \notag
\end{align}
\end{problem}
We can calculate $c_{1, \delta}^{(N)}(\mu_1, \mathcal{S})$ and $c_{2, \delta}^{(N)}(\mu_1, \mathcal{S})$ such that $c_{1, \delta}^{(N)}(\mu_1, \mathcal{S})I \preceq \Sigma_1 \preceq c_{2, \delta}^{(N)}(\mu_1, \mathcal{S})I$ implies that $\Sigma_1$ is $\delta$-strictly shrinkable in $\mathcal{S}$ over a finite number of time steps. However, the bounds $c_{1, \delta}^{(N)}(\mu_1, \mathcal{S})$ and $c_{2, \delta}^{(N)}(\mu_1, \mathcal{S})$ depend on $\mu_1$ and $\mathcal{S}$ and do not easily generalize to the whole set of stationary means in obstacle-free space. Therefore, we seek a more efficient and generalizable certificate of $\delta$-strict shrinkability in $\mathcal{S}$ while accounting for position constraints and obstacle geometry. Consider the following:
\begin{problem}[Strictly Shrink Covariance by $\delta$ in Ball]\label{prob: covariance_shrinking_r}
\begin{equation}
\min_{\Sigma_k, U_k, Y_k} r
\end{equation}
\text{such that for $j\in[M_u]$, $k\in[N]$:}
\begin{align}
& P\Sigma_kP^T \preceq rI, \quad (\ref{eq: control_constraint_fixed_v_delta}), \quad (\ref{eq: non_state_constraint_fixed_delta}), \notag \\
& \quad (\ref{eq: convex_covariance_dynamics}), \quad (\ref{eq: schur_complement_constraint}), \quad \lambda_{\max}(\Sigma_N) \leq \lambda_{\min}(\Sigma_1)-\delta. \notag
\end{align}
\end{problem}
This problem is convex and is feasible if $\Sigma_1$ is $\delta$-strictly shrinkable in $N$ steps (i.e. if \cref{prob: covariance_shrinking_delta} is feasible). Suppose $r^\star$ is the optimal solution to \cref{prob: covariance_shrinking_r} for a given $\Sigma_1$. Then, for a given $\mu_1, \mathcal{S}$, $\mathcal{N}(\mu_1, \Sigma_1)$ is $\delta$-strictly shrinkable in $\mathcal{S}$ if $\mu_1$ is stationary and $\mathbb{B}_p(P\mu_1, \Phi^{-1}(1-\epsilon)\sqrt{r^\star}) \subset \mathcal{S}$. We state this formally below:
\begin{lemma}
\label{lemma: r_delta_strict_shrinkability}
Suppose $\mu_1$ is stationary and $\Sigma_1$ is $\delta$-strictly shrinkable in $N$ steps (i.e. \cref{prob: covariance_shrinking_delta} is feasible). Then, there exists $r^\star$ that is the optimal solution to \cref{prob: covariance_shrinking_r} for $\Sigma_1$. Also, if $\mathbb{B}_p(P\mu_1, \Phi^{-1}(1-\epsilon)\sqrt{r^\star}) \subset \mathcal{S}$, $\mathcal{N}(\mu_1, \Sigma_1)$ is $\delta$-strictly shrinkable in $\mathcal{S}$.
\end{lemma}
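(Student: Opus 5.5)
The proof has two parts: existence of an optimizer $r^\star$ for \cref{prob: covariance_shrinking_r}, then feasibility of \cref{prob: covariance_shrinking_delta_S} from the ball containment.

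\textbf{Existence of $r^\star$.} Take any feasible point $(\Sigma_k, U_k, Y_k)_{k\in[N]}$ of \cref{prob: covariance_shrinking_delta}, which exists by assumption. Set $r_0 := \max_k \lambda_{\max}(P\Sigma_k P^T)$. Then this point together with $r_0$ is feasible for \cref{prob: covariance_shrinking_r}, since $P\Sigma_kP^T \preceq r_0 I$ and the remaining constraints coincide with those of \cref{prob: covariance_shrinking_delta}. So the feasible set is nonempty. The objective satisfies $r \geq 0$, because $P\Sigma_kP^T \succeq 0$, so the infimum $r^\star$ is finite.

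To show the infimum is attained, restrict to the sublevel set $r \leq r_0$ and argue it is compact.
\begin{itemize}
\item All constraints are closed: they are LMIs, convex quadratic inequalities, and the condition $\lambda_{\max}(\Sigma_N) \leq \lambda_{\min}(\Sigma_1) - \delta$, which is closed because eigenvalue maps are continuous.
\item $\Sigma_1$ is fixed, so \cref{eq: schur_complement_constraint} gives $U_1$ bounded in terms of $Y_1$.
\item Each $Y_k \succeq 0$ by the Schur constraint. The bound \cref{eq: control_constraint_fixed_v_delta} only controls $Y_k$ along the directions $a_j^u$. If the $a_j^u$ span $\mathbb{R}^m$ (e.g. $\mathcal{U}$ bounded), this bounds $Y_k$; otherwise bound it through the covariance dynamics.
\item \cref{eq: convex_covariance_dynamics} then bounds $\Sigma_2$, and $U_k$ is bounded by $\Sigma_k$ and $Y_k$ through the Schur complement, since $\|U_k\|^2 \leq \|\Sigma_k\|\|Y_k\|$.
\item Inducting over $k$ bounds all variables.
\end{itemize}
The main obstacle is this compactness argument. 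If the $a_j^u$ do not span, I would instead note that only $\Sigma_k$ and $r$ matter. I would then project onto the $(\Sigma_{1:N}, r)$ variables and use closedness of that projection, or simply work in $r$ alone: the feasible $r$-set is a closed half-line $[r^\star,\infty)$ by convexity, with closedness argued via the limit of a minimizing sequence with bounded covariances.

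\textbf{Shrinkability in $\mathcal{S}$.} Take an optimal solution $(\Sigma_k,U_k,Y_k,r^\star)$ of \cref{prob: covariance_shrinking_r}. Constraints \cref{eq: control_constraint_fixed_v_delta}, \cref{eq: non_state_constraint_fixed_delta}, \cref{eq: convex_covariance_dynamics}, \cref{eq: schur_complement_constraint} and the shrink condition are shared with \cref{prob: covariance_shrinking_delta_S}, so it remains only to verify \cref{eq: position_constraint_fixed_delta}. Because $\mu_1$ is stationary and $\mathbf{v}_k=0$, $\mu_k=\mu_1$ throughout.

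For each facet $i$ with $\|a_i^s\|_2=1$:
\begin{itemize}
\item Ball containment gives $P\mu_1 + \Phi^{-1}(1-\epsilon)\sqrt{r^\star}\,a_i^s \in \mathcal{S}$. Hence ${a_i^s}^TP\mu_1 + \Phi^{-1}(1-\epsilon)\sqrt{r^\star} \leq b_i^s$, so $b_i^s - {a_i^s}^TP\mu_1 \geq \Phi^{-1}(1-\epsilon)\sqrt{r^\star} \geq 0$, assuming $\epsilon<1/2$ so that $\Phi^{-1}(1-\epsilon)\ge 0$.
\item Also ${a_i^s}^TP\Sigma_kP^Ta_i^s \leq r^\star$.
\end{itemize}
Squaring the first inequality, which is valid since both sides are nonnegative, yields $\Phi^{-1}(1-\epsilon)^2\,{a_i^s}^TP\Sigma_kP^Ta_i^s \leq \Phi^{-1}(1-\epsilon)^2 r^\star \leq (b_i^s - {a_i^s}^TP\mu_1)^2$. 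This is \cref{eq: position_constraint_fixed_delta}, so \cref{prob: covariance_shrinking_delta_S} is feasible and $\mathcal{N}(\mu_1,\Sigma_1)$ is $\delta$-strictly shrinkable in $\mathcal{S}$.
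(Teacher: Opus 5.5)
Your proposal is correct and follows the paper's proof: a feasible point of \cref{prob: covariance_shrinking_delta} makes \cref{prob: covariance_shrinking_r} feasible and a minimizer exists; testing the ball containment at $P\mu_1+\Phi^{-1}(1-\epsilon)\sqrt{r^\star}\,a_i^s$, together with ${a_i^s}^TP\Sigma_kP^Ta_i^s\le r^\star$ for the optimizer's $\Sigma_k$, then gives \cref{eq: position_constraint_fixed_delta} after squaring. Your compactness argument, which is valid when the $a_j^u$ span $\mathbb{R}^m$ (for instance when $\mathcal{U}$ is bounded), supplies the justification that the paper's bare appeal to Weierstrass leaves out, but your fallback for non-spanning $a_j^u$ only asserts closedness of a projection and does not prove it.
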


Because $cI$ is $\delta$-strictly shrinkable in $N$ steps for any $c \in [c_{1, \delta}^{(N)}, c_{2, \delta}^{(N)}]$, $r^*$ is well-defined when $\Sigma_1 = cI$ for $c \in [c_{1, \delta}^{(N)}, c_{2, \delta}^{(N)}]$. The function $r^\star(c)$ can be used to derive bounds on the set of $\delta$-strictly shrinkable covariances for a given safe set and initial stationary mean.
\begin{theorem}
\label{thm: r_c_shrinkability}
Suppose $\mathcal{S} \subset \mathbb{R}^p$ is a safe polytope, $\mu_1$ is stationary, $\mathbb{B}_p(P\mu_1, d_1) \subset \mathcal{S}$ and that $\mathcal{S} \subset \text{Int}(\mathbb{B}_p(P\mu_1, d_2))$. Further suppose that $\Phi^{-1}(1-\epsilon)\sqrt{r^\star(c_{d_1})} = d_1$ and $\Phi^{-1}(1-\epsilon)\sqrt{r^\star(c_{d_2})} = d_2$ for some $c_{d_1}, c_{d_2} \in [c_{1, \delta}^{(N)}, c_{2, \delta}^{(N)}]$. Then:
\begin{itemize}
\item{If $\Sigma_1 \prec c_{1, \delta}^{(N)}I$, $\mathcal{N}(\mu_1, \Sigma_1)$ is not $\delta$-strictly shrinkable in $\mathcal{S}$ over $N$ steps.}
\item{If $\delta > 0$ and $c_{1, \delta}^{(N)}I \preceq \Sigma_1 \preceq c_{d_1}I$, $\mathcal{N}(\mu_1, \Sigma_1)$ is $\delta$-strictly shrinkable in $\mathcal{S}$ over $T$ steps, possibly with $T > N$.}
\item{If $\Sigma_1 \succ c_{d_2}I$, $\mathcal{N}(\mu_1, \Sigma_1)$ is not $\delta$-strictly shrinkable in $\mathcal{S}$ over $N$ steps.}
\end{itemize}
\end{theorem}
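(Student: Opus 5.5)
I would prove the three items separately, reducing each to the polytope-free results already established (\cref{lemma: delta_shrinkability}, \cref{lemma: delta_shrinkability_induction}, \cref{lemma: r_delta_strict_shrinkability}). Two comparisons drive everything. First, \cref{prob: covariance_shrinking_delta_S} is \cref{prob: covariance_shrinking_delta} plus the position constraints \eqref{eq: position_constraint_fixed_delta}, so it is more restrictive. Second, $\mathcal{S}$ is sandwiched between the balls $\mathbb{B}_p(P\mu_1,d_1)$ and $\text{Int}(\mathbb{B}_p(P\mu_1,d_2))$.

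\textbf{First item.} Any feasible point of \cref{prob: covariance_shrinking_delta_S} is feasible for \cref{prob: covariance_shrinking_delta}. Hence $\delta$-strict shrinkability in $\mathcal{S}$ over $N$ steps implies $\delta$-strict shrinkability over $N$ steps. The contrapositive of the first bullet of \cref{lemma: delta_shrinkability} then gives the claim directly.

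\textbf{Second item.} I would reuse the concatenation argument behind \cref{lemma: delta_shrinkability_induction}. It builds the trajectory from $\mathbb{T}$ successive $N$-step blocks, each starting from a covariance $\preceq cI$ with $c\in[c_{1,\delta}^{(N)},\lambda_{\max}(\Sigma_1)]\subseteq[c_{1,\delta}^{(N)},c_{d_1}]$. On each block I would use the optimal solution of \cref{prob: covariance_shrinking_r} for $cI$. This gives $P\Sigma_kP^T\preceq r^\star(c)I$ along the block.

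The key claim, and the step I expect to be the main obstacle, is that $r^\star(c)\le r^\star(c_{d_1})$ for $c\le c_{d_1}$, together with the fact that starting from $\Sigma\preceq cI$ instead of exactly $cI$ still keeps $P\Sigma_kP^T\preceq r^\star(c)I$. I would try to get monotonicity from linearity of \eqref{eq: convex_covariance_dynamics} in $(\Sigma_k,U_k,Y_k)$ plus the additive noise term. Given a feasible solution from $c_{d_1}I$ with gains $K_k=U_k\Sigma_k^{-1}$, apply the same gains from the smaller initial covariance. The closed-loop propagation $\Sigma_{k+1}=(A+BK_k)\Sigma_k(A+BK_k)^T+DD^T$ is monotone in the initial condition. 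So every $\Sigma_k$, every $K_k\Sigma_kK_k^T$, and every $P\Sigma_kP^T$ is dominated by the corresponding term from $c_{d_1}I$. All constraints therefore remain satisfied, and the bound $r^\star(c_{d_1})$ carries over. The terminal condition $\lambda_{\max}(\Sigma_N)\le\lambda_{\min}(\Sigma_1)-\delta$ needs separate care, since shrinking $\Sigma_1$ lowers the right-hand side. Here I would instead rely on the block structure of \cref{lemma: delta_shrinkability_induction}, which already handles this by reducing the level $c$ by $\delta$ each block.

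With that bound in hand, $\Phi^{-1}(1-\epsilon)\sqrt{P\text{-variance}}\le d_1$ along every block. Because $\mathbb{B}_p(P\mu_1,d_1)\subset\mathcal{S}$ and each face of $\mathcal{S}$ lies at distance at least $d_1$ from $P\mu_1$ (since $\|a_i^s\|_2=1$), \eqref{eq: position_constraint_fixed_delta} holds; this is the same argument as in \cref{lemma: r_delta_strict_shrinkability}. The mean stays at the stationary $\mu_1$ with $\mathbf{v}_k=0$, so stacking the blocks gives feasibility over $T=\mathbb{T}(\cdot)$ steps, possibly larger than $N$.

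\textbf{Third item.} Suppose, for contradiction, that $\Sigma_1\succ c_{d_2}I$ is $\delta$-strictly shrinkable in $\mathcal{S}$ over $N$ steps. Since $\mathcal{S}\subset\text{Int}(\mathbb{B}_p(P\mu_1,d_2))$, every face of $\mathcal{S}$ is at distance less than $d_2$ from $P\mu_1$. The position constraints then force the variance along each face normal to be below $d_2^2/\Phi^{-1}(1-\epsilon)^2=r^\star(c_{d_2})$ at every step, in particular at $k=1$.

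I would look for a normal direction $a$ with $a^TP\Sigma_1P^Ta\ge\lambda_{\min}(\Sigma_1)>c_{d_2}$. For this I would argue that $r^\star(c)\ge c$, since the constraint $P\Sigma_1P^T\preceq rI$ at $k=1$ gives $r^\star(c)\ge c$. This yields $a^TP\Sigma_1P^Ta > c_{d_2}$, which I would then compare against $r^\star(c_{d_2})$. If the comparison is not tight, I would fall back on monotonicity of $r^\star$ and reuse the dominance argument from the second item in reverse: any feasible trajectory from $\Sigma_1\succ c_{d_2}I$ yields an $r$-value at least $r^\star(c_{d_2})$ in some direction, contradicting the strict containment of $\mathcal{S}$ in the open ball.
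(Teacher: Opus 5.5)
You follow the paper's route. Your first item is exactly the paper's argument, and for items two and three you compare $r^\star$ across levels and use the ball--polytope sandwich, as the paper does. The genuine gap is the step you flag yourself. Both items need $r^\star$ to be non-decreasing on $[c_{1,\delta}^{(N)},c_{2,\delta}^{(N)}]$:
item two needs $r^\star(c)\le r^\star(c_{d_1})$ at every block level $c\le\lambda_{\max}(\Sigma_1)$, and item three needs $r^\star(\lambda_{\min}(\Sigma_1))\ge r^\star(c_{d_2})$.

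The same-gains dominance argument cannot deliver this. For fixed gains, $\Sigma_N=\Psi\Sigma_1\Psi^T+Q$ with $\Psi=(A+BK_{N-1})\cdots(A+BK_1)$ and $Q\succeq 0$ the accumulated noise. Feasibility from $c'I$ with $\delta>0$ forces $\Psi\Psi^T\prec I$. So lowering the start from $c'I$ to $cI$ lowers $\Sigma_N$ only by $(c'-c)\Psi\Psi^T\prec(c'-c)I$. In any direction where $\Sigma_N\preceq(c'-\delta)I$ is active, the new terminal eigenvalue therefore exceeds $c-\delta$. You never obtain a feasible point of \cref{prob: covariance_shrinking_r} at the lower level, hence no bound on $r^\star(c)$.

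The block structure does not rescue this. A block at level $c$ must use a policy that actually reaches $(c-\delta)I$, i.e.\ the level-$c$ solution, whose position bound is $r^\star(c)$ --- the very quantity in question. The $c_{d_1}$-policy does respect $r^\star(c_{d_1})$, but it only guarantees reaching $(c_{d_1}-\delta)I$. Your remark that a start $\preceq cI$ keeps $P\Sigma_kP^T\preceq r^\star(c)I$ is correct, but it is not the missing piece.

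In item three your first comparison points the wrong way. Having $a^TP\Sigma_1P^Ta>c_{d_2}$ is compatible with $a^TP\Sigma_1P^Ta<r^\star(c_{d_2})$, because $r^\star(c_{d_2})\ge c_{d_2}$. Your ``dominance in reverse'' fallback hits the identical obstruction: the $\Sigma_1$-gains applied to $c_{d_2}I$ give terminal covariance $\preceq(\lambda_{\min}(\Sigma_1)-\delta)I$, not $\preceq(c_{d_2}-\delta)I$.

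The paper's proof does not close this either. It applies \cref{lemma: r_delta_strict_shrinkability} to $\lambda_{\max}(\Sigma_1)I$, which tacitly assumes $r^\star(\lambda_{\max}(\Sigma_1))\le r^\star(c_{d_1})$. It then invokes \cref{corollary: constrained_path_existence}, whose ``same argument'' proof needs every block level to be shrinkable in $\mathcal{S}$, and it simply asserts $r^\star(\Sigma_1)>r^\star(c_{d_2})$.

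What you can prove is convexity. With $\Sigma_1=cI$ the terminal condition is the LMI $\Sigma_N\preceq(c-\delta)I$, so the constraints of \cref{prob: covariance_shrinking_r} are jointly convex in $(c,\Sigma_k,U_k,Y_k,r)$, and $r^\star$ is convex on $[c_{1,\delta}^{(N)},c_{2,\delta}^{(N)}]$. Hence $\sup_{c\in[c_{1,\delta}^{(N)},c_{d_1}]}r^\star(c)=\max\{r^\star(c_{1,\delta}^{(N)}),r^\star(c_{d_1})\}$, and $r^\star$ is non-decreasing if and only if its minimum is attained at $c_{1,\delta}^{(N)}$.

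This suggests two repairs:
\begin{itemize}
\item For item two, add the hypothesis $\Phi^{-1}(1-\epsilon)\sqrt{r^\star(c_{1,\delta}^{(N)})}\le d_1$; your block route then goes through.
\item For item three, assume $r^\star(c_{1,\delta}^{(N)})\le r^\star(c)$ for all $c$.
\end{itemize}
In item three the directional step can then be done by ellipsoid containment. The position constraints at step $k$ say exactly that the ellipsoid $P\mu_1+\Phi^{-1}(1-\epsilon)(P\Sigma_kP^T)^{1/2}\mathbb{B}_p(0,1)$ lies in $\mathcal{S}\subset\text{Int}(\mathbb{B}_p(P\mu_1,d_2))$. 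This gives $\lambda_{\max}(P\Sigma_kP^T)<r^\star(c_{d_2})$ for every $k$. It also replaces the paper's eigenvector step, which fails when ${a_i^s}^T\mathbf{v}_j$ is small.

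Without an extra ingredient of this kind, neither your argument nor the paper's establishes items two and three.
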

Finally, if $(c_{1, \delta}^{(N)}, c_d)$ is nonempty, $\delta > 0$, and $\Sigma_1 \preceq c_d I$ where $c_{1, \delta}^{(N)}I \preceq \Sigma_1 \preceq c_dI$ implies $\delta$-strict shrinkability over $T$ steps in $\mathcal{S}$, then $\Sigma_1$ can be reduced to $c_{1, \delta}^{(N)}I$ in a finite number of time steps while remaining in $\mathcal{S}$.
\begin{corollary}
\label{corollary: constrained_path_existence}
Suppose $\delta \in \mathbb{R}_{> 0}$, $N \in \mathbb{Z}_{>0}$, $\mathbb{B}_p(P\mu_1, d_1) \subset \mathcal{S}$ where $\mathcal{S} \subset \mathbb{R}^p$, and $\Phi^{-1}(1-\epsilon)\sqrt{r^\star(c_{d_1})} = d_1$ for some $c_{d_1} \in [c_{1, \delta}^{(N)}, c_{2, \delta}^{(N)}]$. Then, there exists a control policy satisfying all control, non-position state constraints, and position constraints (from $\mathcal{S}$) that shrinks $\Sigma_1$ to $\Sigma_T \preceq (c_{1, \delta}^{(N)}-\delta) I$ over $T = \mathbb{T}(\lambda_{\max}(\Sigma_1), c_{1, \delta}^{(N)}, \delta, N)$ time steps.
\end{corollary}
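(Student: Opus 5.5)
The plan is to reproduce the inductive construction behind \cref{lemma: delta_shrinkability_induction} and \cref{corollary: finite_time_shrinkability}, but to carry the position constraints from $\mathcal{S}$ through every intermediate step using \cref{lemma: r_delta_strict_shrinkability}. Throughout, the feedforward is $\mathbf{v}_k = 0$. Since $\mu_1$ is stationary, this gives $\mu_k = \mu_1$ and $S\mu_k = 0$ for all $k$, so only the covariance constraints remain.

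\textbf{Step 1 (monotonicity of $r^\star$).} I first want $c \mapsto r^\star(c)$ to be nondecreasing on $[c_{1,\delta}^{(N)}, c_{2,\delta}^{(N)}]$. A cleaner way to phrase this is for general initial covariances: if $\Sigma_1' \preceq \Sigma_1$, then $r^\star(\Sigma_1') \le r^\star(\Sigma_1)$.

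\begin{itemize}
\item \emph{Argument.} Take a feasible policy $(K_k)$ for $\Sigma_1$ in \cref{prob: covariance_shrinking_r}, and apply the same gains starting from $\Sigma_1'$.
\item The covariance dynamics \cref{eq: covariance_dynamics} are monotone in $\Sigma$: with $\Delta_k = \Sigma_k - \Sigma_k' \succeq 0$, we have $\Delta_{k+1} = (A+BK_k)\Delta_k(A+BK_k)^T \succeq 0$.
\item Hence $\Sigma_k' \preceq \Sigma_k$ for all $k$, and $Y_k' = K_k\Sigma_k'K_k^T \preceq Y_k$. The constraints \cref{eq: control_constraint_fixed_v_delta}, \cref{eq: non_state_constraint_fixed_delta} and $P\Sigma_k'P^T \preceq rI$ therefore continue to hold.
\item \emph{Main obstacle.} The terminal condition $\lambda_{\max}(\Sigma_N') \le \lambda_{\min}(\Sigma_1') - \delta$ can fail, because $\lambda_{\min}(\Sigma_1')$ may be smaller than $\lambda_{\min}(\Sigma_1)$.
\end{itemize}

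To handle this, I will not require strict shrinkage relative to $\Sigma_1'$ itself. Instead I use the weaker statement that is actually needed: the trajectory started from any $\Sigma_1' \preceq cI$ under the policy certified for $cI$ stays inside $\mathcal{S}$ and ends at $\Sigma_N' \preceq (c-\delta)I$. This follows directly from monotonicity, since $\Sigma_N' \preceq \Sigma_N \preceq (c-\delta)I$.

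\textbf{Step 2 (iterate down from $c_0 = \lambda_{\max}(\Sigma_1)$).} Set $c_0 = \lambda_{\max}(\Sigma_1)$ and note $\Sigma_1 \preceq c_0 I$.

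\begin{itemize}
\item By hypothesis $\Sigma_1 \preceq c_{d_1}I$, as in the setting of \cref{thm: r_c_shrinkability}, so $c_0 \le c_{d_1}$. Monotonicity of $r^\star$ then gives $\Phi^{-1}(1-\epsilon)\sqrt{r^\star(c_0)} \le d_1$.
\item Consequently $\mathbb{B}_p(P\mu_1, \Phi^{-1}(1-\epsilon)\sqrt{r^\star(c_0)}) \subseteq \mathbb{B}_p(P\mu_1,d_1) \subset \mathcal{S}$.
\item The optimal policy for $c_0 I$ in \cref{prob: covariance_shrinking_r} has $P\Sigma_kP^T \preceq r^\star I$. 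For every unit normal $a_i^s$ of $\mathcal{S}$, the ball containment gives $\Phi^{-1}(1-\epsilon)\sqrt{{a_i^s}^TP\Sigma_kP^Ta_i^s} \le b_i^s - {a_i^s}^TP\mu_1$, which is exactly \cref{eq: position_constraint_fixed_delta}. This is the content of \cref{lemma: r_delta_strict_shrinkability}.
\item Applying this policy to $\Sigma_1$ over $N$ steps yields $\Sigma_N \preceq (c_0-\delta)I$ with all constraints satisfied, by Step 1.
\item Repeat with $c_1 = c_0 - \delta$, $c_2 = c_1 - \delta$, and so on. Each block reuses the terminal covariance of the previous block as its initial one; consecutive blocks share their endpoint, which accounts for the $N-1$ new steps per block.
\item Continue while $c_j \ge c_{1,\delta}^{(N)}$. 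Each $c_j$ lies in $[c_{1,\delta}^{(N)}, c_{d_1}]$, so each block is well defined.
\end{itemize}

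\textbf{Step 3 (counting and endpoint).} After $\lceil (c_0 - c_{1,\delta}^{(N)})/\delta \rceil$ such blocks, the upper bound on the covariance has dropped to some value in $(c_{1,\delta}^{(N)} - \delta, c_{1,\delta}^{(N)}]$.

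\begin{itemize}
\item If it does not yet lie below $c_{1,\delta}^{(N)} - \delta$, one final block starting from $c_{1,\delta}^{(N)}I$ gives $\Sigma_T \preceq (c_{1,\delta}^{(N)} - \delta)I$. This block is admissible because $c_{1,\delta}^{(N)}$ is attained: closedness of the feasible set of \cref{prob: covariance_shrinking_delta}, as in \cref{corollary: maintainability}.
\item \emph{Horizon.} The total horizon is $\lceil (c_0 - c_{1,\delta}^{(N)})/\delta \rceil (N-1) + N = \mathbb{T}(\lambda_{\max}(\Sigma_1), c_{1,\delta}^{(N)}, \delta, N)$.
\item If the blocks finish early, the remaining steps are padded using \cref{corollary: maintainability} restricted to $\mathcal{S}$. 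This uses the same monotonicity and ball argument with $r^\star(c_{1,\delta}^{(N)}) \le r^\star(c_{d_1})$.
\end{itemize}

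The delicate points are the monotonicity and terminal-condition issue in Step 1, and attainment of the infimum $c_{1,\delta}^{(N)}$ in Step 3. For the latter, I would argue compactness of the feasible set in $(\Sigma_k, U_k, Y_k)$, which is bounded by \cref{eq: control_constraint_fixed_v_delta}, \cref{eq: non_state_constraint_fixed_delta} and the covariance dynamics.
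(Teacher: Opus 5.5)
Your route matches the paper's. Its proof is the single remark that the corollary follows by rerunning the block-chaining argument of \cref{lemma: delta_shrinkability_induction} with the position constraints of \cref{prob: covariance_shrinking_delta_S} carried along. Several of your steps are fine: reusing the gains certified for $c_jI$ on any $\Sigma'\preceq c_jI$, the count $z(N-1)+N$, and the closing block at $c_{1,\delta}^{(N)}$. You are also right to import $\Sigma_1\preceq c_{d_1}I$ and stationarity of $\mu_1$ from the surrounding text, because the statement as written needs them.

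The gap is the step you call ``monotonicity of $r^\star$'' in Steps 2 and 3. Every block at a level $c_j\in[c_{1,\delta}^{(N)},c_{d_1}]$ stays in $\mathcal{S}$ only if $r^\star(c_j)\le r^\star(c_{d_1})$. That inequality is what gives $\mathbb{B}_p(P\mu_1,\Phi^{-1}(1-\epsilon)\sqrt{r^\star(c_j)})\subseteq\mathbb{B}_p(P\mu_1,d_1)$, and the final block and the padding need it too.

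Step 1 does not prove this. It compares different initial covariances under the policy of one fixed level, and says nothing about how $r^\star$ varies across levels. Cross-level monotonicity is blocked by exactly the obstruction you found. The $c_{d_1}$-optimal gains applied to $cI$ with $c<c_{d_1}$ only guarantee $\Sigma_N\preceq(c_{d_1}-\delta)I$, not $\Sigma_N\preceq(c-\delta)I$. So they do not certify $cI$ as $\delta$-strictly shrinkable with radius $r^\star(c_{d_1})$, and nothing else in your argument supplies $r^\star(c)\le r^\star(c_{d_1})$.

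The paper's one-line proof passes over the same point. Its proof of \cref{thm: r_c_shrinkability} also asserts comparisons of $r^\star$ values without argument. So a genuine extra ingredient is needed here.

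One way to reduce the gap:
\begin{itemize}
\item With $\Sigma_1=cI$, the terminal constraint is the LMI $\Sigma_N\preceq(c-\delta)I$. The feasible set of \cref{prob: covariance_shrinking_r} is therefore jointly convex in $(c,r,\Sigma_k,U_k,Y_k)$.
\item Convex combinations of optimal solutions at $c_{1,\delta}^{(N)}$ and at $c_{d_1}$ then give, for every intermediate level, a feasible policy with $r\le\max\{r^\star(c_{1,\delta}^{(N)}),r^\star(c_{d_1})\}$.
\item Your chain then goes through verbatim once you also have $\Phi^{-1}(1-\epsilon)\sqrt{r^\star(c_{1,\delta}^{(N)})}\le d_1$.
\end{itemize}
That endpoint inequality does not follow from the stated hypotheses. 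You must either add it as an assumption or verify it for the system at hand. Adding it is natural, since $r^\star(c_{1,\delta}^{(N)})$ is exactly the inflation radius used in \cref{alg: prism_constraint_tightening}.
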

\section{PRISM Global Planner}\label{sec: approach}

\subsection{Offline Covariance Shrinking \& Graph Construction}
The first key idea of PRISM is to shrink the initial covariance to $c_{1, \delta}^{(N)} I$ and then plan in deterministic space with a constant covariance of $c_{1, \delta}^{(N)} I$. PRISM separates covariance and mean steering while guaranteeing completeness under mild assumptions on the start and goal distribution (see \cref{thm: completeness_without_obstacles} and \cref{thm: completeness_with_obstacles} for details). As $\delta$ decreases and $N$ increases, the upper bound on the number of time steps required to shrink any covariance to $c_{1, \delta}^{(N)}I$ increases. Then, $\delta$ and $N$ trade off between the size of the range $[c_{1, \delta}^{(N)}, c_{2, \delta}^{(N)}]$ of shrinkable isotropic covariances and the maximum number of time steps required to reach $c_{1, \delta}^{(N)}I$. The first step of PRISM is to calculate $c_{1, \delta}^{(N)}$ and $c_{2, \delta}^{(N)}$ for fixed parameters $\delta$ and $N$ by minimizing or maximizing $c$ subject to the constraints given in \cref{prob: get_c_1_delta}.
\begin{problem}[{Find $c_{1, \delta}^{(N)}$}]
\label{prob: get_c_1_delta}
$$\min_{\Sigma_k, U_k, Y_k} c$$
such that for $i \in [M_{x_s}], j \in [M_u], k \in [N]$:
$$(\ref{eq: control_constraint_fixed_v_delta}), (\ref{eq: non_state_constraint_fixed_delta}), (\ref{eq: convex_covariance_dynamics}), (\ref{eq: schur_complement_constraint}), \Sigma_1 = cI, \lambda_{\max}(\Sigma_N) \leq \lambda_{\min}(\Sigma_1)-\delta.$$
\end{problem}
Recall from \cref{corollary: maintainability} and \cref{lemma: r_delta_strict_shrinkability} that there exists a control policy obeying all control and non-position state constraints such that $\Sigma_1 = c_{1, \delta}^{(N)} I$, $\Sigma_N \preceq c_{1, \delta}^{(N)} I$, and $P\Sigma_kP^T \preceq r^\star I$ for all $k$, with $r^\star$ minimizing $r$. PRISM will compute this policy and the resulting fixed feedback control and covariance trajectories $K_{1:N-1}^{(c_{1, \delta})}$ and $\Sigma_{1:N}^{(c_{1, \delta})}$. Next, PRISM will use this solution to deflate the safe state and control sets $\mathcal{X}_p$, $\mathcal{X}_s$, and $\mathcal{U}$ to $\mathcal{X}_p^-$, $\mathcal{X}_s^-$, and $\mathcal{U}^-$ and to inflate the obstacle set $\mathcal{P}$ to $\mathcal{P}^+$ (see Alg. \ref{alg: prism_constraint_tightening}). Then, PRISM will build a graph for complete planning in the tightened deterministic space.
\setlength{\textfloatsep}{10pt}
\begin{algorithm}[b]
\caption{Tighten constraints and inflate obstacles.}
\label{alg: prism_constraint_tightening}
\begin{algorithmic}[1]
    \Require $K^{(c_{1, \delta})}_{1:N-1}, \Sigma_{1:N}^{(c_{1, \delta})}, \mathcal{U}, \mathcal{X}_p, \mathcal{X}_s, \mathcal{P}$  
    \Ensure  $\mathcal{U}^-, \mathcal{X}_p^-, \mathcal{X}_s^-, \mathcal{P}^+$ 
    \For {$i \in [M_{x_p}]$}
    \State $\delta_{b_i^{x_p}} \gets \max\limits_{k\in[N]} \sqrt{{a_i^{x_p}}^TP\Sigma_k^{(c_{1, \delta})}P^T{a_i^{x_p}}}$
    \EndFor
    \For {$i \in [M_{x_s}]$}
        \State $\delta_{b_i^{x_s}} \gets \max\limits_{k\in [N]} \sqrt{{a_i^{x_s}}^TS\Sigma_k^{(c_{1, \delta})}S^T{a_i^{x_s}}}$ 
    \EndFor
    \For {$i \in [M_u]$}
    \State $\delta_{b_i^{u}} \gets \max\limits_{k\in [N-1]} \sqrt{{a_i^{u}}^TK_k^{(c_{1, \delta})}\Sigma_k^{(c_{1, \delta})}{K_k^{(c_{1, \delta})}}^T{a_i^{u}}}$
    \EndFor
    \State $r^\star \gets \max\limits_{k\in [N]} \lambda_{\max}(P\Sigma_k^{(c_{1, \delta})}P^T)$
    \State $\mathcal{U}^- \gets \bigcap_{i=1}^{M_u}\{\mathbf{u} \!\in\!  \mathbb{R}^m: {a_i^u}^T\mathbf{u} \leq b_i^u - \Phi^{-1}(1-\epsilon)\delta_{b_i^u} \}$
    \State $\mathcal{X}_p^- \gets \bigcap_{i=1}^{M_{x_p}}\{\mathbf{x} \!\in\!  \mathbb{R}^n\!:\! {a_i^{x_p}}^TP\mathbf{x} \leq b_i^{x_p} - \Phi^{-1}(1-\epsilon)\delta_{b_i^{x_p}} \}$
    \State $\mathcal{X}_s^- \gets \bigcap_{i=1}^{M_{x_s}}\{\mathbf{x} \!\in\!  \mathbb{R}^n\!:\! {a_i^{x_s}}^TS\mathbf{x} \leq b_i^{x_s} - \Phi^{-1}(1-\epsilon)\delta_{b_i^{x_s}} \}$
    \State $\mathcal{P}^+ \!\!=\!\!\bigcup_{j=1}^{M_p} \bigcap_{i=1}^{M_{p_j}}\{\mathbf{x}\!\in\!  \mathbb{R}^n\!\!:\! {a_i^{p_j}}^TP\mathbf{x} \!\leq\! {b_i^{p_j}} \!+\! \Phi^{-1}(1-\epsilon)\sqrt{r^\star}\}$
\end{algorithmic}
\end{algorithm}

We assume that $\mathcal{X}_p^- \setminus \mathcal{P}^+$ can be written as the Minkowski sum $\mathcal{S}^- \oplus \mathbb{B}_p(0, \epsilon_p)$ for some connected set $\mathcal{S}^-$ and $\epsilon_p > 0$; intuitively, $\mathcal{X}_p^- \setminus \mathcal{P}^+$ is connected by a tube of radius $\epsilon_p > 0$. This always holds when the narrowest passage in $\mathcal{X}_p \setminus \mathcal{P}$ has radius $> \Phi^{-1}(1-\epsilon)\sqrt{r^\star}$. Then, $\mathcal{X}_p^- \setminus \mathcal{P}^+$ can be exactly decomposed into a union of overlapping convex polytopes $\bigcup_{i=1}^{M_{s-}} \mathcal{S}_i$ such that for any $\mathcal{S}_A, \mathcal{S}_B$ where $\mathcal{X}_p^- \setminus \mathcal{P}^+ = \mathcal{S}_A \cup \mathcal{S}_B = \bigcup_{i=1}^{M_{s-}} \mathcal{S}_i$, $\text{Int}(\mathcal{S}_A \cap \mathcal{S}_B) \neq 0$. PRISM achieves such a decomposition using a modified version of \cite{shaikh2025exact} that iteratively grows each cell after the initial enumeration of obstacle-free cells and then prunes duplicates, resulting in an exact space decomposition with a relatively low number of relatively large non-disjoint convex sets.

We also assume that $(A, B)$ is controllable. Then, it is always possible to find a finite-time deterministic path between two stationary means that lie strictly in the interior of the same safe bounded convex set, while satisfying any nonempty set of tightened state and control constraints. We formally state these assumptions and related lemmas as:

\begin{assumption}\label{asm: connectedness}
$\mathcal{X}_p^- \setminus \mathcal{P}^+$ can be written as the Minkowski sum $\mathcal{S}^- \oplus \mathbb{B}_p(0, \epsilon_p)$ for some connected set $\mathcal{S}^-$ and $\epsilon_p > 0$.
\end{assumption}
\begin{lemma}\label{lemma: decomposition}
If \cref{asm: connectedness} holds, $\mathcal{X}_p^- \setminus \mathcal{P}^+$ can be exactly written as union of convex polytopes $\mathcal{X}_p^- \setminus \mathcal{P}^+ = \bigcup_{i=1}^{M_{s-}} \mathcal{S}_i$ such that for any $\mathcal{S}_A, \mathcal{S}_B$ where $\mathcal{X}_p^- \setminus \mathcal{P}^+ = \mathcal{S}_A \cup \mathcal{S}_B = \bigcup_{i=1}^{M_{s-}} \mathcal{S}_i$, $\text{Int}(\mathcal{S}_A \cap \mathcal{S}_B) \neq \emptyset$.
\end{lemma}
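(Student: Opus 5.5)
The plan is to build the decomposition in two stages: first an exact cover of $\mathcal{X}_p^- \setminus \mathcal{P}^+$ by finitely many convex polytopes, then a merging argument that gives the interior-overlap property. I read that property as follows: for any partition of the index set $[M_{s-}]$ into two nonempty groups, with $\mathcal{S}_A$ and $\mathcal{S}_B$ the unions of the two groups, the set $\mathcal{S}_A \cap \mathcal{S}_B$ has nonempty interior.

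\textbf{Exact cover.} The set $\mathcal{X}_p^-$ is a polytope, obtained by shifting the right-hand sides of $\mathcal{X}_p$ in Alg.~\ref{alg: prism_constraint_tightening}. Each inflated obstacle $\mathcal{P}_j^+$ is also a polytope. Hence $\mathcal{X}_p^- \setminus \mathcal{P}^+$ is a finite Boolean combination of half-spaces. Consider the arrangement of all hyperplanes $\{{a_i^{x_p}}^T\mathbf{p} = b_i^{x_p} - \Phi^{-1}(1-\epsilon)\delta_{b_i^{x_p}}\}$ and $\{{a_i^{p_j}}^T\mathbf{p} = b_i^{p_j} + \Phi^{-1}(1-\epsilon)\sqrt{r^\star}\}$. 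Every closed cell of this arrangement lies either inside the free set or inside the closure of an obstacle. I take the closed cells, intersected with $\mathcal{X}_p^-$, whose interiors lie outside $\mathcal{P}^+$. This gives finitely many convex polytopes $\mathcal{C}_1,\dots,\mathcal{C}_L$.

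By \cref{asm: connectedness}, the free set equals $\mathcal{S}^- \oplus \mathbb{B}_p(0,\epsilon_p)$. So it is the closure of its interior, and it has no lower-dimensional slivers. Consequently the full-dimensional cells cover it exactly, apart from boundary conventions for the open/closed obstacle boundaries, which I would handle by consistently taking closures. In the lifted form used by the paper, each cell is written as a set in $\mathbb{R}^n$ constrained only through $P\mathbf{x}$.

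\textbf{Merging to interior overlaps.} Adjacent cells $\mathcal{C}_a, \mathcal{C}_b$ share a facet but may not have overlapping interiors, so they must be enlarged. I would use the tube structure. Take any point $\mathbf{p}$ on a shared facet that lies in the free set's interior. By \cref{asm: connectedness}, some $\mathbf{s}\in\mathcal{S}^-$ has $\mathbf{p}\in\mathbb{B}_p(\mathbf{s},\epsilon_p) \subset \mathcal{X}_p^- \setminus \mathcal{P}^+$. I would add the polytope $\mathcal{C}_{ab} := \mathcal{B}\cap(\mathcal{C}_a\cup\mathcal{C}_b)$, or more simply a small box $\mathcal{B}$ inscribed in that ball and centred near $\mathbf{p}$. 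The box is convex, lies in the free set, and has nonempty-interior intersection with both $\mathcal{C}_a$ and $\mathcal{C}_b$. Adding such boxes for every adjacent pair keeps the union unchanged, since each box lies in the free set, and keeps the collection finite.

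Next, define the graph on pieces with an edge whenever two pieces have intersecting interiors. This graph is connected. The reason is that the interior of the free set is path-connected: $\mathcal{S}^-$ is connected and is thickened by open balls. Any path between interiors of two cells can be perturbed to cross cell boundaries only through relative interiors of facets, avoiding codimension-$\ge 2$ faces, and each crossing is bridged by a box. Connectedness then gives the conclusion. For any split into $\mathcal{S}_A,\mathcal{S}_B$, some edge crosses the split. That edge corresponds to two pieces, one in each group, whose interiors intersect, so $\text{Int}(\mathcal{S}_A\cap\mathcal{S}_B)\neq\emptyset$.

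\textbf{Main obstacle.} The main difficulty is the path-perturbation step in connectedness. I must show that passage through lower-dimensional faces, such as corners where only codimension-2 faces meet, can be avoided. The intended argument is that these faces form a finite union of sets of dimension at most $p-2$. Such a set cannot disconnect the open connected interior of the free set, since removing a codimension-$\ge 2$ closed set from a connected open set of $\mathbb{R}^p$ leaves it connected. Connectedness of that interior itself needs care. It follows from the Minkowski-sum form: any two interior points connect through their centers in $\mathcal{S}^-$. Here I would additionally use path-connectedness of $\mathcal{S}^-$, or argue via chains of overlapping balls, which suffices for connected $\mathcal{S}^-$.
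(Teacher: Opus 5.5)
Your argument is correct, but it takes a different route from the paper's. The paper simply asserts that an exact finite cover by full-dimensional convex polytopes exists, and uses the tube balls themselves as bridges. For every pair $\mathcal{S}_i,\mathcal{S}_j$ whose interiors are both met by some $\mathbb{B}_p(c,\epsilon_p)$ with $c\in\mathcal{S}^-$, it adds a polytope inscribed in (and nearly filling) that ball. It then argues split by split that connectedness forces some tube ball to straddle any division into $\mathcal{S}_A,\mathcal{S}_B$, so the corresponding bridge overlaps a piece on the other side.

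You instead build the cover explicitly from the hyperplane arrangement. You use the Minkowski-sum form only to exclude lower-dimensional slivers and to get path-connectedness of $\mathrm{Int}(\mathcal{X}_p^-\setminus\mathcal{P}^+)$. You bridge only facet-adjacent cells with small boxes; this step does not actually need the tube ball, since a relative-interior point of a shared facet already lies in the interior of the free set. You then get the split property in one stroke from connectivity of the interior-overlap graph, via a general-position argument that avoids the codimension-$\ge 2$ skeleton.

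What your route buys is rigor. It isolates exactly which consequences of the tube assumption are used: no slivers, and connectedness of the \emph{interior}, not merely of the free set. Two closed squares meeting at a corner form a connected set that admits no such decomposition. Because the bridges are themselves graph vertices, your argument also handles divisions that separate bridges from original cells, which the paper's per-split argument leaves implicit. The paper's route buys large bridges of radius close to $\epsilon_p$, in keeping with the algorithm's aim of few large overlapping cells, whereas your boxes may be tiny; that does not matter for the lemma as stated.

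Minor points: drop the variant $\mathcal{B}\cap(\mathcal{C}_a\cup\mathcal{C}_b)$, which is not convex in general, and keep the plain box centred at a relative-interior point of the shared facet. The case where $\mathcal{S}_A$ and $\mathcal{S}_B$ share a piece is trivial, since every piece is full-dimensional. Finally, the closure issue you flag is a defect of the statement itself, which the paper also ignores: with $\mathcal{P}^+$ closed, the free set is not closed, so no finite union of closed polytopes equals it exactly.
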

\begin{assumption}\label{asm: controllability}
	$(A, B)$ is controllable.
\end{assumption}

\begin{lemma}\label{thm: controllability}
If \cref{asm: controllability} holds, for any position-constrained polytope $\mathcal{S} := \{\mathbf{x} \!\in\!  \mathbb{R}^n : \cap_{i=1}^{M_s} {a_i^s}^TP\mathbf{x} \leq b_i^s \}$, if $\mu_\mathcal{I}, \mu_\mathcal{G} \!\in\!  \text{Int}(\mathcal{S})$, $A\mu_\mathcal{I} = \mu_\mathcal{I}$, $A\mu_\mathcal{G} = \mu_\mathcal{G}$, and $\mathcal{S}$ is a closed set of finite volume, then \cref{prob: mean_steering_single_polytope} has a feasible solution for some finite $T$ if $\mathcal{U}^-$ and $\mathcal{X}_s^-$ both contain the origin and have nonempty interior.
\end{lemma}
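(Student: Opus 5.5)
The plan is to build an explicit feasible trajectory for \cref{prob: mean_steering_single_polytope}, which I take to be deterministic mean steering from $\mu_\mathcal{I}$ to $\mu_\mathcal{G}$ under the tightened sets $\mathcal{U}^-$, $\mathcal{X}_s^-$ and the position polytope $\mathcal{S}$. The idea is to follow a quasi-static path: move the mean along the segment from $\mu_\mathcal{I}$ to $\mu_\mathcal{G}$ in many small steps, so that the state stays near the set of stationary points and the control stays near zero.

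First I check that the segment is usable. Stationary points form the linear subspace $\ker(A-I)$. Both endpoints lie in it, so every point $\mu(s) = (1-s)\mu_\mathcal{I} + s\mu_\mathcal{G}$ is stationary. Hence $S\mu(s) = 0$, and each $\mu(s)$ is held in place by $\mathbf{v}=0$. Because $\text{Int}(\mathcal{S})$ is convex and contains both endpoints, the whole segment lies in $\text{Int}(\mathcal{S})$. The segment is compact and $\mathcal{S}$ is closed with finite volume, so there is a margin $\rho>0$ with $\mathbb{B}_p(P\mu(s),\rho) \subseteq \mathcal{S}$ for all $s\in[0,1]$. Likewise, $\mathcal{U}^-$ and $\mathcal{X}_s^-$ have nonempty interior and contain the origin. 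I will read this as the origin being interior, which is what the tightening construction gives: the origin is interior in the original sets and the tightening is by a margin. This yields radii $\rho_u,\rho_s>0$ with $\mathbb{B}_m(0,\rho_u)\subseteq\mathcal{U}^-$ and $\{\mathbf{x}: \|S\mathbf{x}\|\le\rho_s\}\subseteq\mathcal{X}_s^-$.

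Next I handle a single short hop. By \cref{asm: controllability} and the nonsingularity of $A$, the map taking an $n$-step input sequence to the resulting displacement is surjective. So there is a linear right inverse $\mathcal{C}^\dagger$ such that, for any stationary $\mu_a,\mu_b$, the inputs $\mathbf{v}_{0:n-1} = \mathcal{C}^\dagger(\mu_b - A^n\mu_a) = \mathcal{C}^\dagger(\mu_b-\mu_a)$ steer $\mu_a$ to $\mu_b$ exactly in $n$ steps. These inputs, and the intermediate deviations $\mu_k - \mu_a$, are bounded by $L\|\mu_b-\mu_a\|$ for a constant $L$ that depends only on $(A,B,n)$. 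Since $S\mu_a=0$, the non-position components satisfy $\|S\mu_k\|\le \|S\|L\|\mu_b-\mu_a\|$, and the position deviation is bounded similarly.

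Finally I chain the hops. I split the segment into $M$ equal pieces with $\|\mu_\mathcal{G}-\mu_\mathcal{I}\|/M$ small enough that $L\|\mu_\mathcal{G}-\mu_\mathcal{I}\|/M$ falls below $\rho_u$, $\rho_s/\|S\|$ and $\rho/\|P\|$. I then concatenate the $M$ hops. Each hop starts and ends at a stationary point of the segment, so the concatenation is well defined and all constraints hold at every step. This gives a feasible solution with finite $T = Mn$. The main obstacle is the step relating "contains the origin" to an interior margin. If the origin could sit on the boundary of $\mathcal{U}^-$, the approach I would try first is to use instead an interior point $\bar{\mathbf{u}}$ with small norm. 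The cost is that the hold states are no longer exactly stationary, so I would compensate with the feedforward term and bound the drift, but it is cleaner to justify the interior assumption from the construction in \cref{alg: prism_constraint_tightening}.
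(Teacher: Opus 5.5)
Your proposal is correct and follows essentially the same route as the paper's proof. Both walk along the segment of stationary points from $\mu_\mathcal{I}$ to $\mu_\mathcal{G}$ in short hops and steer each hop exactly using controllability; the paper uses the Gramian-based minimum-norm input, which is your right inverse $\mathcal{C}^\dagger$. Both then bound the inputs and state deviations linearly in the hop length and chain finitely many hops small enough to respect the position, non-position and control margins. The paper simply assumes without loss of generality that $\mathcal{U}^-$ and $\mathcal{X}_s^-$ contain a ball about the origin. Your explicit flagging of that step, and your plan to justify it from the $\delta>0$ tightening construction, is more careful than the original.
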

\begin{problem}[{Steer Mean Inside Polytope}]\label{prob: mean_steering_single_polytope}
\begin{equation}
\min_{\mu_k, \mathbf{v}_k} \sum_{k=1}^{T-1} \mathbf{v}_k^TR\mathbf{v}_k
\end{equation}
such that for all $i \in [M_{x_s}], j \in [M_u], 
k \in [T], \ell \in [M_s]$:
\begin{align}
&\quad \mu_1 = \mu_\mathcal{I}, \mu_{k+1} = A\mu_k + B\mathbf{v}_k, \quad {a_i^{x_s}}^TS\mu_k \leq b_i^{x_s} - \delta_{b_i^{x_s}}, \notag \\
&\quad {a_j^u}^T\mathbf{v}_k \leq b_j^u - \delta_{b_j^u},\quad {a_\ell^s}^TP\mu_k \leq b_\ell^s,  \quad \mu_T = \mu_\mathcal{G}. \notag
\end{align}
\end{problem}
By \cref{thm: controllability}, if $\mu_\mathcal{I} \!\in\!  \mathcal{S}_\mathcal{I}$, $\mu_\mathcal{J} \!\in\!  \mathcal{S}_\mathcal{J}$, with $\text{Int}(\mathcal{S}_\mathcal{I} \cap \mathcal{S}_\mathcal{J}) \neq \emptyset$, $A\mu_\mathcal{I} = \mu_\mathcal{I}$, and $A\mu_\mathcal{J} = \mu_\mathcal{J}$, there must exist $T$ such that there is a feasible solution to \cref{prob: mean_steering_single_polytope} steering $\mu_\mathcal{I}$ to some stationary $\mu \!\in\!  \text{Int}(\mathcal{S}_\mathcal{I} \cap \mathcal{S}_\mathcal{J})$ over $T$ or fewer steps and there is a feasible solution to \cref{prob: mean_steering_single_polytope} steering the same stationary $\mu \!\in\!  \text{Int}(\mathcal{S}_\mathcal{I} \cap \mathcal{S}_\mathcal{J})$ to $\mu_\mathcal{J}$ over $T$ or fewer steps. Therefore, the following optimization problem must be feasible for any trajectory length $\geq 2T-1$ for some $T$:
\begin{problem}[{Steer Mean Across Intersecting Polytopes}]\label{prob: mean_steering}

\begin{equation}
\min_{\mu_k, \mathbf{v}_k} \sum_{k= 1}^{2(T-1)} \mathbf{v}_k^TR\mathbf{v}_k
\end{equation}
s.t. for all $i \in [M_{x_s}], j \in [M_u], k \in [2T-1]$:
\begin{align}
&\mu_{k+1} = A\mu_k + B\mathbf{v}_k, \quad \mu_1 = \mu_\mathcal{I}, \quad \mu_{2T-1} = \mu_\mathcal{J}, \notag \\
&{a_i^{x_s}}^TS\mu_k \leq b_i^{x_s} - \delta_{b_i^{x_s}}, \quad {a_j^u}^T\mathbf{v}_k \leq b_j^u - \delta_{b_j^u},\notag
\end{align}
for all $k \in [T]$ and $\ell_\mathcal{I} \in [M_{\mathcal{S}_\mathcal{I}}]$, ${a_{\ell_\mathcal{I}}^{S_\mathcal{I}}}^TP\mu_k \leq b_{\ell_\mathcal{I}}^{S_\mathcal{I}},$
and for all $k = T-1 + [T]$ and $\ell_\mathcal{J} \in [M_{\mathcal{S}_\mathcal{J}}]$, ${a_{\ell_\mathcal{J}}^{S_\mathcal{J}}}^TP\mu_k \leq b_{\ell_\mathcal{J}}^{S_\mathcal{J}}.$
\end{problem}

After decomposing the safe state space, PRISM creates one node at the centroid of each convex set, and finds edges between nodes at intersecting sets by solving \cref{prob: mean_steering} with increasing values of $t \!\in\!  \mathbb{Z}_{>0}$, with $T = (N-1)t + 1$, until a feasible solution is found (see \cref{fig: mean_steering} and lines 4-6 in \cref{alg: prism_mean_steering}). These edges are lifted back into belief space by combining the mean trajectory found by \cref{prob: mean_steering} with the covariance trajectory found by \cref{prob: covariance_shrinking_r} (lines 7-11 in \cref{alg: prism_mean_steering}). In belief space, these edges are associated with inflated versions of $\mathcal{S}_i$, $\mathcal{S}_j$ and $\mathcal{S}_i \cap \mathcal{S}_j$ that are subsets of $\mathcal{X}_p \setminus \mathcal{P}$ (lines 2-3 and 10 in \cref{alg: prism_mean_steering}). A full description of edge construction and safe set inflation is provided in the proof of \cref{lemma: mean_soundness} (see Appendix \ref{sec:appendix}). The offline graph construction algorithm is presented in \cref{alg: prism_offline_phase}.

\begin{figure}[t]
\centering
\includegraphics[width=0.9\columnwidth]{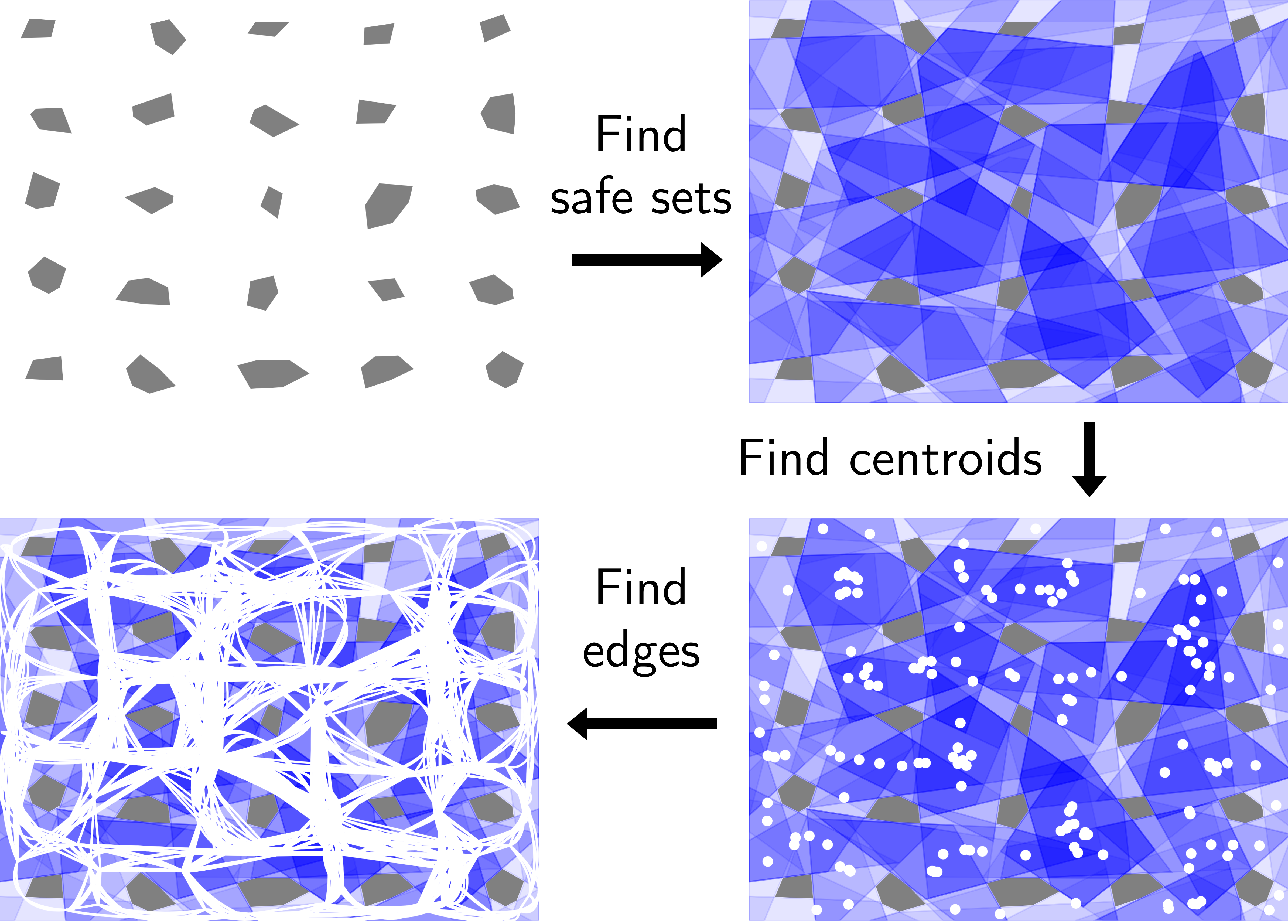}
\vspace{-0.25cm}
\caption{Clockwise from top left: Obstacle environment, overlapping safe set decomposition, safe set decomposition with nodes, and PRISM mean steering graph with edges. Obstacles are gray, safe sets are transparent blue, and centroids and edges are white.}
\label{fig: mean_steering}
\end{figure}

\setlength{\textfloatsep}{10pt}
\begin{algorithm}[htbp]
\caption{Find an edge with tightened constraints.}
\label{alg: prism_mean_steering}
\begin{algorithmic}[1]
    \Require $N, R,  \mu_i, \mu_j, \mathcal{U}^-, \mathcal{X}_s^-, \mathcal{S}_i, \mathcal{S}_j, \Sigma_{1:N}^{(c_{1, \delta})}, K_{1:N-1}^{(c_{1, \delta})}$ 
    \Ensure  $e$ 
    \State $t, \text{success} \gets 1, \textbf{false}$
    \State $\mathcal{S}_i^+, \mathcal{S}_j^+ \gets \text{INFLATE}(\mathcal{S}_i, \mathcal{X}_p, \mathcal{P}), \text{INFLATE}(\mathcal{S}_j, \mathcal{X}_p, \mathcal{P})$
    \State $\mathcal{S}_{i \cap j}^+ \gets \text{INFLATE}(\mathcal{S}_i \cap \mathcal{S}_j, \mathcal{X}_p, \mathcal{P})$
    \While {\text{\textbf{not} success}}
        \State $T \gets t(N-1) + 1$
        \State $\mu_{1:2T-1}, \mathbf{v}_{1:2T-2}, \text{success} \gets \text{Prob. \ref{prob: mean_steering}}(\mu_i, \mu_j, \mathcal{U}^-,$
        $ \mathcal{X}_s^-, \mathcal{S}_i, \mathcal{S}_j, T, R)$
        \If {\text{success}}
             \State $K_{1:2T-2} \gets \{K_{1:N-1}^{(c_{1, \delta})} \times 2t\}$
             \State $\Sigma_{1:2T-1} \gets \text{PROPAGATE}(c_{1, \delta}^{(N)}I, K_{1:2T-2}, A, B)$
             \State $\mathcal{S}_{1:2T-1} \gets \{\mathcal{S}_i^+ \times (T-1), \mathcal{S}_{i \cap j}^+, \mathcal{S}_j^+ \times (T-1)\}$
			\State $e \gets (v_i, v_j, \mu_{1:2T-1}, \mathbf{v}_{1:2T-2}, \Sigma_{1:2T-1}, K_{1:2T-2},$
            $  \mathcal{S}_{1:2T-1})$
        \EndIf
        \State $t \gets t + 1$
    \EndWhile
\end{algorithmic}
\end{algorithm}

\setlength{\textfloatsep}{10pt}
\begin{algorithm}[htbp]
\caption{PRISM offline graph construction.}
\label{alg: prism_offline_phase}
\begin{algorithmic}[1]
    \Require $\delta, N, R, \mathcal{X}_p, \mathcal{X}_s, \mathcal{U}, \mathcal{P}$ 
    \Ensure  $\Sigma_{1:N}^{(c_{1, \delta})}, K_{1:N-1}^{(c_{1, \delta})}, \mathcal{X}_p^-, \mathcal{X}_s^-, \mathcal{U}^-, \mathcal{P}^+, \mathcal{E}, \mathcal{V}$ 
    \State $c_{1, \delta}^{(N)} \gets \text{\cref{prob: get_c_1_delta}}(\delta, N, \mathcal{X}_s, \mathcal{U})$
    \State $\Sigma_{1:N}^{(c_{1, \delta})}, K_{1:N-1}^{(c_{1, \delta})} \gets \text{\cref{prob: covariance_shrinking_r}}(c_{1, \delta}^{(N)}I, \mathcal{U}, \mathcal{X}_s)$
    \State $\mathcal{U}^-, \mathcal{X}_p^-, \mathcal{X}_s^-,\!\mathcal{P}^+\!\gets\! \text{Alg.\! \ref{alg: prism_constraint_tightening}}(K^{(c_{1, \delta})}_{1:N-1}, \Sigma_{1:N}^{(c_{1, \delta})}, \mathcal{U}, \mathcal{X}_p, \mathcal{X}_s, \mathcal{P})$
    \State $\bigcup_{i=1}^{M_{s^-}} \mathcal{S}_i \gets \text{DECOMPOSE}(\mathcal{X}_p^- \setminus \mathcal{P}^+)$
    \State $\mathcal{E}, \mathcal{V} \gets \emptyset, \emptyset$
    \For {$i \in [M_s]^-$}
        \State $\mu_i \gets \text{CENTROID}(\mathcal{S}_i)$
        \State $\mathcal{V} \gets \mathcal{V} \cup (\mathcal{S}_i, \mu_i)$
    \EndFor
    \For {$i \in [M_s]^-$}
        \For {$j \in [M_s]^-$}
            \State $(\mu_i, \mathcal{S}_i), (\mu_j, \mathcal{S}_j) \gets \mathcal{V}_i, \mathcal{V}_j$
            \If {$\text{Int}(\mathcal{S}_i \cap \mathcal{S}_j) \neq \emptyset$}
                \State $e \gets \text{Alg. \ref{alg: prism_mean_steering}}(N, R, \mu_i, \mu_j, \mathcal{U}^-, \mathcal{X}_s^-, \mathcal{S}_i, \mathcal{S}_j, \ \ \ \ $
                $\Sigma_{1:N}^{(c_{1, \delta})}, K_{1:N-1}^{(c_{1, \delta})})$
                \State $\mathcal{E} \gets \mathcal{E} \cup \{e\}$
            \EndIf
        \EndFor
    \EndFor
\end{algorithmic}
\end{algorithm}
 
\subsection{Online Planning through the PRISM Graph}
Given an initial covariance $\Sigma_\mathcal{I}$, PRISM seeks to safely reduce the system covariance to $c_{1, \delta}^{(N)} I$. If the initial mean $\mu_\mathcal{I}$ is stationary and there are no position constraints, this is achieved by repeatedly minimizing $\lambda_{\max}(\Sigma_N)$ subject to the constraints given in \cref{prob: covariance_shrinking_delta}. In the presence of position constraints, by \cref{thm: r_c_shrinkability} and \cref{corollary: constrained_path_existence}, this can be achieved by repeatedly minimizing $\lambda_{\max}(\Sigma_N)$ subject to the constraints given in \cref{prob: covariance_shrinking_delta_S} as long as $\mu_\mathcal{I}$ is stationary and $\mathbb{B}_p(P\mu_\mathcal{I}, d) \subset (\mathcal{X}_p \setminus \mathcal{P})$, with $d = \Phi^{-1}(1-\epsilon)\sqrt{r^\star(\lambda_{\max}(\Sigma_\mathcal{I}))}$.

However, if $\mu_\mathcal{I}$ is non-stationary or \cref{corollary: constrained_path_existence} does not hold (i.e. because $\mathbb{B}_p(P\mu_\mathcal{I}, d) \not\subset (\mathcal{X}_p \setminus \mathcal{P})$), shrinking the system covariance in place is not guaranteed to succeed. Then, PRISM will find a safe position polytope $\mathcal{S} \subset \mathcal{X}_p \setminus \mathcal{P}$ which contains the initial distribution with high probability, then steer to a stationary mean while optimizing for position constraint looseness. The idea of this maneuver is to reach a stationary mean $\mu_T$ while approximately optimizing for satisfaction of \cref{corollary: constrained_path_existence} at time $T$. PRISM solves \cref{prob: stopping_phase_convex} with $T = Nt$ for increasing values of $t \!\in\!  \mathbb{Z}_{>0}$ until a maximum path length parameter $T_{\max}$ is reached (at which point PRISM terminates without returning a path) or a solution is found with $\mu_T = A\mu_T$ and $(\mu_T, \Sigma_T)$ satisfying \cref{corollary: constrained_path_existence} (lines 3-10 in \cref{alg: prism_online}):
\begin{problem}[{Steer to Stationary Mean with Position Constraints}]\label{prob: stopping_phase_convex}

\begin{align}
\max \min_{i, k} b_i^s - &\Big({a_i^s}^TP\mu_k \notag \\
&+ \Phi^{-1}(1-\epsilon)\mathbb{M}_1(P^Ta_i^s, \lambda_{\max}(\Sigma_k)I, \lambda_{\max}(\Sigma_{r,k}^i)I)\Big) \notag
\end{align}
such that for $k \in [T]$, $i \in [M_{x_s}]$, $j \in [M_u]$, 
$\ell \in [M_{s}]$:
\begin{align}
&(\ref{eq: convex_covariance_dynamics}), (\ref{eq: mean_dynamics}), (\ref{eq: schur_complement_constraint}), (\ref{eq: convex_ctrl_chance_constraint}), (\ref{eq: convex_nonpos_state_chance_constraint}), (\ref{eq: convex_pos_chance_constraint}), \notag \\
& A\mu_T = \mu_T, \quad \mu_1 = \mu_\mathcal{I}, \quad \Sigma_1 = \Sigma_\mathcal{I}, \quad \Sigma_T \preceq c_{2, \delta}^{(N)}I.
\end{align}
\end{problem}
After solving \cref{prob: stopping_phase_convex} and reaching a stationary mean $\mu_T = A\mu_T$ and covariance $\Sigma_T$ such that \cref{corollary: constrained_path_existence} holds, PRISM repeatedly minimizes $\lambda_{\max}(\Sigma_N)$ subject to the constraints in \cref{prob: covariance_shrinking_delta_S} (lines 11-16 in \cref{alg: prism_online}):
\begin{problem}[{Shrink Covariance with Position Constraints}]\label{prob: shrinking_subject_to_constraints}

\begin{equation}
\min_{\Sigma_k, U_k, Y_k} \lambda_{\max}(\Sigma_N)
\end{equation}
s.t. for $k \in [T]$, $i \in [M_{x_s}]$, $j \in [M_u]$, 
$\ell \in [M_{s}]$: $(\ref{eq: convex_covariance_dynamics}), (\ref{eq: schur_complement_constraint}), (\ref{eq: control_constraint_fixed_v_delta}),(\ref{eq: non_state_constraint_fixed_delta}), (\ref{eq: position_constraint_fixed_delta}), \mu_1 = \mu_T, \Sigma_1 = \Sigma_T.$
\end{problem}
By \cref{corollary: constrained_path_existence}, this will reduce the covariance to $\preceq (c_{1, \delta}^{(N)}-\delta)I$ within a finite number of time steps. Then, PRISM connects to the graph constructed offline by \cref{alg: prism_offline_phase}. First, PRISM finds a centroid $\mu_i$ and associated safe polytope $\mathcal{S}_i$ in the PRISM graph such that $\mu_T \!\in\!  \mathcal{S}_i$ (line 17 in \cref{alg: prism_online}). As long as $\mu_T \!\in\!  \mathcal{X}_p^- \setminus \mathcal{P}^+$, such a centroid and safe polytope must exist. PRISM then solves \cref{prob: mean_steering} to find a mean trajectory and reuses the covariance trajectory from \cref{prob: covariance_shrinking_r}, constructing an edge that connects $\mathcal{N}(\mu_T, c_{1, \delta}^{(N)}I)$ to a node in the PRISM graph (line 18 in \cref{alg: prism_online}). The goal distribution $\mathcal{N}(\mu_\mathcal{G}, \Sigma_\mathcal{G})$ is similarly connected to the PRISM graph (lines 17 and 19 in \cref{alg: prism_online}). Finally, PRISM performs a graph search to connect the start and goal distributions (line 20 in \cref{alg: prism_online}).

\subsection{Planning in the Absence of Position Constraints}\label{subsec: planning_without_position_constraints}
In the absence of obstacle constraints and other position constraints, PRISM does not build a graph. If $\mu_\mathcal{I}$ is non-stationary, PRISM solves a modified version of \cref{prob: stopping_phase_convex} with $T = (N-1)t + 1$ for increasing values of $t \!\in\!  \mathbb{Z}_{>0}$ until $\mu_T = A\mu_T$ or $T \geq T_{\max}$ (at which point PRISM terminates):
\begin{problem}[{Steer to Stationary Mean}]\label{prob: stopping_phase_convex_no_pos_constraints}
\begin{equation}
\min_k \lambda_{\max}(\Sigma_k)
\end{equation}
s.t. for $k\! =\! [T]$, $i\! =\! [M_{x_s}]$, $j\! =\! [M_u]$: $(\ref{eq: convex_covariance_dynamics}), (\ref{eq: mean_dynamics}), (\ref{eq: schur_complement_constraint}), (\ref{eq: convex_ctrl_chance_constraint}), (\ref{eq: convex_nonpos_state_chance_constraint}),\   A\mu_T\! =\! \mu_T, \ \mu_1\! =\! \mu_\mathcal{I}, \ \Sigma_1\! = \Sigma_\mathcal{I}, \  \Sigma_T \preceq c_{2, \delta}^{(N)}I.$
\end{problem}

After solving \cref{prob: stopping_phase_convex_no_pos_constraints}, PRISM will repeatedly minimize $\lambda_{\max}(\Sigma_N)$ subject to the constraints in \cref{prob: covariance_shrinking_delta}:
\begin{problem}[{Shrink Covariance}]\label{prob: shrinking_without_position_constraints}

\begin{equation}
\min_{\Sigma_k, U_k, Y_k} \lambda_{\max}(\Sigma_N)
\end{equation}
s.t. for $k \in [T]$, $i \in [M_{x_s}]$, $j \in [M_u]$, 
$\ell \in [M_{s}]$: $(\ref{eq: convex_covariance_dynamics}), (\ref{eq: schur_complement_constraint}), (\ref{eq: control_constraint_fixed_v_delta}),(\ref{eq: non_state_constraint_fixed_delta}), \mu_1 = \mu_T, \Sigma_1 = \Sigma_T.$
\end{problem}
After shrinking the covariance to $\preceq c_{1, \delta}^{(N)}I$, which is possible in a finite number of time steps by \cref{corollary: finite_time_shrinkability}, PRISM will then find a position polytope $\mathcal{S}$ such that the current mean and the goal mean $\mu_\mathcal{G}$ both are in $\text{Int}(\mathcal{S})$, and use \cref{alg: prism_mean_steering} to connect the current distribution to $\mathcal{N}(\mu_\mathcal{G}, c_{1, \delta}^{(N)}I)$.
\setlength{\textfloatsep}{10pt}
\begin{algorithm}[htbp]
\caption{PRISM online planning phase}
\label{alg: prism_online}
\begin{algorithmic}[1]
	\Require $\mathcal{I}, \mathcal{G}, c_{1, \delta}^{(N)}, \mathcal{X}_p, \mathcal{X}_s, \mathcal{U}, \mathcal{P}, \Sigma_{1:N}^{(c_{1, \delta})}, K_{1:N-1}^{(c_{1, \delta})}, \mathcal{X}_p^-, R$
$\mathcal{X}_s^-, \mathcal{U}^-, \mathcal{P}^+, \mathcal{E}, \mathcal{V}, T_{\max}, \Sigma_r^{1:M_{s_k}}, \Sigma_r^{1:M_{x_s}}, Y_r^{1:M_u}, N$ 
    \Ensure $\mathcal{T}$
    \State $(\Sigma_T, \mu_T), t, T, \mathcal{T} \gets (\Sigma_\mathcal{I}, \mu_\mathcal{I}), 1, N, \emptyset$
    \State $\mathcal{S} \gets \text{FIND-SAFE-POLYTOPE}(\mu_\mathcal{I}, \Sigma_\mathcal{I}, \mathcal{P}, \mathcal{X}_p)$
    \State $\text{safe} \gets (\mu_T = A\mu_T) \textbf{ and } (\mathbb{B}_p(P\mu_T, \Phi^{-1}(1-\epsilon)\sqrt{r^\star(\lambda_{\max}(\Sigma_T))}) \subset \mathcal{S})$
    \While {(\textbf{not} \text{ safe }) \textbf{and } $T \leq T_{\max}$}
        \State $\mu_{1:T}, \mathbf{v}_{1:T-1}, \Sigma_{1:T}, K_{1:T-1}, \text{success} \gets \text{Prob. \ref{prob: stopping_phase_convex}}($
        $\mu_\mathcal{I}, \Sigma_\mathcal{I}, T, c_{2, \delta}^{(N)}, \mathcal{X}_s, \mathcal{U}, \mathcal{S}, \Sigma_r^{1:M_{s_k}}, \Sigma_r^{1:M_{x_s}}, Y_r^{1:M_u})$
        \If {\text{success}}
            \State $\mathcal{T} \gets \mathcal{T} \cup (\mu_{1:T}, \mathbf{v}_{1:T-1}, \Sigma_{1:T}, K_{1:T-1}, \{\mathcal{S} \times T \})$
            \State $\text{safe} \gets \mathbb{B}_p(P\mu_T, \Phi^{-1}(1-\epsilon)\sqrt{r^\star(\lambda_{\max}(\Sigma_T))}) \subset \mathcal{S}$
        \EndIf
        \State $t, T \gets t + 1, N(t+1)$
    \EndWhile
    \If {\textbf{not} \text{ safe }}
        \textbf{return } \text{fail}
    \EndIf
    \State $c \gets \lambda_{\max}(\Sigma_T)$
    \While {$c > c_{1, \delta}^{(N)}$}
        \State $\Sigma_{1:N}, K_{1:N-1} \gets \text{Prob. \ref{prob: shrinking_subject_to_constraints}}(\mu_T, \Sigma_T, N, \mathcal{X}_s, \mathcal{U}, \mathcal{S})$
        \State $\mu_{1:N}, \mathbf{v}_{1:N-1} \gets \{\mu_T \times N\}, \{0 \times (N-1)\}$
        \State $c, \Sigma_T \gets \lambda_{\max}(\Sigma_N), \Sigma_N$
        \State $\mathcal{T} \gets \mathcal{T} \cup (\mu_{1:N}, \mathbf{v}_{1:N-1}, \Sigma_{1:N}, K_{1:N-1}, \{\mathcal{S} \times N \})$
    \EndWhile
    \State $(\mu_i,\!\mathcal{S}_i), (\mu_j,\!\mathcal{S}_j)\!\! \gets\!\! \text{GET-NODE}(\mu_T\!,\!\mathcal{V}), \text{GET-NODE}(\mu_\mathcal{G}\!,\!\mathcal{V})$
    \State $e_i\! \gets\! \text{Alg.\! \ref{alg: prism_mean_steering}}(N,\! R,\! \mu_T,\! \mu_i, \mathcal{U}^-, \mathcal{X}_s^-, \mathcal{S}_i, \mathcal{S}_i, \Sigma_{1:N}^{(c_{1, \delta})}, K_{1:N-1}^{(c_{1, \delta})})$
    \State $e_j \!\gets\! \text{Alg.\! \ref{alg: prism_mean_steering}}(N,\! R,\! \mu_j,\! \mu_\mathcal{G},\! \mathcal{U}^-, \mathcal{X}_s^-, \mathcal{S}_j, \mathcal{S}_j, \Sigma_{1:N}^{(c_{1, \delta})}, K_{1:N-1}^{(c_{1, \delta})})$
    \State $\mathcal{T} \gets \mathcal{T} \cup e_i \cup \text{GRAPH-SEARCH}(\mathcal{V}, \mathcal{E}, v_i, v_j) \cup e_j$
\end{algorithmic}
\end{algorithm}

\section{PRISM Local Planner}\label{subsec: local_refinement}
\begin{figure}[tbp]
\centering
\includegraphics[width=0.9\columnwidth]{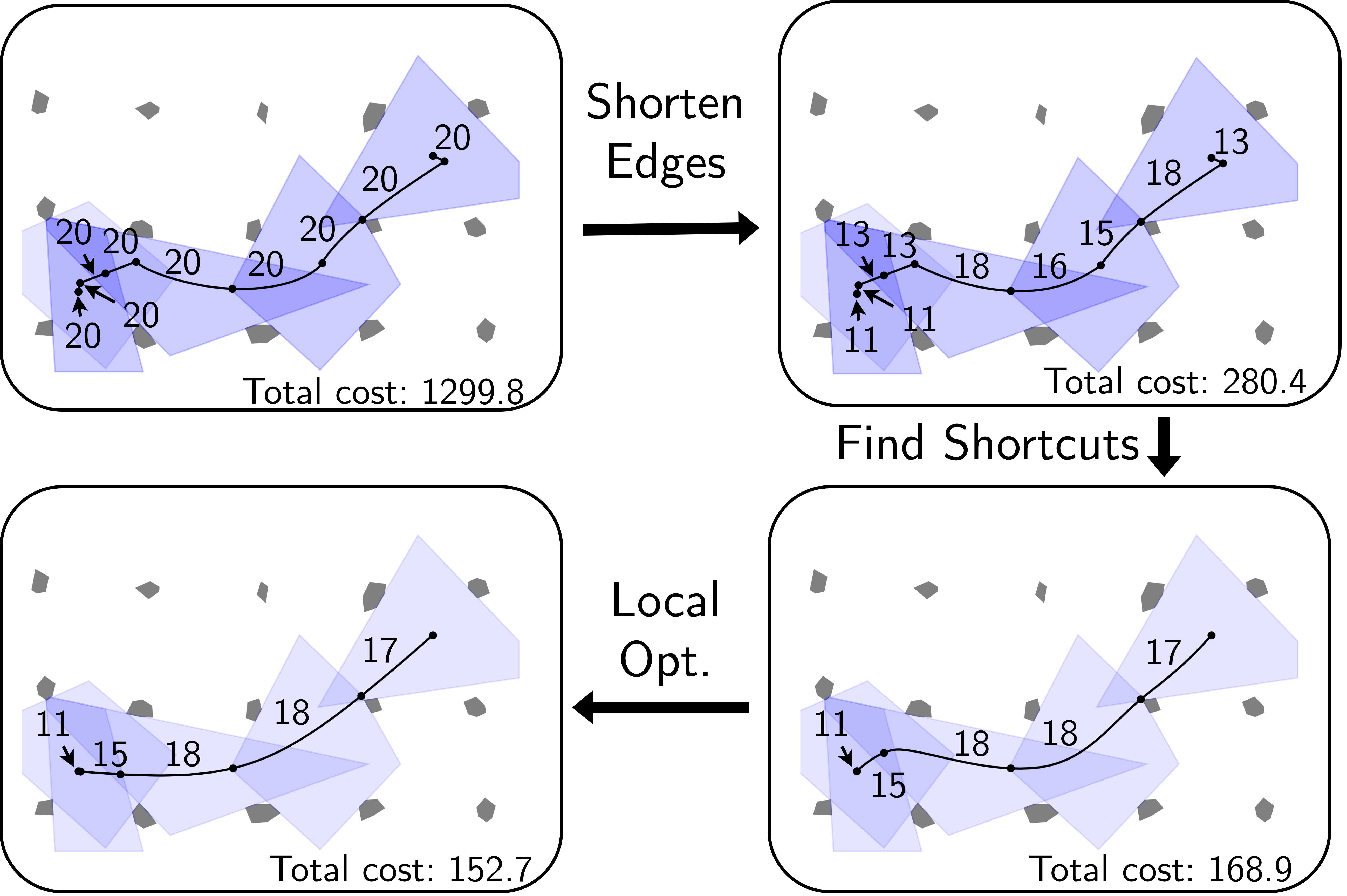}
\caption{Clockwise from top left: Initial feasible trajectory, trajectory after shortening edges, trajectory after shortcutting, trajectory after local optimization. Time allocation for each edge is labeled for all trajectories.}
\label{fig: local_opt}
\end{figure}

Recall the covariance steering problem with time-augmented control cost from \cref{prob: time_optimal_covariance_steering}. With this cost formulation, trajectory cost can be reduced by either reducing control cost while maintaining trajectory length, or by reducing trajectory length while keeping the control cost low enough to avoid increasing the total cost. PRISM's global planning module returns a path consisting of a sequence of edges, where each edge is associated with a time allocation and a sequence of safe convex sets over that time allocation. 

The second key idea of PRISM is that once a feasible global plan is found, online local optimization can reduce the cost by orders of magnitude. PRISM's local optimization algorithm alternates between three convex steps: reducing the time allocation for each edge, finding shortcuts to eliminate unnecessary edges, and local optimization for fixed time allocations for each edge. Given a feasible initial guess, each step is guaranteed to return a trajectory with cost no worse than that of the input trajectory. Therefore, our local optimization algorithm is guaranteed to return a solution (i.e. is complete) and is anytime (i.e. will always return a feasible solution if stopped early). Further, the final trajectory is guaranteed to be locally optimal for the fixed set of edges and time allocations utilized at the last optimization step. The algorithm is visualized in \cref{fig: local_opt}.

First, a preprocessing phase finds reference values for the state and control covariance for each edge in order to linearize the state and control chance constraints. Recall from \cref{sec: preliminaries} that a set of reference values such that the convex tightened constraints given in \cref{eq: convex_ctrl_chance_constraint}, \cref{eq: convex_nonpos_state_chance_constraint}, \cref{eq: convex_pos_chance_constraint} are satisfied must exist for any feasible belief space trajectory, so this step preserves feasibility and completeness of our local optimization algorithm. Although linearizing the constraints is lossy and is not necessary if e.g. SCP is used to solve \cref{prob: reformulated} directly instead of tightening to the SDP given in \cref{prob: steer_edge}, performing this constraint linearization transforms each step of the local optimization algorithm into a convex SDP, greatly improving computational efficiency.

The first step of our local optimization algorithm is to shorten each edge in the trajectory, checking if a lower-cost node-to-node maneuver exists with a shorter trajectory length by solving \cref{prob: steer_edge} for $N \in [N_c]$, where $N_c$ is equal to the current edge length, and selecting the trajectory length with the lowest cost. Next, given edges $e_i, e_j$ such that $j > i + 1$, we search for a shortcut by solving \cref{prob: steer_edge} with $\mu_1 = \mu_1^{(i)}, \mu_N = \mu_N^{(j)}$ for $N \in [N_{\max}]$, with the safe polytopes and constraint linearization for the first half of the trajectory taken from the first half of $e_i$ and the safe polytopes and constraint linearization for the second half of the trajectory taken from the second half of $e_j$.

After enumerating all potential shortcuts, we use uniform cost search over a graph consisting of the shortened edges and candidate shortcut edges (if any were found) to find the minimum-cost sequence of edges from the start node to the goal node. Finally, we concatenate the constraints and time allocations from each edge and solve \cref{prob: steer_edge} for a fixed final time and sequence of constraints, producing a locally optimal trajectory. Then, we convert the resulting trajectory back into a sequence of edges based on the time allocations and constraints for each edge prior to the local optimization step. If the sequence of edges has changed, we return to the edge shortening step and repeat; otherwise, the algorithm has converged and returns the locally optimal trajectory.
\section{Theoretical Results}\label{sec: theoretical_results}
PRISM constructs the mean steering graph offline (\cref{sec: approach}, \cref{alg: prism_offline_phase}) 
and then finds a feasible path online, as described in \cref{alg: prism_online}, before performing constraint linearization and local optimization as described in \cref{subsec: local_refinement}. 

Assuming $(A, B)$ is controllable and $\delta > 0$, in the absence of obstacles and other position constraints, PRISM can always find a path from the start distribution $\mathcal{I} := \mathcal{N}(\mu_\mathcal{I}, \Sigma_\mathcal{I})$ to the goal distribution $\mathcal{G} := \mathcal{N}(\mu_\mathcal{G}, \Sigma_\mathcal{G})$ in finite time if $\mu_\mathcal{I}$ and $\mu_\mathcal{G}$ are stationary and finite, $\Sigma_\mathcal{G} \succeq c_{1, \delta}^{(N)}I$, and $\Sigma_\mathcal{I} \preceq c_{2, \delta}^{(N)}I$.
\begin{theorem}\label{thm: completeness_without_obstacles}
(Completeness of PRISM without obstacles) Suppose there are no obstacles or position constraints, $\mu_\mathcal{I}$ and $\mu_\mathcal{G}$ are stationary with $||\mu_\mathcal{I} - \mu_\mathcal{G}||_2 \leq B$ for finite $B$, $\Sigma_\mathcal{G} \succeq c_{1, \delta}^{(N)}I$ and $\Sigma_\mathcal{I} \preceq c_{2, \delta}^{(N)}I$, \cref{asm: controllability} holds, and $\delta > 0$. PRISM will find a path from $\mathcal{I}$ to $\mathcal{G}$ in finite time.
\end{theorem}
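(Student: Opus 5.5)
The plan is to follow the three phases that PRISM runs in the obstacle-free case (\cref{subsec: planning_without_position_constraints}) and show that each one terminates after finitely many steps with a feasible trajectory. The final step is to check that concatenating the pieces yields a sound trajectory ending at $\mathcal{G}$.

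\textbf{Phase 1: stopping.} Because $\mu_\mathcal{I}$ is stationary, Algorithm~\ref{alg: prism_online} skips \cref{prob: stopping_phase_convex_no_pos_constraints}; equivalently, that problem is feasible at the first iterate $T=N$ with $\mathbf{v}_k=0$. With $\mathbf{v}_k=0$ we have $\mu_k=\mu_\mathcal{I}$ for all $k$, and since $S\mu_\mathcal{I}=0$ the non-position state constraints involve only $\Sigma_k$.

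\textbf{Phase 2: shrinking.} Since $\Sigma_\mathcal{I}\preceq c_{2,\delta}^{(N)}I$ and $\delta>0$, \cref{corollary: finite_time_shrinkability} gives a constraint-satisfying policy that reaches $\Sigma\preceq(c_{1,\delta}^{(N)}-\delta)I$ within $\mathbb{T}(\lambda_{\max}(\Sigma_\mathcal{I}),c_{1,\delta}^{(N)},\delta,N)$ steps. This bound is finite because $\lambda_{\max}(\Sigma_\mathcal{I})\le c_{2,\delta}^{(N)}<\infty$.

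We must also show that the greedy loop, which repeatedly solves \cref{prob: shrinking_without_position_constraints}, terminates. The idea is to argue inductively, as in the proof of \cref{lemma: delta_shrinkability_induction}. While $c=\lambda_{\max}(\Sigma_T)>c_{1,\delta}^{(N)}$, the current covariance satisfies $\Sigma_T\preceq cI$, and the $N$-step policy that $\delta$-strictly shrinks $cI$ applies to $\Sigma_T$ as well. This uses monotonicity of \eqref{eq: covariance_dynamics} in $\Sigma$ for a fixed gain $K_k$, together with the fact that all constraints are monotone in $\Sigma_k$. Hence the minimizer of $\lambda_{\max}(\Sigma_N)$ decreases $c$ by at least $\delta$ in each block of $N$ steps. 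Therefore at most $\lceil (c_{2,\delta}^{(N)}-c_{1,\delta}^{(N)})/\delta\rceil$ iterations are needed.

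If $c$ falls below $c_{1,\delta}^{(N)}$, then $\Sigma\preceq c_{1,\delta}^{(N)}I$. \cref{corollary: maintainability}, again combined with monotonicity, lets us hold the covariance below $c_{1,\delta}^{(N)}I$ indefinitely by repeating the fixed policy $K^{(c_{1,\delta})}_{1:N-1}$. This is exactly the covariance trajectory that Algorithm~\ref{alg: prism_mean_steering} reuses.

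\textbf{Phase 3: mean steering.} With no position constraints, we choose $\mathcal{S}$ to be a bounded polytope, for example a box of side length larger than $2B$ centered at $\mu_\mathcal{I}$. Both stationary means $\mu_\mathcal{I}$ and $\mu_\mathcal{G}$ then lie in $\text{Int}(\mathcal{S})$. The tightened sets $\mathcal{U}^-$ and $\mathcal{X}_s^-$ contain the origin with nonempty interior. This follows because the policy computed for $c_{1,\delta}^{(N)}I$ satisfies \eqref{eq: control_constraint_fixed_v_delta} and \eqref{eq: non_state_constraint_fixed_delta} with margin $\delta>0$, so each tightened bound $b-\Phi^{-1}(1-\epsilon)\delta_b$ is at least $\delta>0$.

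\cref{thm: controllability} then shows that \cref{prob: mean_steering_single_polytope} is feasible for some finite $T$. Since Algorithm~\ref{alg: prism_mean_steering} increments $t$ and only needs lengths of the form $t(N-1)+1$, it terminates once $t(N-1)+1\ge T$. Padding a feasible mean trajectory by holding it at the stationary goal with $\mathbf{v}=0$ preserves feasibility, so every sufficiently large length works.

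\textbf{Soundness and conclusion.} Combining this mean trajectory with the periodic covariance trajectory gives a sound edge. The reason is that the tightening in Algorithm~\ref{alg: prism_constraint_tightening} exactly absorbs the covariance terms in \eqref{eq: nonpos_state_chance_constraint} and \eqref{eq: control_chance_constraint}. The final covariance satisfies $\Sigma\preceq c_{1,\delta}^{(N)}I\preceq\Sigma_\mathcal{G}$, so the boundary condition of \cref{def: soundness} holds. Concatenating the three phases gives a path from $\mathcal{I}$ to $\mathcal{G}$ of finite length, found after finitely many finite-dimensional SDP solves.

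\textbf{Main obstacle.} The hardest step is the termination argument for the greedy shrinking loop. It requires that feasibility for $cI$ transfers to any $\Sigma\preceq cI$, which rests on monotonicity of the covariance propagation and of the constraints. It also requires that the decrease per block is uniformly at least $\delta$, which rests on $\lambda_{\max}(\Sigma_N)\le c-\delta$ from the $cI$ policy. I would prove both properties as a small auxiliary monotonicity claim before assembling the phases.
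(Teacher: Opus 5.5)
Your proposal is correct and follows the same route as the paper's proof. You shrink $\Sigma_\mathcal{I}$ in place to $\preceq c_{1,\delta}^{(N)}I$ via \cref{corollary: finite_time_shrinkability}, connect $\mathcal{N}(\mu_\mathcal{I}, c_{1,\delta}^{(N)}I)$ to $\mathcal{N}(\mu_\mathcal{G}, c_{1,\delta}^{(N)}I)$ with \cref{alg: prism_mean_steering} inside a bounded polytope containing both means using \cref{thm: controllability}, and conclude from $c_{1,\delta}^{(N)}I \preceq \Sigma_\mathcal{G}$. Your extra details fill in steps the paper leaves implicit rather than change the approach: fixed-gain monotonicity for termination of the greedy loop, which also relies on every $c\in[c_{1,\delta}^{(N)},c_{2,\delta}^{(N)}]$ being $\delta$-strictly shrinkable (true by joint convexity of \cref{prob: covariance_shrinking_delta} in $c$ and the decision variables); the margin-$\delta$ argument that $\mathcal{U}^-$ and $\mathcal{X}_s^-$ contain the origin in their interior; and padding at the stationary goal.
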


In the presence of a finite number of obstacles or other position constraints, PRISM can always find a path in finite time if the above assumptions hold, $\mu_\mathcal{I}, \mu_\mathcal{G} \in \text{Int}(\mathcal{X}_p^- \setminus \mathcal{P}^+)$, \cref{asm: connectedness} holds, and $\Sigma_\mathcal{I} \preceq cI$ with $\Phi^{-1}(1-\epsilon)\sqrt{r^\star(c)} = d$, where $\mathbb{B}_p(P\mu_\mathcal{I}, d) \subset (\mathcal{X}_p \setminus \mathcal{P})$. 
\begin{theorem}\label{thm: completeness_with_obstacles}
(Completeness of PRISM in presence of obstacles)
Suppose that $\mu_\mathcal{I}$ and $\mu_\mathcal{G}$ are stationary and $\mu_\mathcal{I}, \mu_\mathcal{G} \in \text{Int}(\mathcal{X}_p^- \setminus \mathcal{P}^+)$, Assumptions \ref{asm: connectedness} and \ref{asm: controllability} hold, $\delta > 0$, and $\Sigma_\mathcal{G} \succeq c_{1, \delta}^{(N)}I$. Finally, suppose that $\Sigma_\mathcal{I} \preceq cI$ with $\Phi^{-1}(1-\epsilon)\sqrt{r^\star(c)} = d$, where $\mathbb{B}_p(P\mu_\mathcal{I}, d) \subset (\mathcal{X}_p \setminus \mathcal{P})$. Then, PRISM will find a path from $\mathcal{I}$ to $\mathcal{G}$ in finite time.
\end{theorem}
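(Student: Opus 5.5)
The plan is to follow \cref{alg: prism_online} phase by phase. For each phase, I would show that the loop it runs terminates after finitely many iterations and that the pieces it produces can be concatenated into a sound trajectory in the sense of \cref{def: soundness}.

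\textbf{Phase 1: the stopping phase.} Since $\mu_\mathcal{I}$ is stationary and $\Sigma_\mathcal{I} \preceq cI$ with $\mathbb{B}_p(P\mu_\mathcal{I}, d) \subset \mathcal{X}_p \setminus \mathcal{P}$, I would take the safe polytope returned by FIND-SAFE-POLYTOPE to be a convex polytope $\mathcal{S}$ with $\mathbb{B}_p(P\mu_\mathcal{I}, d) \subset \mathcal{S} \subseteq \mathcal{X}_p\setminus\mathcal{P}$. One must check that such an $\mathcal{S}$ exists: the ball is convex and lies in the complement of the obstacles, so a suitable circumscribing polytope inside the free space works.

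I also need $r^\star(\lambda_{\max}(\Sigma_\mathcal{I})) \le r^\star(c)$. For this I would argue that $r^\star(\cdot)$ is monotone in the initial isotropic bound. The argument is that a policy feasible from $cI$ remains feasible, with a smaller covariance trajectory, from any $\Sigma_1 \preceq cI$, because the covariance recursion \cref{eq: covariance_dynamics} is monotone in $\Sigma_k$ for fixed $K_k$. Given this, the boolean \texttt{safe} is true at line 3 and the first while loop is skipped.

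\textbf{Phase 2: covariance shrinking.} The hypotheses of \cref{corollary: constrained_path_existence} now hold with $d_1 = d$ and $c_{d_1} = c$, provided $c$ lies in $[c_{1,\delta}^{(N)}, c_{2,\delta}^{(N)}]$. If $c$ is below $c_{1,\delta}^{(N)}$, the shrinking loop is skipped entirely, and \cref{corollary: maintainability} keeps the covariance below $c_{1,\delta}^{(N)}I$.

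Repeatedly solving \cref{prob: shrinking_subject_to_constraints} is feasible at each iteration. The reason is that the shrinking policy guaranteed by \cref{corollary: constrained_path_existence} and \cref{lemma: delta_shrinkability_induction} is a feasible point. The minimizer therefore decreases $\lambda_{\max}$ by at least $\delta$ per $N-1$ steps, so the loop exits after at most $\lceil (c - c_{1,\delta}^{(N)})/\delta\rceil$ iterations. This gives $\Sigma_T \preceq c_{1,\delta}^{(N)} I$ while the mean stays at $\mu_\mathcal{I}$.

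\textbf{Phase 3: graph connection.} Because $\mu_\mathcal{I}, \mu_\mathcal{G} \in \text{Int}(\mathcal{X}_p^- \setminus \mathcal{P}^+)$, \cref{lemma: decomposition} under \cref{asm: connectedness} supplies cells $\mathcal{S}_i \ni \mu_\mathcal{I}$ and $\mathcal{S}_j \ni \mu_\mathcal{G}$. I need these means to lie in the cell interiors, or at least in the interior of some cell, which I would argue holds because the cover consists of overlapping full-dimensional polytopes.

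\cref{thm: controllability} with \cref{asm: controllability} then gives finite-horizon mean trajectories from $\mu_\mathcal{I}$ to the centroid $\mu_i$ and from $\mu_j$ to $\mu_\mathcal{G}$. This uses that $\mathcal{U}^-$ and $\mathcal{X}_s^-$ contain the origin with nonempty interior, which follows from the feasibility of \cref{prob: get_c_1_delta} with $\delta>0$. \cref{alg: prism_mean_steering} increments $t$ until success, so it terminates.

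For the graph-search segment, I would show the PRISM graph is connected. The cells cover the connected set $\mathcal{X}_p^-\setminus\mathcal{P}^+$, and any bipartition of the cells has overlapping interiors by \cref{lemma: decomposition}, so the intersection graph of cells is connected. Every intersecting pair receives an edge by the same argument applied to \cref{prob: mean_steering}. Since the number of cells is finite, graph search terminates.

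\textbf{Soundness of the concatenated trajectory.} Each lifted edge pairs a tightened-constraint mean trajectory with the periodic covariance trajectory $\Sigma^{(c_{1,\delta})}$. Adding back the margins $\delta_b$ and the inflation $\Phi^{-1}(1-\epsilon)\sqrt{r^\star}$ shows the original chance constraints hold; this is the content of \cref{lemma: mean_soundness}. The final covariance is $\preceq c_{1,\delta}^{(N)}I \preceq \Sigma_\mathcal{G}$, so the goal boundary constraint holds.

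The step I expect to be the main obstacle is Phase 1 together with the transition into Phase 2. It requires rigorously establishing the monotonicity of $r^\star$ and the existence of the polytope $\mathcal{S}$. It also requires showing that every intermediate covariance in the shrinking phase still satisfies the ball condition. That last point holds because $\lambda_{\max}$ is nonincreasing along the iterations and $r^\star$ is monotone.
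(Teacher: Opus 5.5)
Your phase decomposition is the same as the paper's proof. Separating hyperplanes give $\mathcal{S}\supseteq\mathbb{B}_p(P\mu_\mathcal{I},d)$. Repeated solves of \cref{prob: shrinking_subject_to_constraints} are justified by \cref{corollary: constrained_path_existence}. \cref{lemma: decomposition}, \cref{thm: controllability} and \cref{lemma: graph_connectivity} give the connector edges and the graph path, and $c_{1,\delta}^{(N)}I\preceq\Sigma_\mathcal{G}$ closes the goal constraint.

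\textbf{Monotonicity of $r^\star$.} The one step you add beyond the paper is not established by your argument. To show that the \texttt{safe} test at line 3 of \cref{alg: prism_online} passes, and that every intermediate level keeps the ball condition, you claim $r^\star$ is monotone because a policy feasible from $cI$ stays feasible from any $\Sigma_1\preceq cI$. But \cref{prob: covariance_shrinking_r}, which defines $r^\star$, contains the terminal constraint $\lambda_{\max}(\Sigma_N)\le\lambda_{\min}(\Sigma_1)-\delta$, and this constraint tightens as $\Sigma_1$ shrinks. The policy from $cI$ only yields $\Sigma_N\preceq(c-\delta)I$, not $\Sigma_N\preceq(\lambda_{\min}(\Sigma_1)-\delta)I$. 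The paper records exactly this caveat in its proof of \cref{lemma: delta_shrinkability_induction}.

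The paper does not supply this step either. Its proof passes over the stopping phase entirely. Moreover, \cref{corollary: constrained_path_existence}, via the argument of \cref{lemma: delta_shrinkability_induction}, tacitly needs every level $\lambda_{\max}(\Sigma_1)-\delta i$ to be $\delta$-strictly shrinkable in $\mathcal{S}$.

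What you can prove is weaker. With $\Sigma_1=cI$ the terminal constraint becomes the LMI $\Sigma_N\preceq(c-\delta)I$. All constraints of \cref{prob: covariance_shrinking_r} are therefore jointly convex in $(c,\Sigma_{1:N},U_{1:N-1},Y_{1:N-1},r)$, so $r^\star$ is convex on $[c_{1,\delta}^{(N)},c_{2,\delta}^{(N)}]$. This gives $r^\star(c')\le\max\{r^\star(c_{1,\delta}^{(N)}),r^\star(c)\}$ for $c'\in[c_{1,\delta}^{(N)},c]$. Hence the monotonicity you need on that interval is equivalent to the single inequality $r^\star(c_{1,\delta}^{(N)})\le r^\star(c)$, i.e.\ $\Phi^{-1}(1-\epsilon)\sqrt{r^\star(c_{1,\delta}^{(N)})}\le d$. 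That condition is checkable, but it must be proved or added as a hypothesis. It cannot be derived from monotonicity of the covariance recursion. With it, both the line-3 test and each shrinking iteration follow from \cref{lemma: r_delta_strict_shrinkability}.

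\textbf{Interior of a cell.} Your reason that $\mu_\mathcal{I}$ and $\mu_\mathcal{G}$ lie in the interior of some cell (``overlapping full-dimensional polytopes'') also does not suffice. A point of $\text{Int}(\mathcal{X}_p^-\setminus\mathcal{P}^+)$ can lie on a facet shared by the only cells that contain it, while the overlaps required by \cref{lemma: decomposition} occur elsewhere. Yet \cref{thm: controllability} needs interior membership. The paper asserts this step without argument as well.

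A local repair is available. Take a small polytope $Q$ with $\mu_\mathcal{I}\in\text{Int}(Q)$ and $Q\subseteq\mathcal{X}_p^-\setminus\mathcal{P}^+$. It is covered by finitely many closed cells, and a finite union of closed sets with empty interior has empty interior. So some cell $\mathcal{S}_i$ satisfies $\text{Int}(Q\cap\mathcal{S}_i)\neq\emptyset$, and \cref{prob: mean_steering} on the pair $(Q,\mathcal{S}_i)$ gives the connector edge. The goal side is handled the same way. This requires a small change to lines 17--19 of \cref{alg: prism_online}.
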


Further, PRISM is guaranteed to return a sound trajectory that obeys all mean dynamics constraints, covariance dynamics constraints, obstacle chance constraints, state chance constraints, and control chance constraints.
\begin{theorem}
\label{thm: soundness}
(Soundness of PRISM) Any path found by PRISM is sound, satisfying \cref{def: soundness}.
\end{theorem}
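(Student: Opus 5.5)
The plan is to decompose any path returned by PRISM into its constituent segments and to show that each segment is sound on its own. Soundness then passes to the concatenation, because every segment ends at exactly the mean and covariance where the next one begins. A PRISM path (\cref{alg: prism_online}) is made of five kinds of pieces:
\begin{itemize}
\item[(i)] the stopping maneuver obtained from \cref{prob: stopping_phase_convex}, or from \cref{prob: stopping_phase_convex_no_pos_constraints} when there are no position constraints;
\item[(ii)] the covariance shrinking segments obtained from \cref{prob: shrinking_subject_to_constraints}, or from \cref{prob: shrinking_without_position_constraints};
\item[(iii)] the connecting edges $e_i, e_j$ and the graph edges produced by \cref{alg: prism_mean_steering};
\item[(iv)] the final step that joins $\mathcal{N}(\mu_\mathcal{G}, c_{1,\delta}^{(N)}I)$ to $\mathcal{G}$;
\item[(v)] the output of the local optimizer.
\end{itemize}

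\textbf{Boundary conditions.} I would check these first. We have $\mu_1=\mu_\mathcal{I}$ and $\Sigma_1=\Sigma_\mathcal{I}$, which gives $\Sigma_1\succeq\Sigma_\mathcal{I}$. The final edge ends at $\mu_\mathcal{G}$ with covariance $\preceq c_{1,\delta}^{(N)}I \preceq \Sigma_\mathcal{G}$. This uses the monotonicity $\Sigma_N^{(c_{1,\delta})}\preceq c_{1,\delta}^{(N)}I$ from \cref{corollary: maintainability} together with the standing assumption $\Sigma_\mathcal{G}\succeq c_{1,\delta}^{(N)}I$. If that assumption is not available, the segment is the one PRISM explicitly constrains.

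\textbf{Segments (i), (ii) and (v).} These are feasible solutions of SDPs that impose \eqref{eq: mean_dynamics}, \eqref{eq: convex_covariance_dynamics} and \eqref{eq: schur_complement_constraint}, together with either the linearized chance constraints \eqref{eq: convex_pos_chance_constraint}--\eqref{eq: convex_ctrl_chance_constraint} or their exact squared forms. For the linearized forms, since $\mathbb{M}_1(a,M_r,M_k)\ge\sqrt{a^TM_ka}$ (the tangent line lies above the concave square root), satisfying the convexified constraints implies \eqref{eq: pos_chance_constraint}--\eqref{eq: control_chance_constraint}. For the squared forms, \eqref{eq: control_constraint_fixed_v_delta}, \eqref{eq: non_state_constraint_fixed_delta} and \eqref{eq: position_constraint_fixed_delta} with $\mathbf{v}_k=0$ and $\mu_k=\mu_1$ stationary (so $S\mu_k=0$) are, after taking square roots, exactly the original constraints with margin $\delta\ge 0$. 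This requires the right-hand sides to be nonnegative, which holds because the mean lies in the safe set. The remaining step is to recover $K_k=U_k\Sigma_k^{-1}$ from the relaxation $Y_k\succeq U_k\Sigma_k^{-1}U_k^T$. I would argue that one may take $Y_k=U_k\Sigma_k^{-1}U_k^T$ without violating any constraint, because both the covariance update and the control constraints are monotone in $Y_k$. Replacing $Y_k$ by this smaller value only decreases $\Sigma_{k+1}$, and I would then argue that the resulting covariance sequence still satisfies all constraints. An alternative is to cite the losslessness result of \cite{rapakoulias2023discrete, liu2024optimal}.

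\textbf{Edges from \cref{alg: prism_mean_steering}.} I expect this to be the main obstacle. The mean solves \cref{prob: mean_steering} with the tightened sets $\mathcal{U}^-$ and $\mathcal{X}_s^-$ and with positions in $\mathcal{S}_i,\mathcal{S}_j\subseteq\mathcal{X}_p^-\setminus\mathcal{P}^+$. The covariance is the periodic repetition of $K^{(c_{1,\delta})}_{1:N-1}$ started from $c_{1,\delta}^{(N)}I$. The key step is to show by induction over each block of $N-1$ steps that the propagated covariance at every time is bounded above by the corresponding $\Sigma_k^{(c_{1,\delta})}$. Each block starts from a covariance $\preceq c_{1,\delta}^{(N)}I$ by \cref{corollary: maintainability}, and the affine covariance map under a fixed gain preserves the L\"owner order. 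Given this bound, each chance constraint splits into a mean term plus $\Phi^{-1}(1-\epsilon)$ times a standard deviation term. The standard deviation term is at most the margin $\delta_{b}$ computed in \cref{alg: prism_constraint_tightening}, and the mean term satisfies the tightened bound, so the sum obeys the original bound; this handles the control and non-position state constraints. For the position constraints, $P\Sigma_kP^T\preceq r^\star I$ gives a standard deviation of at most $\sqrt{r^\star}$ in every unit direction. The mean lies outside $\mathcal{P}^+$, so the ball $\mathbb{B}_p(P\mu_k,\Phi^{-1}(1-\epsilon)\sqrt{r^\star})$ stays inside the inflated polytopes $\mathcal{S}^+$ produced by INFLATE, and these inflated polytopes are subsets of $\mathcal{X}_p\setminus\mathcal{P}$. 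This yields \eqref{eq: pos_chance_constraint} for the polytope sequence that the edge stores. Finally, the local optimizer preserves soundness because each of its steps is a feasible SDP of the form \cref{prob: steer_edge} warm-started from a sound trajectory, so the same argument as for segments (i), (ii) and (v) applies.
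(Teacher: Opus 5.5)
Your proposal is correct and follows essentially the paper's route. It uses the same decomposition into stopping, shrinking, and \cref{alg: prism_mean_steering} edges, the squared-margin argument with $\mathbf{v}_k=0$ and $S\mu_k=0$, the periodic-gain L\"owner-order argument of \cref{lemma: mean_soundness}, and \cref{lemma: recursive_feasibility} for local optimization. In places it is more careful than the paper: you use the correct direction $\mathbb{M}_1(a,M_r,M_k)\geq\sqrt{a^TM_ka}$, you recover $K_k$ from the $Y_k$ relaxation by a monotonicity argument, and you make explicit the need for $\Sigma_\mathcal{G}\succeq c_{1,\delta}^{(N)}I$. Your fallback ``the segment is the one PRISM explicitly constrains'' has no counterpart in \cref{alg: prism_online}, so that assumption is actually required.

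Two small corrections. First, covariances at junctions agree only up to $\preceq$: the shrinking loop stops at $\Sigma_T\preceq c_{1,\delta}^{(N)}I$ while $e_i$ starts from $c_{1,\delta}^{(N)}I$. Concatenation therefore rests on the boundary condition $\Sigma_1\succeq\Sigma_\mathcal{I}$ plus the monotonicity you already invoke, not on exact equality. Second, nonnegativity of $b_j^u-\delta$ and $b_i^{x_s}-\delta$ is a smallness requirement on $\delta$, not a consequence of the mean lying in the safe set.
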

These theorems are supported by several lemmas:
\begin{lemma}\label{lemma: graph_connectivity}
(Connectedness of the mean space graph) Suppose \cref{asm: controllability} and \cref{asm: connectedness} hold and $\delta > 0$. Then, \cref{alg: prism_offline_phase} will return a mean space graph with vertices $\mathcal{V}$ and edges $\mathcal{E}$ such that the mean space graph is strongly connected; namely, $\forall v_i, v_j \in \mathcal{V}$, there exists a sequence of directed edges in $\mathcal{E}$ connecting $v_i$ to $v_j$, and a second sequence of directed edges connecting $v_j$ to $v_i$.
\end{lemma}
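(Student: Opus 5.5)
The argument has two layers. First, every edge that \cref{alg: prism_offline_phase} attempts is eventually produced in finite time. Second, the pattern of attempted edges (one per ordered pair of cells with intersecting interiors) yields a strongly connected graph, because the cell decomposition is "overlap-connected."

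Start with the decomposition. By \cref{asm: connectedness} and \cref{lemma: decomposition}, $\mathcal{X}_p^- \setminus \mathcal{P}^+ = \bigcup_{i=1}^{M_{s-}} \mathcal{S}_i$, and every two-part split $\mathcal{S}_A \cup \mathcal{S}_B$ of the union into subunions satisfies $\text{Int}(\mathcal{S}_A \cap \mathcal{S}_B) \neq \emptyset$.

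Define the undirected overlap graph $H$ on the cells, with $i \sim j$ iff $\text{Int}(\mathcal{S}_i \cap \mathcal{S}_j) \neq \emptyset$. Suppose $H$ were disconnected, with a component $C$ and nonempty complement $C^c$. Take $\mathcal{S}_A = \bigcup_{i\in C}\mathcal{S}_i$ and $\mathcal{S}_B = \bigcup_{j \in C^c}\mathcal{S}_j$. The lemma gives $\text{Int}(\mathcal{S}_A\cap\mathcal{S}_B)\neq\emptyset$.

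Now use finiteness. $\mathcal{S}_A \cap \mathcal{S}_B = \bigcup_{i\in C, j\in C^c} (\mathcal{S}_i\cap\mathcal{S}_j)$ is a finite union of closed convex sets. A finite union of closed sets, each with empty interior, has empty interior (Baire, or a direct argument for polytopes). So some $\mathcal{S}_i\cap\mathcal{S}_j$ with $i\in C$, $j \in C^c$ has nonempty interior, which contradicts $C$ being a component. Hence $H$ is connected.

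Next, show each attempted edge terminates. For $i,j$ with $\text{Int}(\mathcal{S}_i\cap\mathcal{S}_j)\neq\emptyset$, the node means $\mu_i,\mu_j$ are centroids of convex polytopes. A centroid lies in the interior when the cell has nonempty interior, which holds by the tube-of-radius-$\epsilon_p$ structure. Pick a point $\bar\mu$ in $\text{Int}(\mathcal{S}_i\cap\mathcal{S}_j)$ and lift it to a stationary state; positions can be lifted since stationary points have $S\mathbf{x}=0$ and $P$ picks out position. Node means are taken as stationary states in the same way.

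Because $\delta>0$ and \cref{corollary: maintainability} gives a feasible $c_{1,\delta}^{(N)}$ policy, the constraints tightened by \cref{alg: prism_constraint_tightening} keep $\mathcal{U}^-$ and $\mathcal{X}_s^-$ containing the origin with nonempty interior. I expect the $\delta$-margins in \eqref{eq: control_constraint_fixed_v_delta} and \eqref{eq: non_state_constraint_fixed_delta} to give exactly this slack. \cref{thm: controllability} then gives finite $T_1$ steering $\mu_i\to\bar\mu$ within $\mathcal{S}_i$ and finite $T_2$ steering $\bar\mu\to\mu_j$ within $\mathcal{S}_j$.

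Pad the shorter segment by holding the stationary point with $\mathbf{v}_k=0$, which remains feasible. This makes \cref{prob: mean_steering} feasible for every $T\ge\max(T_1,T_2)$. Since \cref{alg: prism_mean_steering} tries $T=t(N-1)+1$ with $t$ increasing, it succeeds at some finite $t$. The lifting steps (propagating the repeated $K^{(c_{1,\delta})}$ gains, inflating the sets) are pure computation and cannot fail.

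Finally, \cref{alg: prism_offline_phase} loops over all ordered pairs $(i,j)$. So for each undirected edge of $H$ it adds both directed edges $v_i\to v_j$ and $v_j\to v_i$. Connectivity of $H$ then gives directed paths both ways between any two vertices, i.e.\ strong connectivity.

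The main obstacle is the second step, specifically verifying the hypotheses of \cref{thm: controllability}:
\begin{itemize}
\item that $\mathcal{U}^-$ and $\mathcal{X}_s^-$ have the origin in their interior after tightening, which requires a strict inequality that should follow from $\delta>0$;
\item that centroids are stationary and interior;
\item that each cell is bounded (of finite volume), which should follow from $\mathcal{X}_p$ being a bounded polytope.
\end{itemize}
I would settle these first, treating stationarity of nodes as part of the construction.
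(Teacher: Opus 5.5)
Your proposal is correct and follows essentially the same route as the paper. The paper decomposes $\mathcal{X}_p^- \setminus \mathcal{P}^+$ via \cref{lemma: decomposition}, uses $\delta>0$ and \cref{thm: controllability} to obtain edges in both directions between every pair of cells with overlapping interiors, and concludes strong connectivity from the split property of the cover. Your component/finite-union argument, the padding argument for feasibility at every larger $T$, and the ordered-pair observation fill in steps the paper simply asserts. Stationarity of the centroid nodes and intermediate points is an implicit assumption you share with the paper; note that it requires $P$ to map $\ker(A-I)$ onto $\mathbb{R}^p$, which does not follow from $S\mathbf{x}=0$ at stationary points as your lifting remark suggests.
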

\begin{lemma}\label{lemma: mean_soundness}
(Soundness of mean space planning) Suppose that $\mu_\mathcal{I}$ is stationary and $\Sigma_\mathcal{I} = c_{1, \delta}^{(N)} I$ and that $\mathcal{J} = \mathcal{N}(\mu_\mathcal{J}, \Sigma_\mathcal{J})$ has stationary mean $\mu_\mathcal{J} = A\mu_\mathcal{J}$ and covariance $\Sigma_\mathcal{J} = c_{1, \delta}^{(N)}I$. Suppose that $\mu_\mathcal{I} \in \text{Int}(\mathcal{S}_\mathcal{I})$ and $\mu_\mathcal{J} \in \text{Int}(\mathcal{S}_\mathcal{J})$ with $\mathcal{S}_\mathcal{I}, \mathcal{S}_\mathcal{J} \subseteq ( \mathcal{X}_p^- \setminus \mathcal{P}^+)$ and $\text{Int}(\mathcal{S}_\mathcal{I} \cap \mathcal{S}_\mathcal{J}) \neq \emptyset$, that Assumptions \ref{asm: connectedness} and \ref{asm: controllability} hold, and that $\delta > 0$.
Then, \cref{alg: prism_mean_steering} constructs an edge from $\mathcal{I}$ to $\mathcal{J}$ that satisfies the soundness conditions in \cref{def: soundness}.
\end{lemma}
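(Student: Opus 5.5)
The plan is to verify each item of \cref{def: soundness} for the edge $e$ returned by \cref{alg: prism_mean_steering}. The edge combines two pieces: the mean/feedforward trajectory $(\mu_{1:2T-1},\mathbf{v}_{1:2T-2})$ from \cref{prob: mean_steering}, and the covariance/feedback trajectory built by repeating $K_{1:N-1}^{(c_{1,\delta})}$ $2t$ times and propagating from $c_{1,\delta}^{(N)}I$.

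\textbf{Step 1: termination and boundary conditions.} By \cref{thm: controllability} and the discussion following it, there is a finite $T'$ such that \cref{prob: mean_steering} is feasible for every horizon of length at least $2T'-1$. Since $T=t(N-1)+1$ grows without bound in $t$, the while loop terminates at a finite $t$. The resulting mean trajectory satisfies \cref{eq: mean_dynamics}, $\mu_1=\mu_\mathcal{I}$ and $\mu_{2T-1}=\mu_\mathcal{J}$ by construction.

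\textbf{Step 2: covariance dynamics and the periodicity argument.} Here I use \cref{corollary: maintainability}. The policy $K^{(c_{1,\delta})}$ maps $\Sigma_1=c_{1,\delta}^{(N)}I$ to $\Sigma_N\preceq c_{1,\delta}^{(N)}I$. The map $\Sigma\mapsto (A+BK)\Sigma(A+BK)^T+DD^T$ is monotone in the Loewner order. Hence, if block $b$ of $N$ steps starts at $\Sigma^{(b)}_1\preceq c_{1,\delta}^{(N)}I$, then every iterate in that block satisfies $\Sigma^{(b)}_k\preceq \Sigma_k^{(c_{1,\delta})}$. The block boundaries overlap at one index, which is consistent with $T=t(N-1)+1$ and $2T-1=2t(N-1)+1$. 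By induction over the $2t$ blocks:
\begin{itemize}
\item every covariance satisfies $\Sigma_k\preceq \Sigma^{(c_{1,\delta})}_{k'}$ for the corresponding in-block index $k'$;
\item $\Sigma_{2T-1}\preceq c_{1,\delta}^{(N)}I=\Sigma_\mathcal{J}$;
\item $\Sigma_1=\Sigma_\mathcal{I}$;
\item \cref{eq: covariance_dynamics} holds by construction through PROPAGATE.
\end{itemize}
The same domination gives $K_k\Sigma_kK_k^T\preceq K_{k'}\Sigma^{(c_{1,\delta})}_{k'}K_{k'}^T$.

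\textbf{Step 3: chance constraints.} The control and non-position state constraints follow from the tightening in \cref{alg: prism_constraint_tightening}. For the control constraints,
\[
{a_j^u}^T\mathbf{v}_k+\Phi^{-1}(1-\epsilon)\sqrt{{a_j^u}^TK_k\Sigma_kK_k^Ta_j^u}\le b_j^u-\Phi^{-1}(1-\epsilon)\delta_{b_j^u}+\Phi^{-1}(1-\epsilon)\delta_{b_j^u}=b_j^u .
\]
This requires the $\delta_{b}$ subtractions in \cref{prob: mean_steering} to carry the factor $\Phi^{-1}(1-\epsilon)$, i.e.\ to match $\mathcal{U}^-$ and $\mathcal{X}_s^-$. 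If not, I would interpret them that way. The non-position state constraint \cref{eq: nonpos_state_chance_constraint} is handled identically.

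For position, the INFLATE step is defined (as I would define it) to add $\Phi^{-1}(1-\epsilon)\sqrt{r^\star}$ to each facet offset of $\mathcal{S}_i$, $\mathcal{S}_j$ and $\mathcal{S}_i\cap\mathcal{S}_j$. Since $P\Sigma_kP^T\preceq r^\star I$ and $\|a\|_2=1$, each mean constraint ${a^{s}}^TP\mu_k\le b^{s}$ becomes \cref{eq: pos_chance_constraint} with respect to the inflated polytope. The schedule $\mathcal{S}_{1:2T-1}$ matches the index ranges of \cref{prob: mean_steering}: index $T$ lies in both $\mathcal{S}_\mathcal{I}$ and $\mathcal{S}_\mathcal{J}$, hence in their intersection.

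\textbf{Main obstacle: safety of the inflated sets.} The hardest step is showing $\mathcal{S}^+\subseteq\mathcal{X}_p\setminus\mathcal{P}$, which soundness needs for the safe-polytope sequence. The plan is to argue pointwise: any $x\in\mathcal{S}\subseteq\mathcal{X}_p^-\setminus\mathcal{P}^+$ has its $\Phi^{-1}(1-\epsilon)\sqrt{r^\star}$-ball outside every original obstacle, since $x\notin\mathcal{P}_j^+$, and inside $\mathcal{X}_p$, using $\delta_{b_i^{x_p}}\le\sqrt{r^\star}$. Consequently $\mathcal{S}\oplus\mathbb{B}_p(0,\Phi^{-1}(1-\epsilon)\sqrt{r^\star})\subseteq\mathcal{X}_p\setminus\mathcal{P}$.

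There is a subtlety: facet-offset inflation of a polytope can exceed the Minkowski sum with a ball near sharp vertices. For that reason I would define INFLATE as the largest convex polytope containing $\mathcal{S}$ and contained in $(\mathcal{S}\oplus\mathbb{B})\cap(\mathcal{X}_p\setminus\mathcal{P})$ whose facets keep distance $\ge\Phi^{-1}(1-\epsilon)\sqrt{r^\star}$ from $\mathcal{S}$ along each facet normal used in the chance constraint. Alternatively, I would simply impose the chance constraints on the original facets of $\mathcal{S}$ shifted outward, verifying containment facet by facet.
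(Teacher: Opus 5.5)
Your Steps 1--2 and the control/non-position part of Step 3 match the paper's proof. That includes reading the tightening in \cref{prob: mean_steering} as $b-\Phi^{-1}(1-\epsilon)\delta_b$, which is what the paper's proof actually uses. The gap is the position step: you never exhibit a safe-polytope sequence that both lies in $\mathcal{X}_p\setminus\mathcal{P}$ and satisfies \cref{eq: pos_chance_constraint}.

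Facet-offset inflation of $\mathcal{S}$ can hit obstacles, as you note. Your fallback is ill-posed. Write $\rho=\Phi^{-1}(1-\epsilon)\sqrt{r^\star}$. If every unit-normal facet $h^Ty\le g$ of a polytope $\mathcal{Q}$ satisfies $\max_{x\in\mathcal{S}}h^Tx\le g-\rho$, then $h^T(x+z)\le g$ whenever $x\in\mathcal{S}$ and $\|z\|_2\le\rho$. Hence $\mathcal{Q}\supseteq\mathcal{S}\oplus\mathbb{B}_p(0,\rho)$. Also demanding $\mathcal{Q}\subseteq\mathcal{S}\oplus\mathbb{B}_p(0,\rho)$ forces equality with a set whose boundary is curved near the vertices of $\mathcal{S}$, so no such polytope exists for $p\ge 2$. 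Your ``facet by facet'' alternative is just the first definition again.

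Moreover, your pointwise claim that the $\rho$-ball around $x\in\mathcal{X}_p^-$ stays in $\mathcal{X}_p$ uses $\delta_{b_i^{x_p}}\le\sqrt{r^\star}$ in the wrong direction. The set $\mathcal{X}_p^-$ is tightened only by $\Phi^{-1}(1-\epsilon)\delta_{b_i^{x_p}}\le\rho$, so the isotropic ball can leave $\mathcal{X}_p$. Only the directional bound along $a_i^{x_p}$ is available.

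The missing idea, which is the paper's INFLATE, is to build $\mathcal{S}^+$ from the constraints you must certify rather than by moving the facets of $\mathcal{S}$.
\begin{itemize}
\item Facet-offset inflation over-approximates the Minkowski sum, so $\mathcal{P}_j\oplus\mathbb{B}_p(0,\rho)\subseteq\mathcal{P}_j^+$; here the sharp-vertex effect works in your favor.
\item Therefore $\mathcal{S}\cap\mathcal{P}_j^+=\emptyset$ implies the convex sets $\mathcal{S}\oplus\mathbb{B}_p(0,\rho)$ and $\mathcal{P}_j$ are disjoint. They are strictly separated by a hyperplane $h_j^Ty\le g_j$ with $\|h_j\|_2=1$, since one is compact and the other closed.
\item Define $\mathcal{S}^+:=\mathcal{X}_p\cap\bigcap_j\{\mathbf{x}: h_j^TP\mathbf{x}\le g_j\}$. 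Then $\mathcal{S}^+\subseteq\mathcal{X}_p\setminus\mathcal{P}$ by construction.
\item On each separating facet, $h_j^TP\mu_k+\rho\le g_j$ together with $P\Sigma_kP^T\preceq r^\star I$ (from your Step 2 domination) gives \cref{eq: pos_chance_constraint}.
\item On each facet of $\mathcal{X}_p$, use $\mu_k\in\mathcal{X}_p^-$ and $\sqrt{{a_i^{x_p}}^TP\Sigma_kP^Ta_i^{x_p}}\le\delta_{b_i^{x_p}}$. This gives the constraint directionally, exactly as you handled $\mathcal{U}^-$ and $\mathcal{X}_s^-$.
\end{itemize}
Applying this construction to $\mathcal{S}_i$, $\mathcal{S}_i\cap\mathcal{S}_j$, and $\mathcal{S}_j$ with your index schedule closes the argument.
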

\begin{lemma}\label{lemma: recursive_feasibility}
(Recursive feasibility of local optimization) Local optimization preserves soundness at every intermediate step and is guaranteed to return a solution. If the input trajectory $\mathcal{T}$ satisfies the soundness conditions  in \cref{def: soundness}, at every step of local optimization, the local optimization module will yield an output trajectory $\mathcal{T}$ that is also sound.
\end{lemma}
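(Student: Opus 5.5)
The plan is an induction over the steps of the local optimization loop. The loop invariant is: at the start of each step the current trajectory is sound in the sense of \cref{def: soundness}, and it is a feasible point of the convex program that step solves. Every step either solves an instance of \cref{prob: steer_edge} or selects among sound edges by uniform cost search. Every constraint of \cref{prob: steer_edge} implies the corresponding soundness constraint: the tangent-line surrogates satisfy $\mathbb{M}_1(a,M_r,M_k)\ge\sqrt{a^TM_ka}$, so \cref{eq: convex_pos_chance_constraint}--\cref{eq: convex_ctrl_chance_constraint} imply \cref{eq: pos_chance_constraint}--\cref{eq: control_chance_constraint}. The Schur complement relaxation \cref{eq: schur_complement_constraint} can be made tight by recovering $K_k=U_k\Sigma_k^{-1}$ and using $Y_k\succeq U_k\Sigma_k^{-1}U_k^T$. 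Together these give: any feasible point of \cref{prob: steer_edge} is sound. Soundness of each output therefore reduces to feasibility of each subproblem, and feasibility follows from exhibiting the incoming trajectory as a feasible point.

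The argument proceeds in the order the algorithm runs.

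\textbf{Preprocessing.} Given a sound input, choose reference values $\Sigma_{r,k}=\Sigma_k$ and $Y_{r,k}=Y_k$ taken from the input itself. The surrogate then equals the true square root, so the input is feasible for the linearized SDP.

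\textbf{Edge shortening.} For each edge, the current length $N_c$ lies in the search range $[N_c]$, and the current edge is feasible for \cref{prob: steer_edge} at $N=N_c$. The minimizer over lengths therefore exists, is sound, and has cost no larger than the current edge.

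\textbf{Shortcutting.} Candidate shortcuts are only added when \cref{prob: steer_edge} returns a feasible, hence sound, solution. The search graph still contains the original shortened path, so uniform cost search returns a concatenation of sound edges connecting start to goal.

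\textbf{Joint local optimization.} Concatenate the constraints and time allocations of the chosen edges. The concatenated current trajectory is a feasible point of the joint problem, so its optimum exists and is sound.

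A concatenation of sound edges is sound provided the boundary conditions match at each junction. This uses that means coincide and that $\Sigma_1\succeq\Sigma_\mathcal{I}$ and $\Sigma_N\preceq\Sigma_\mathcal{G}$ are monotone under \cref{eq: covariance_dynamics} with fixed gains. So a downstream edge started from a smaller covariance remains feasible: its covariances shrink, so the chance constraints only loosen, and its terminal covariance stays below the goal.

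Termination (``guaranteed to return a solution'') comes from two facts. Each step's cost is non-increasing, since the previous iterate is feasible. The loop restarts only when the edge sequence changes; I would argue the edge sequence can change only finitely often because there are finitely many candidate edges and time allocations, bounded by $N_{\max}$ and the finite graph. If a strict monotone-cost argument is needed to rule out cycling, I would add a tie-breaking rule: keep the incumbent unless the cost strictly decreases.

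\textbf{Main obstacle.} I expect the hardest step to be the junction issue in concatenation. Once a shortcut or reoptimized edge ends at a covariance different from the original node covariance, the downstream linearization references no longer equal the actual covariances. I would handle this with the monotonicity argument: the covariance recursion $\Sigma\mapsto(A+BK)\Sigma(A+BK)^T+DD^T$ is order preserving in $\Sigma$, and the linearized constraints are monotone in $\Sigma_k$ and $Y_k$. I would check carefully that the surrogate $\mathbb{M}_1$, which is linear and increasing in $M_k$, preserves feasibility when the covariance decreases.
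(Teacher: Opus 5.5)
Your proposal is correct and follows the paper's route. Preprocessing sets the linearization references to the incoming trajectory's own covariances, which makes it a feasible point of \cref{prob: steer_edge}. Each later step then has a feasible point: edge shortening at $N=N_c$, shortcut search over a graph that still contains the current path, and joint re-optimization of the concatenation. Every feasible point is sound because the tangent-line surrogate $\mathbb{M}_1$ overestimates the square root. Your direction $\mathbb{M}_1\ge\sqrt{a^TM_ka}$ is the right one; the paper's displayed inequality is reversed, although its proof relies on your direction.

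Your Loewner-monotonicity argument for mismatched junction covariances goes beyond the paper, which simply asserts that boundary constraints match. Your termination discussion also goes further: the paper reads ``guaranteed to return a solution'' as the anytime property and does not prove finite termination. Note that finiteness of edge sequences with non-increasing cost does not by itself rule out cycling, so termination would rely on your proposed tie-breaking rule, which changes the algorithm as described.
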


\section{Experiments}\label{sec: experiments}
\begin{figure}[bpt]
\centering
\includegraphics[width=0.1575\textwidth]{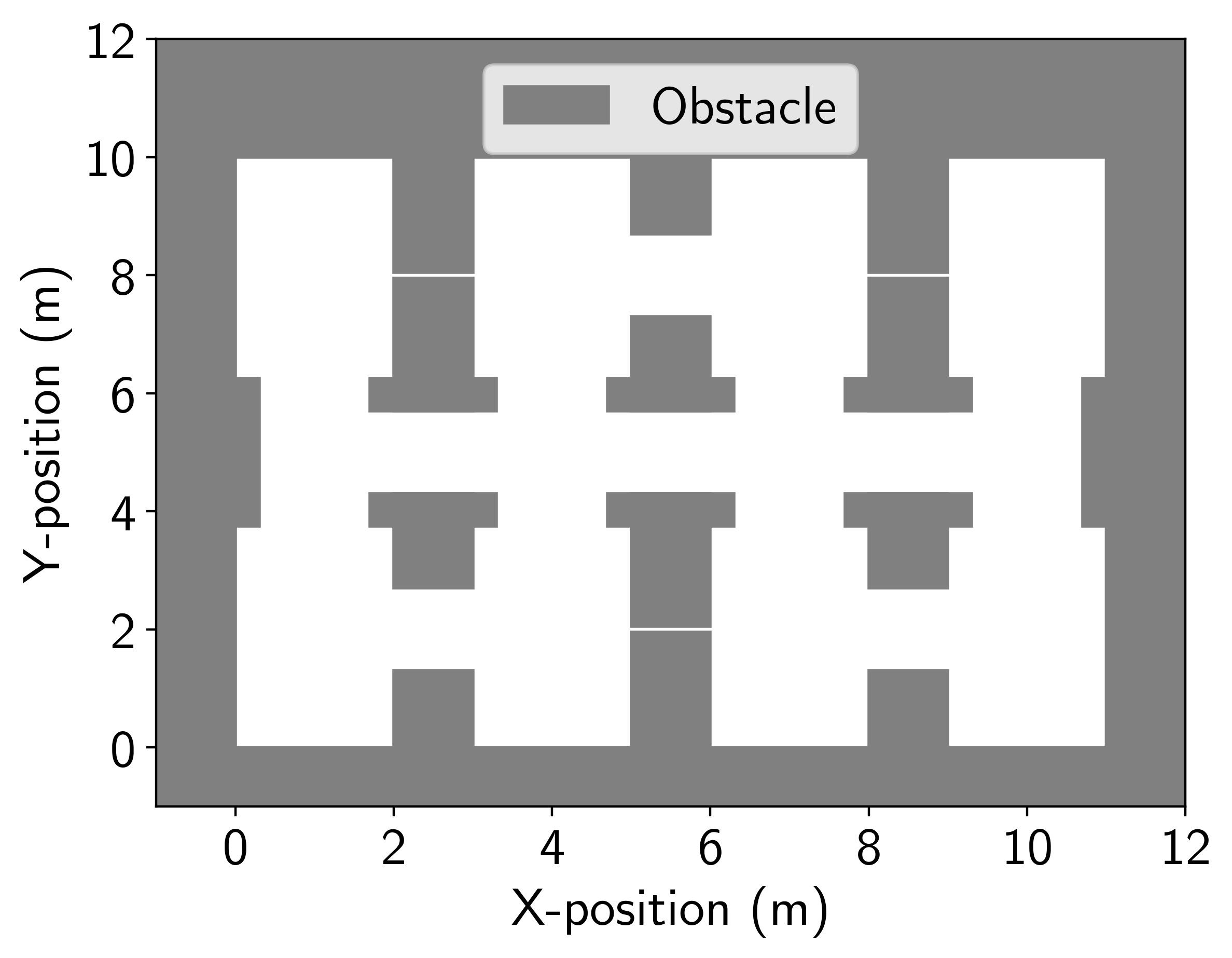}
\includegraphics[width=0.1575\textwidth]{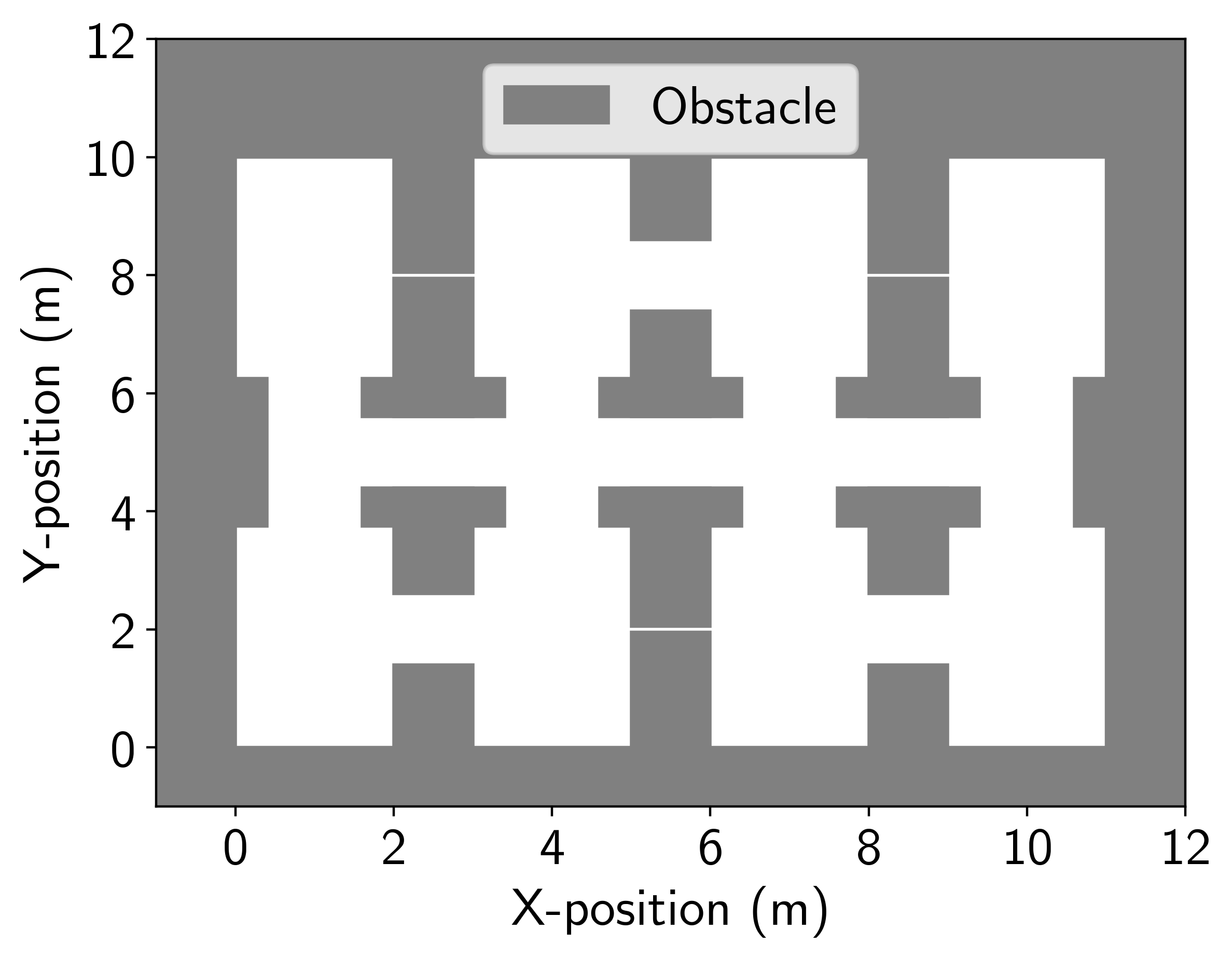}
\includegraphics[width=0.1575\textwidth]{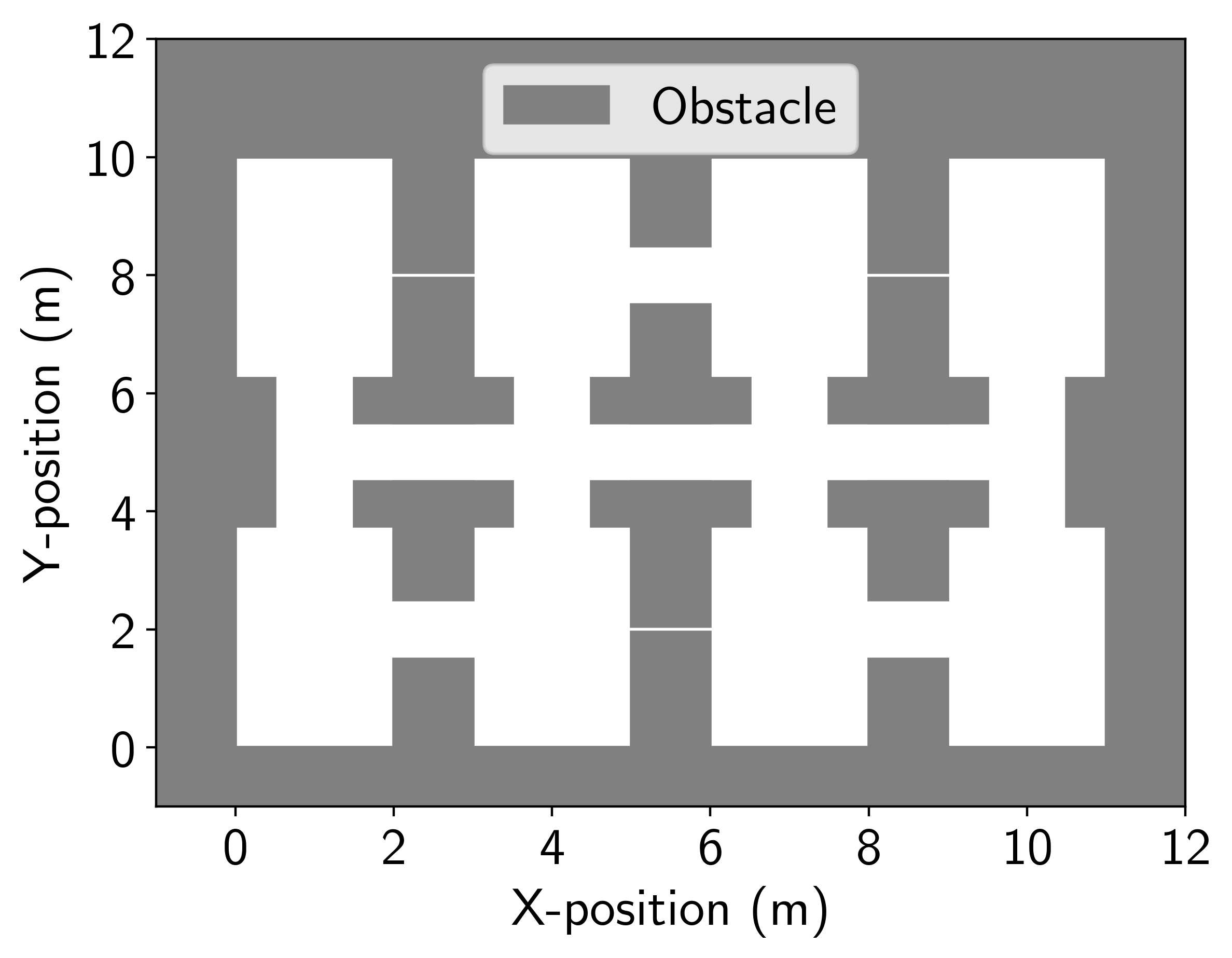}
\includegraphics[width=0.1575\textwidth]{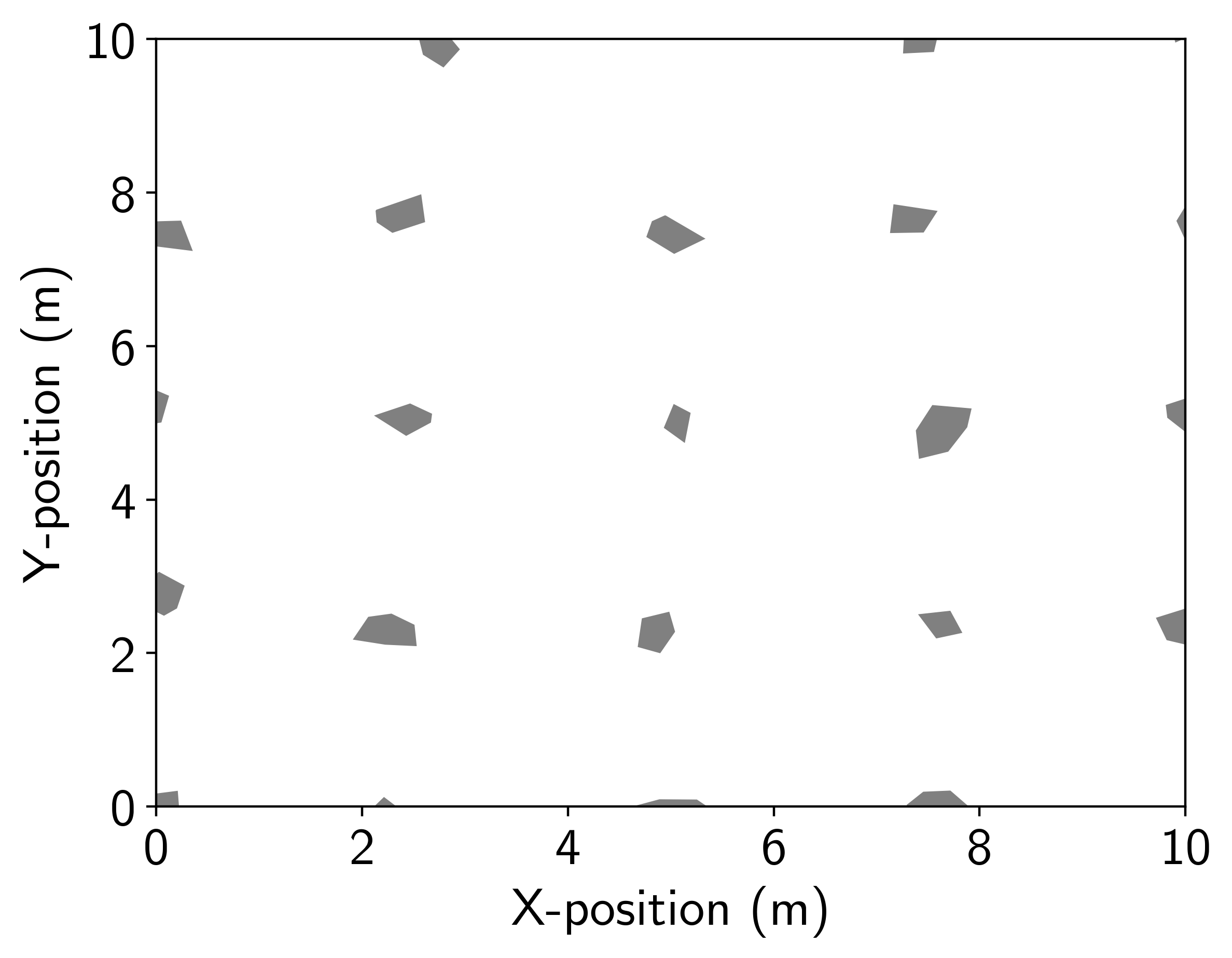}
\includegraphics[width=0.1575\textwidth]{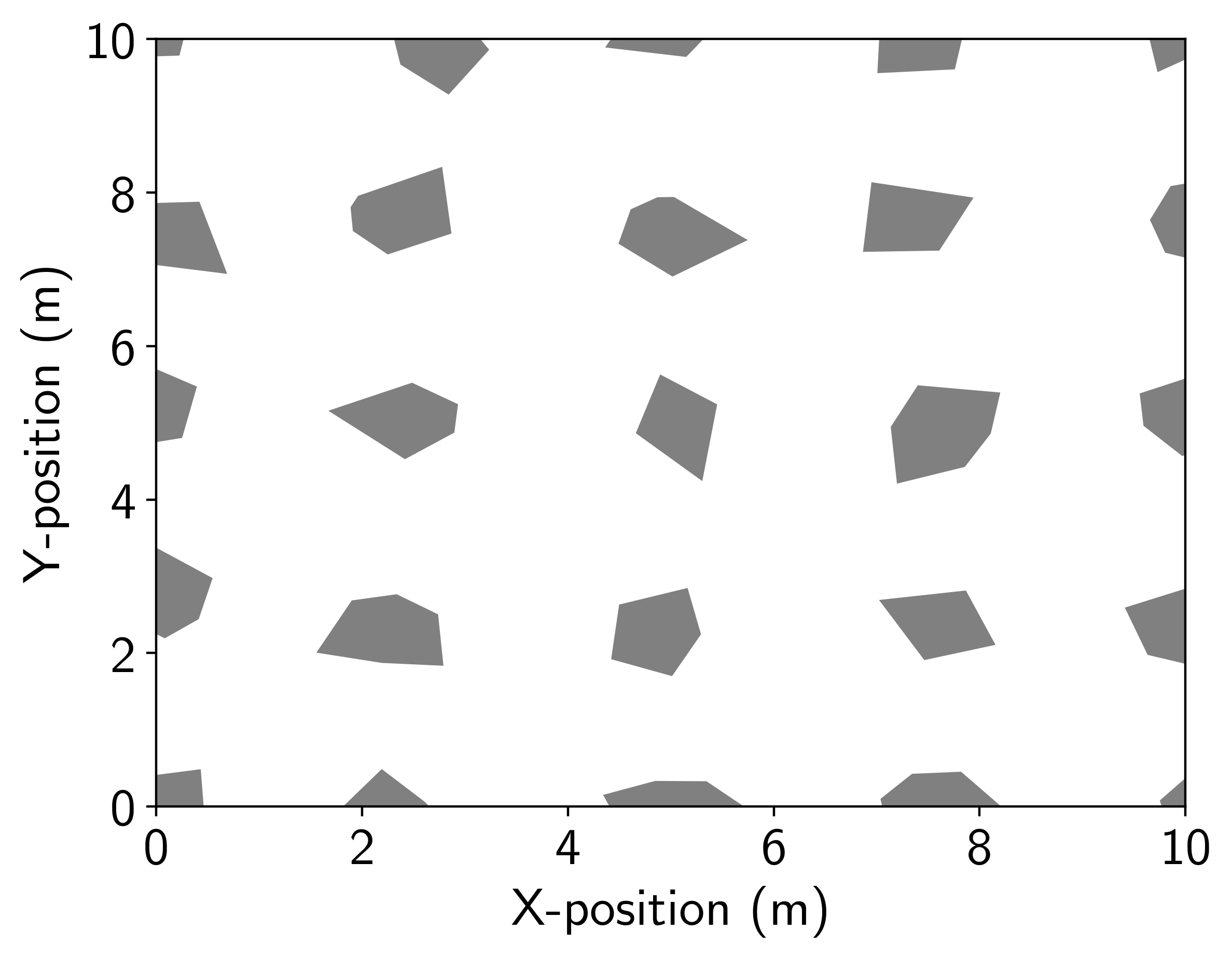}
\includegraphics[width=0.1575\textwidth]{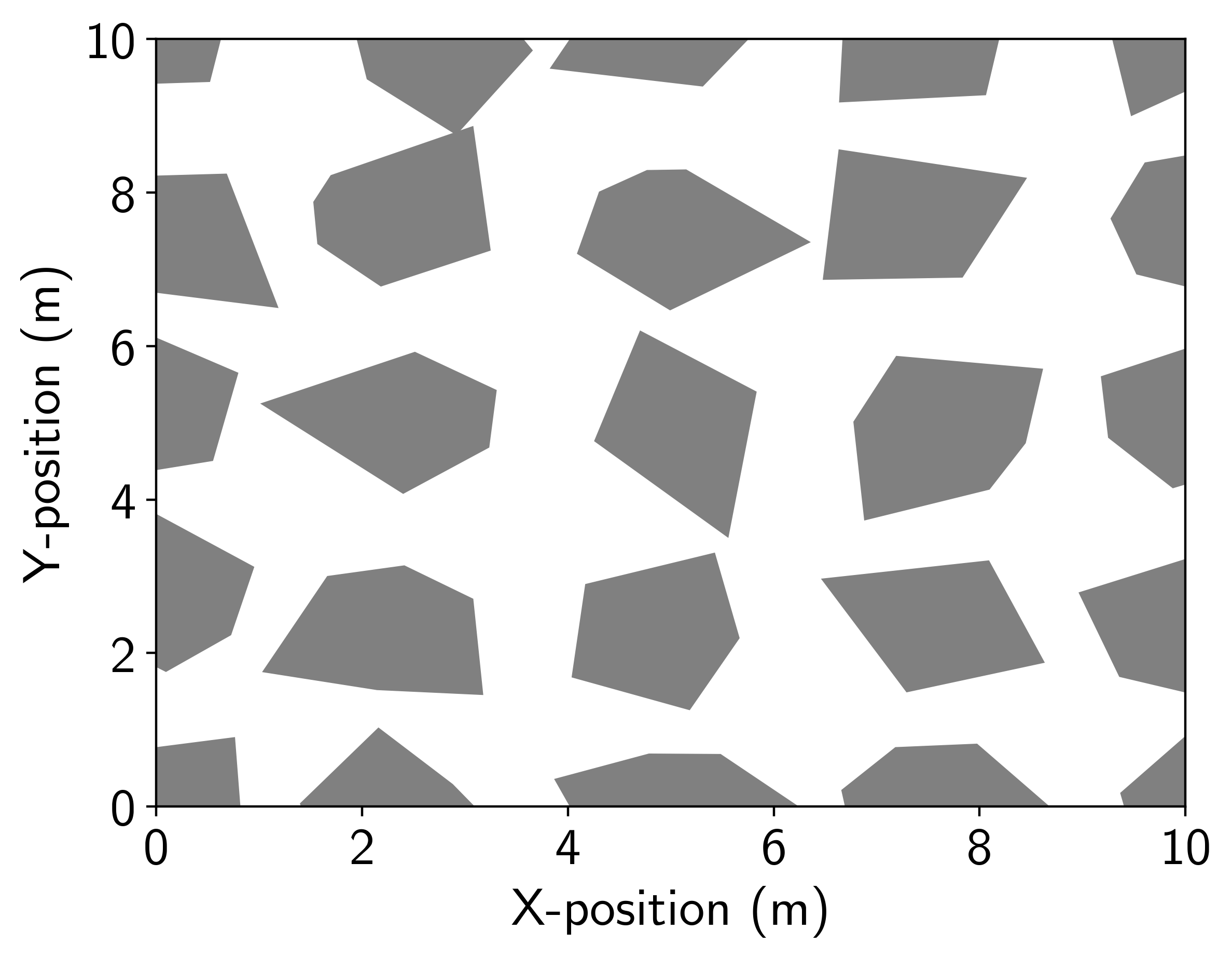}
\caption{Top: Confined office environments with varying corridor width. Bottom: $\mathcal{P}^+$ for a cluttered environment with varying levels of actuator noise.}
\label{fig:varying_office_envs}
\end{figure}

\begin{figure}[tbp]
\centering 
\includegraphics[width=0.4875\columnwidth]{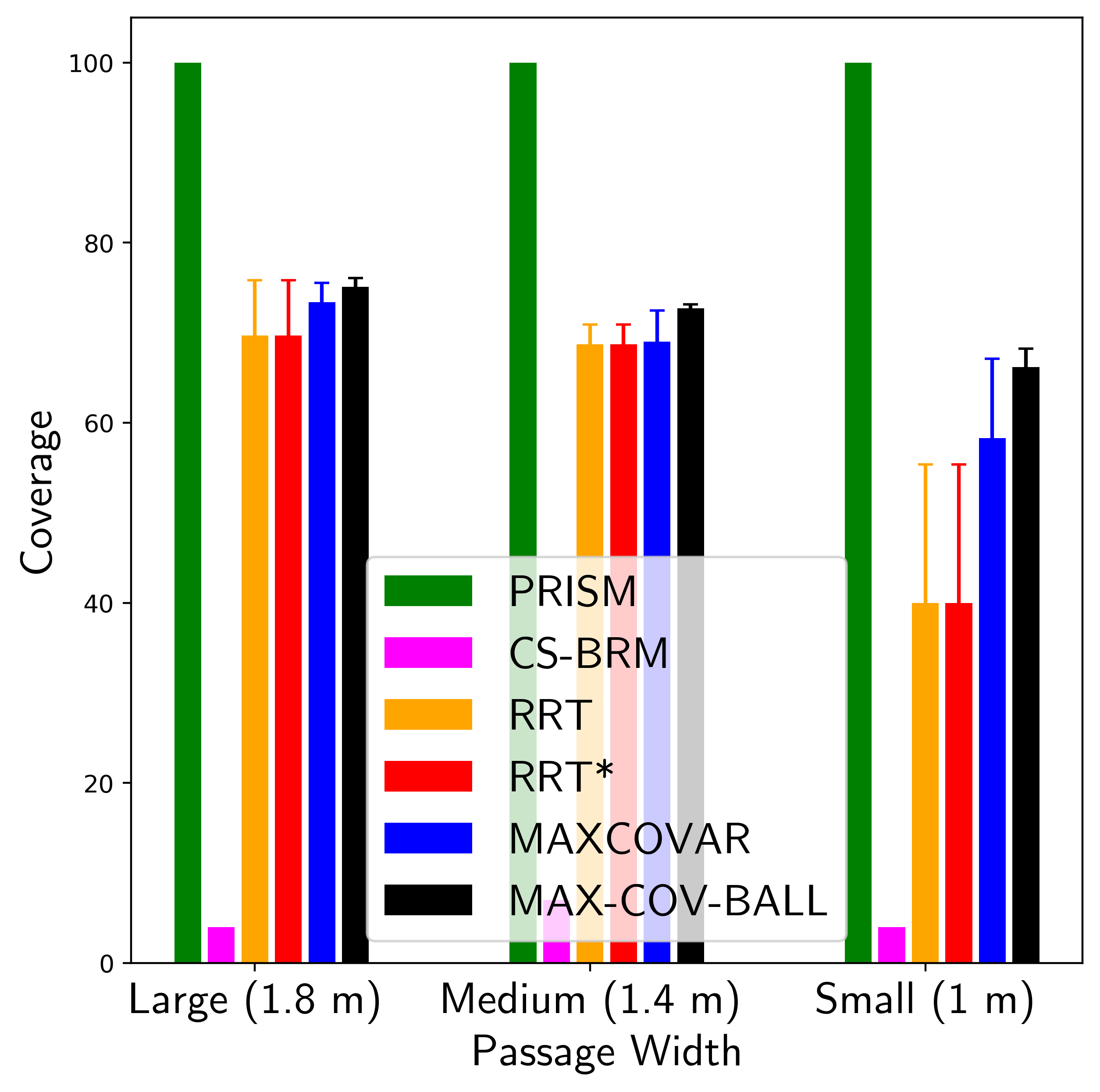}
\includegraphics[width=0.4875\columnwidth]{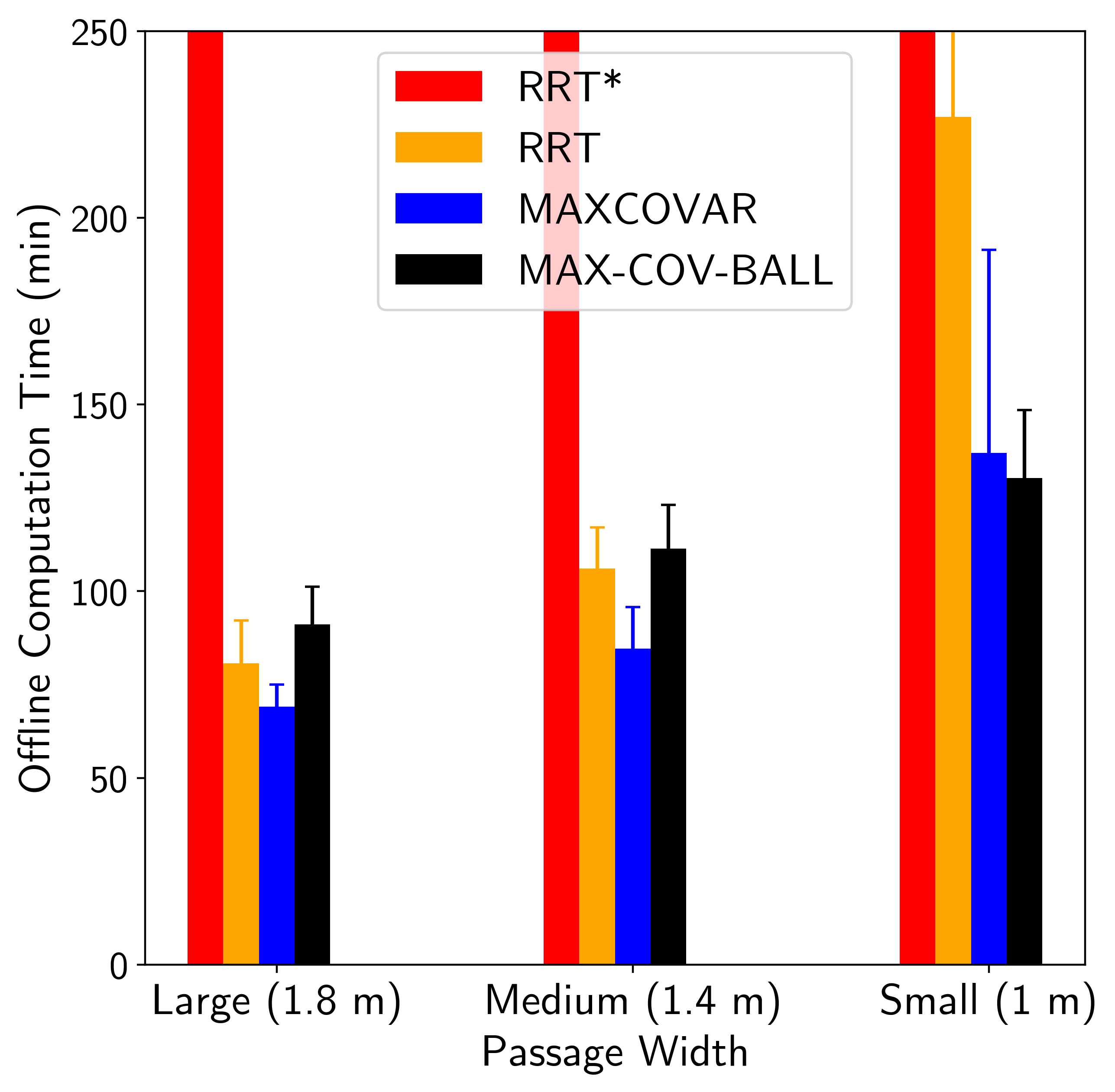}
\caption{Left: Coverage in the office environment across varying passage widths, Right: Offline computation time for sampling-based methods in the office environment. RRT$^\star$ takes $496.7 \pm 28.0$ min with the large passage, $575.8 \pm 56.5$ min with the medium passage, and $680.0 \pm 148.7$ min with the small passage.}
\label{fig: office_comp_time}
\end{figure}

\begin{figure}[tbp]
\centering
\includegraphics[width=0.4875\columnwidth]{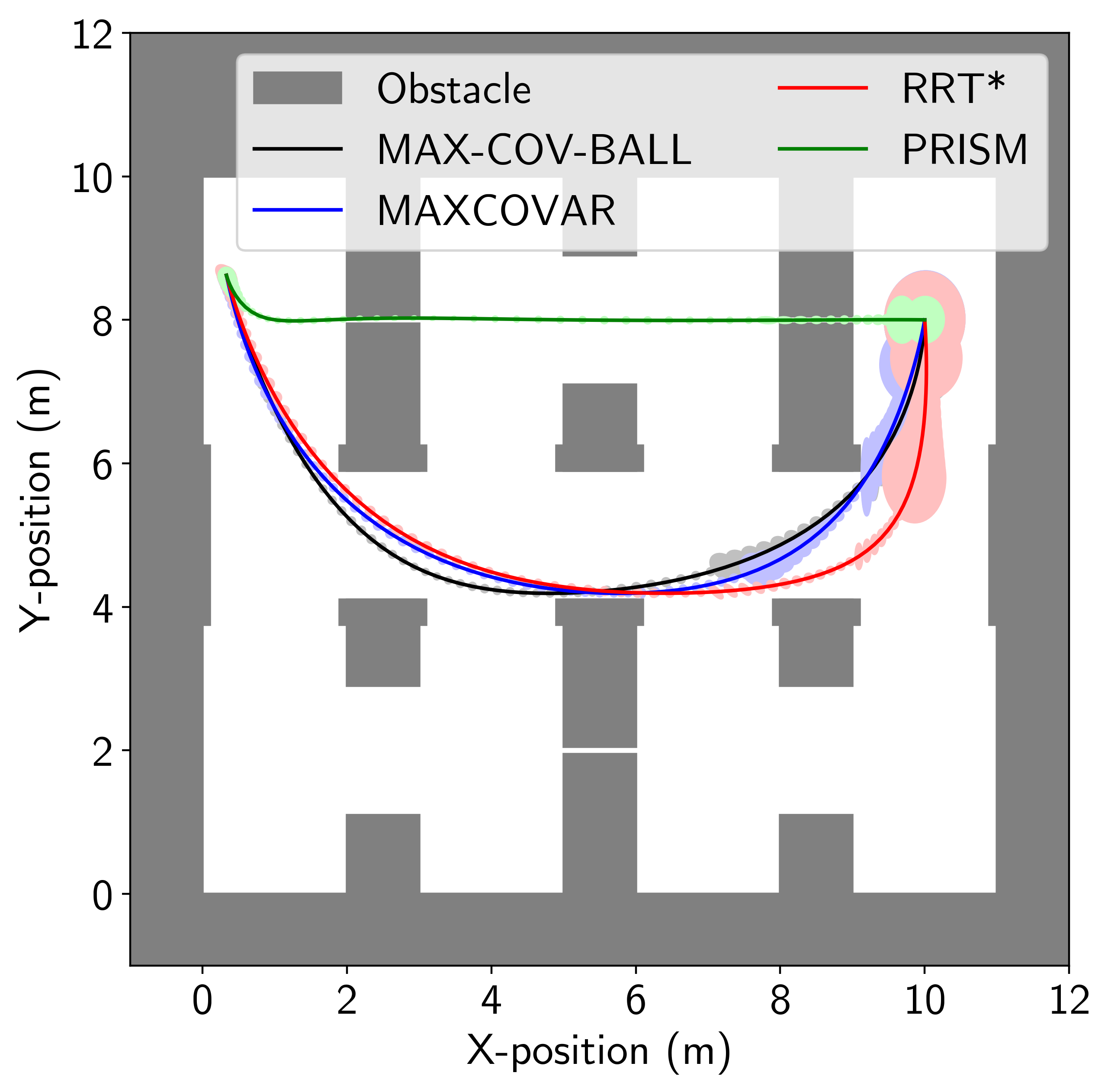}
\includegraphics[width=0.4875\columnwidth]{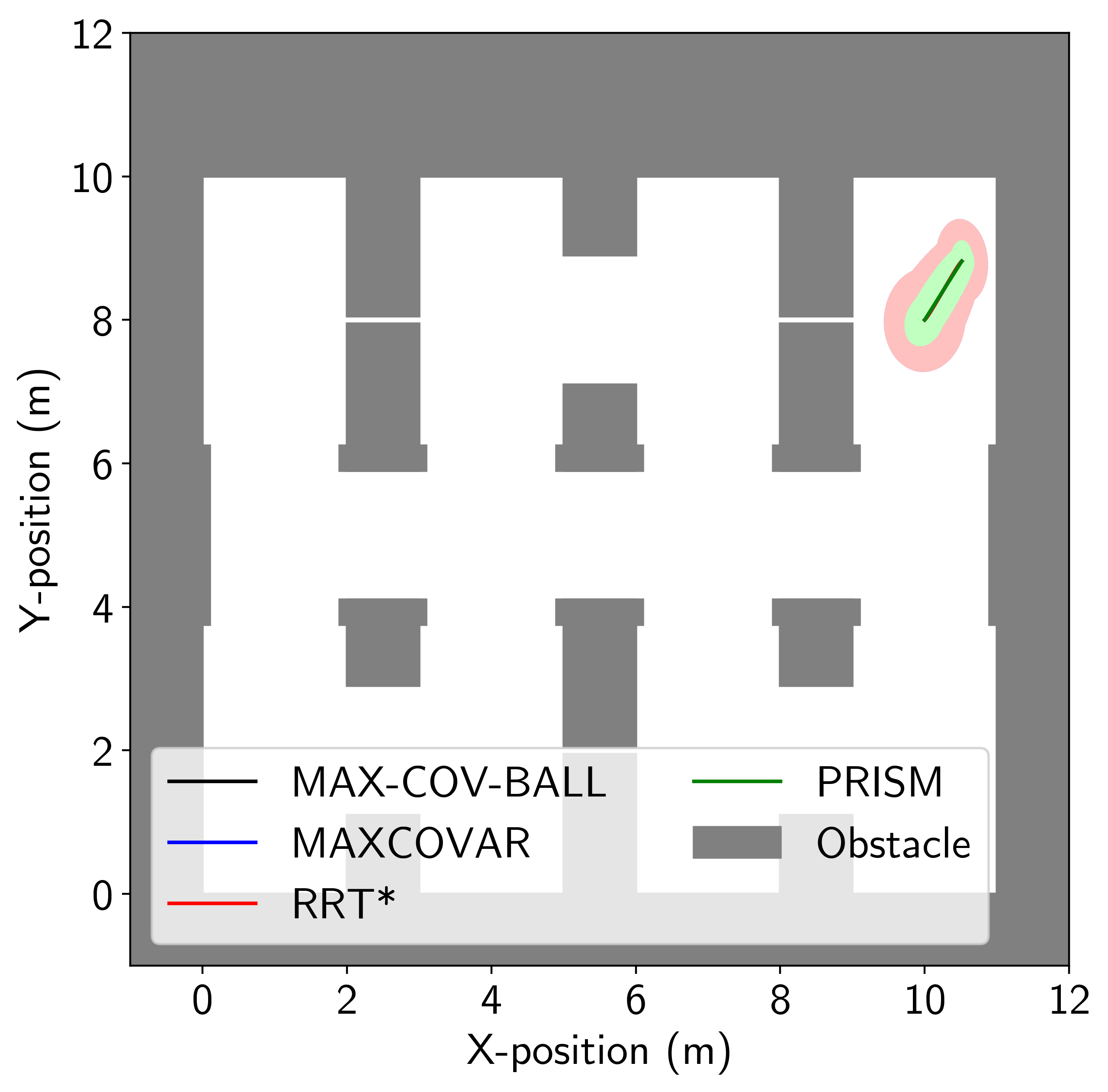}
\includegraphics[width=0.4875\columnwidth]{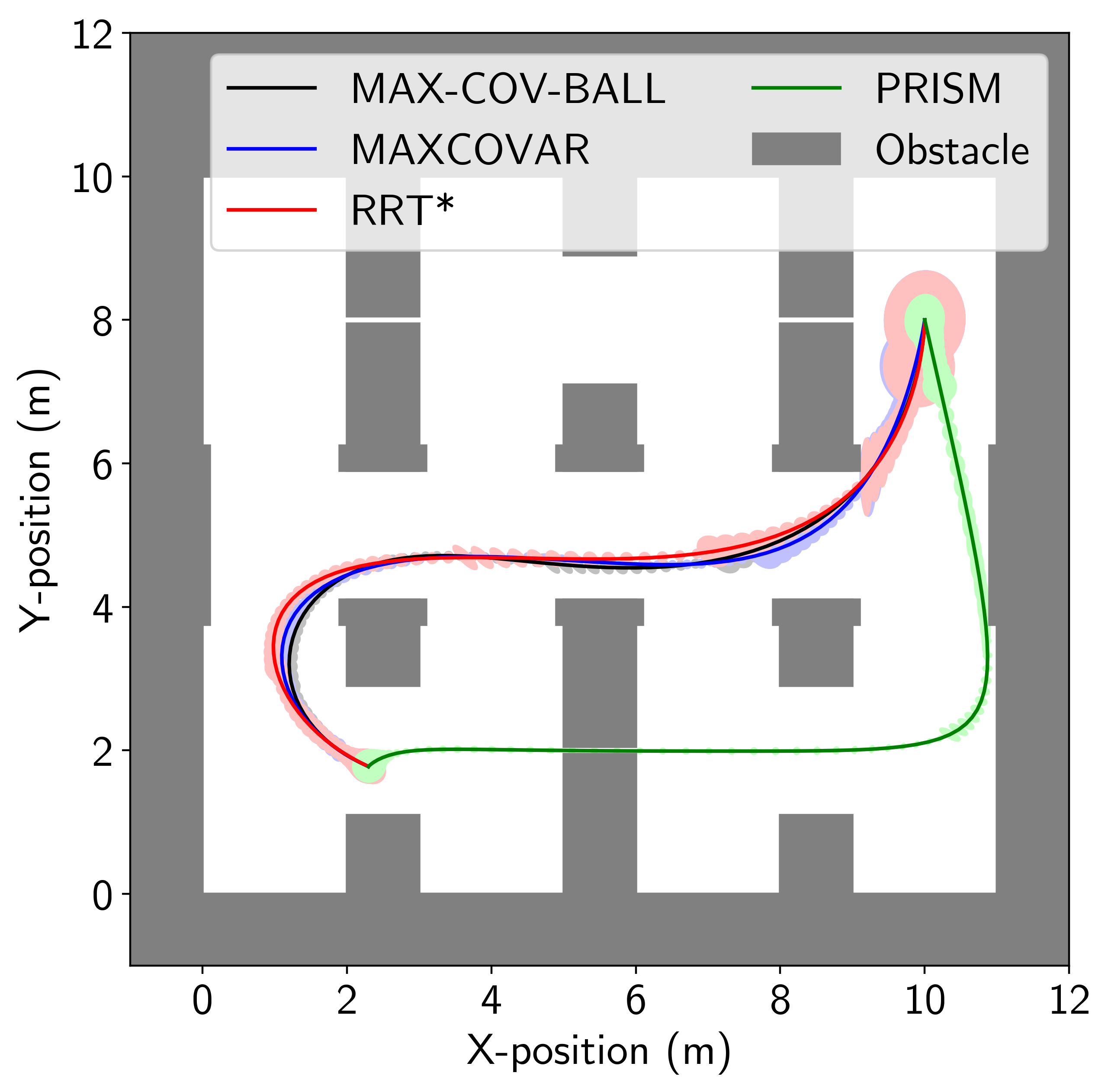}
\includegraphics[width=0.4875\columnwidth]{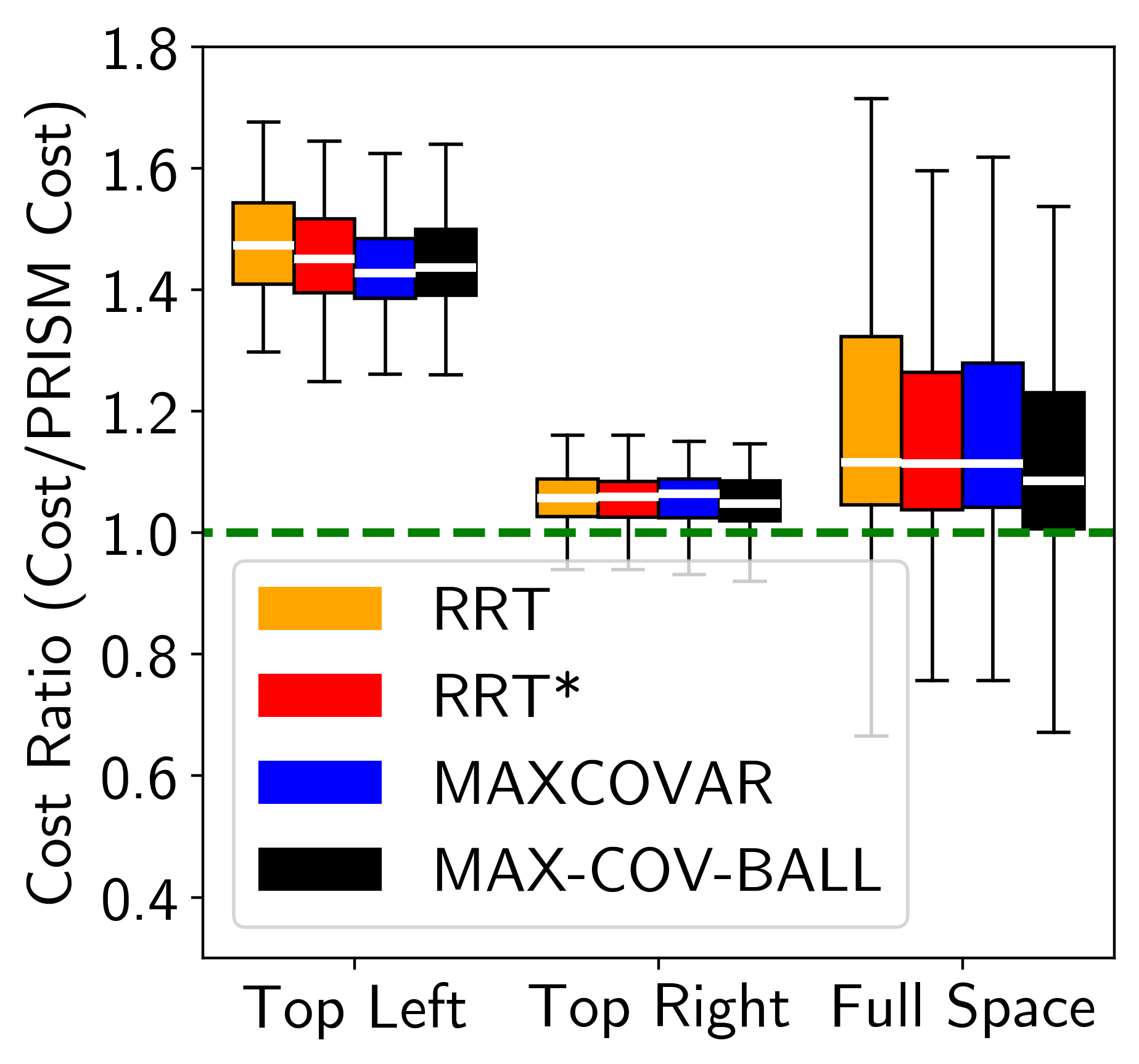}
\caption{Top left: Locally optimized trajectories starting in the top left (hard set); Top right: locally optimized trajectories starting in the top right (easy set), Bottom left: locally optimized trajectories starting in the bottom left (full space set); Bottom right: cost ratio (baseline cost/PRISM cost) across all three evaluation sets.}
\label{fig: cost_office}
\end{figure}

\begin{figure*}[t]
\centering
\includegraphics[width=0.325\textwidth]{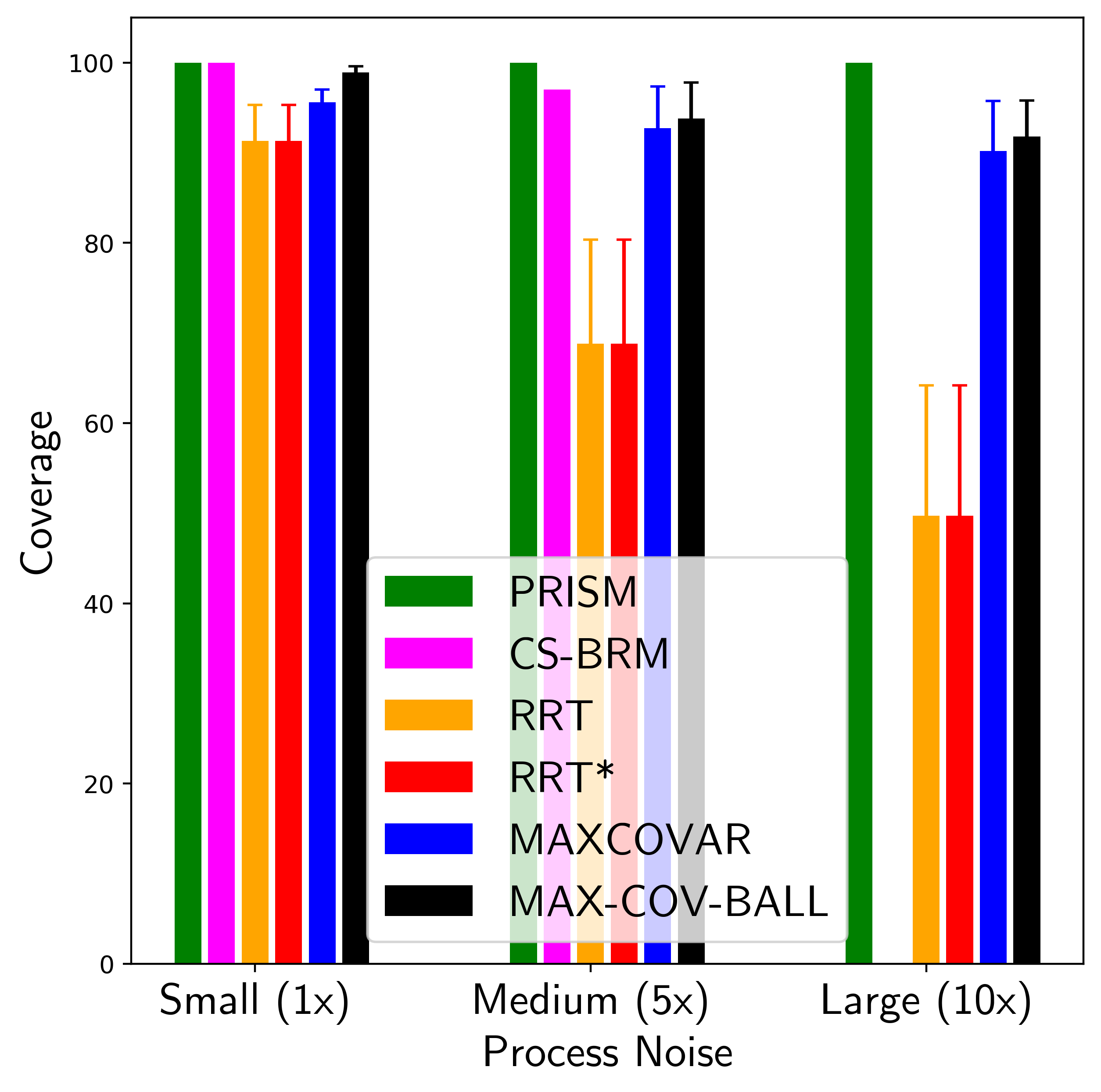}
\includegraphics[width=0.325\textwidth]{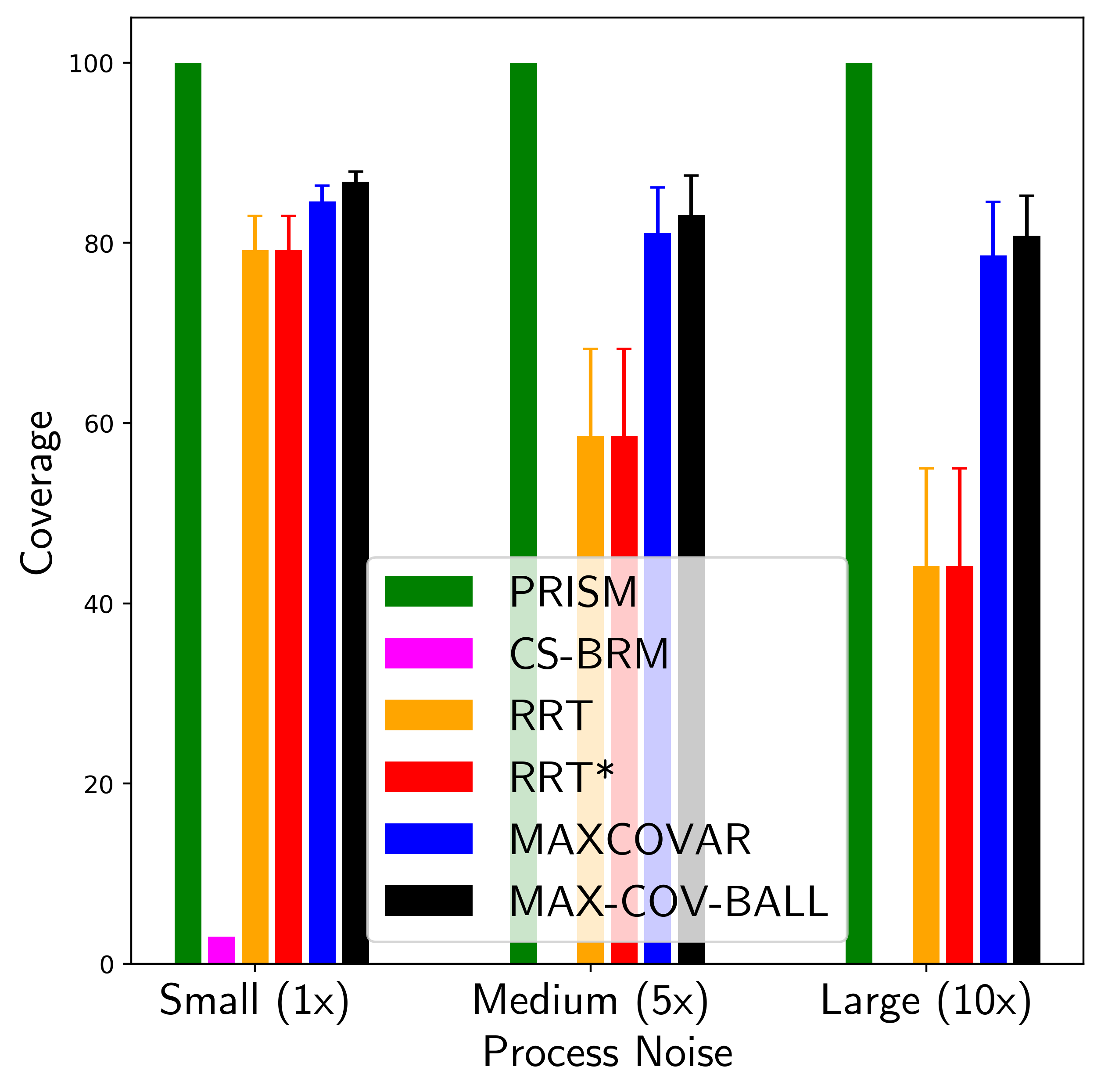}
\includegraphics[width=0.325\textwidth]{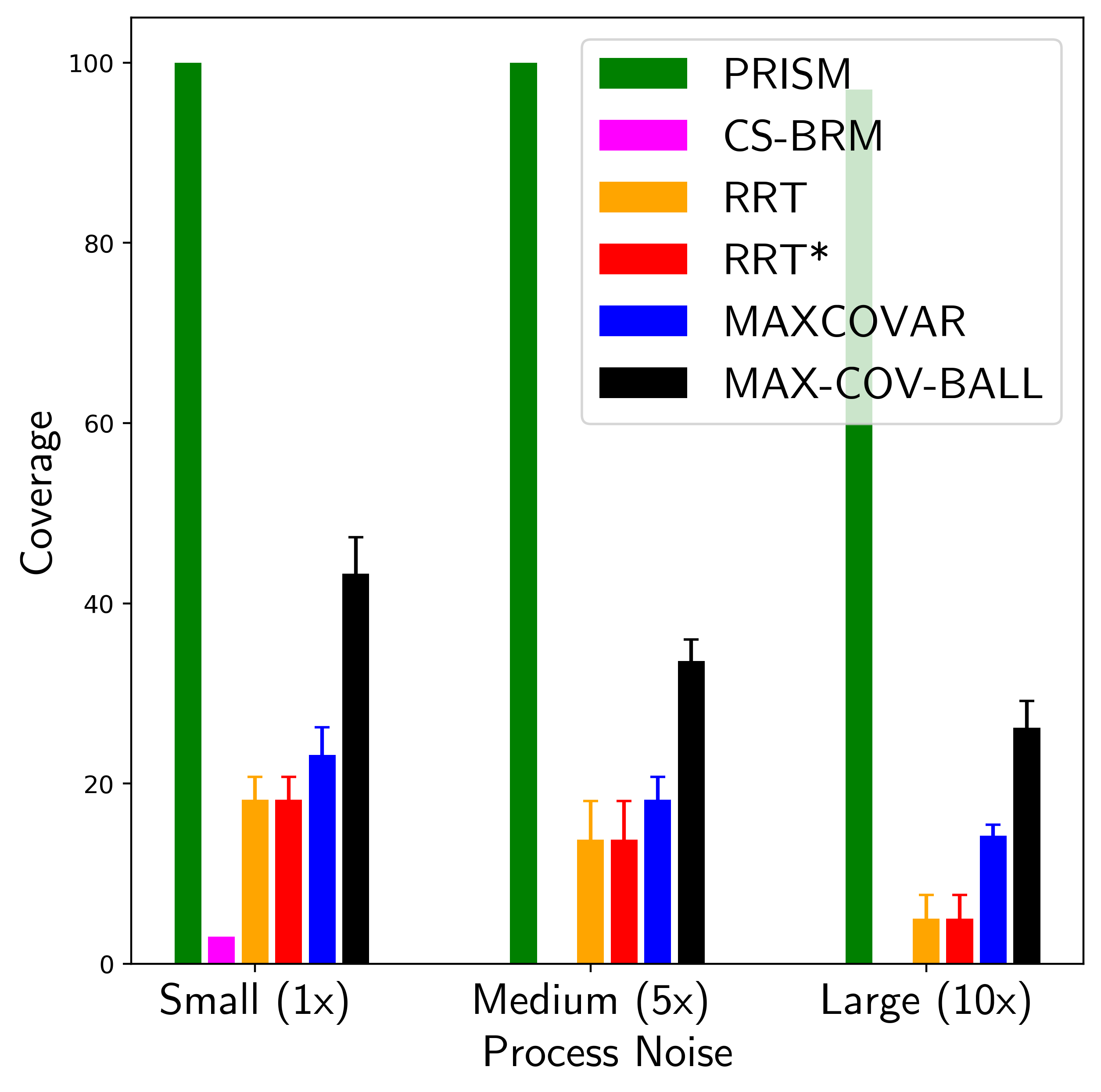}
\caption{Coverage in the cluttered environment. L-R: easy evaluation set - initial distributions satisfy \cref{thm: completeness_with_obstacles} with $\Sigma = 0.1cI$,  medium-difficulty evaluation set - initial distributions satisfy \cref{thm: completeness_with_obstacles} with $\Sigma = 0.5cI$,  hard evaluation set - initial distributions have non-stationary means.}
\label{fig: stationary_coverage_results}
\end{figure*}

\begin{figure}[tbp]
\centering 
\includegraphics[width=0.4875\columnwidth]{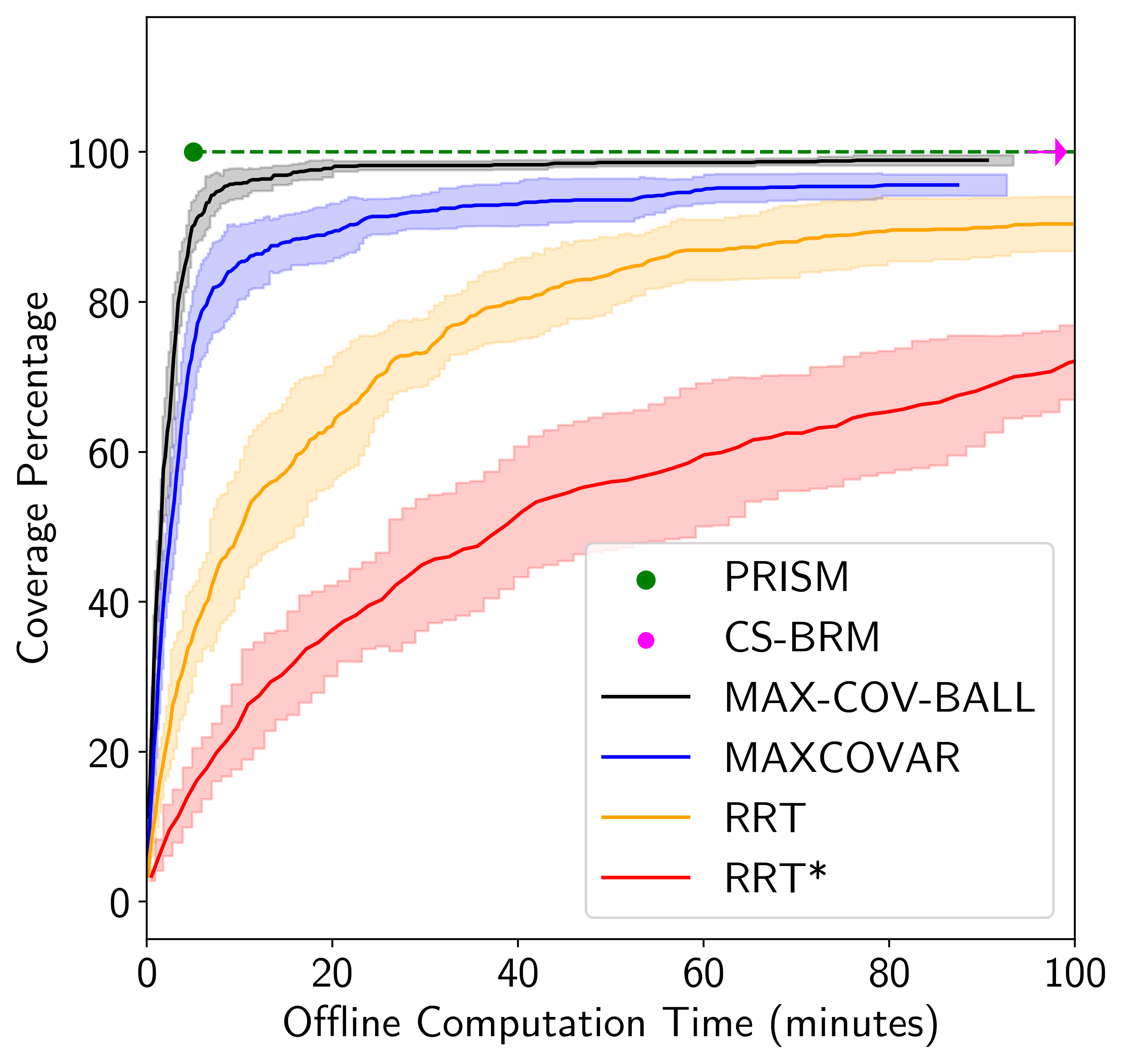}
\includegraphics[width=0.4875\columnwidth]{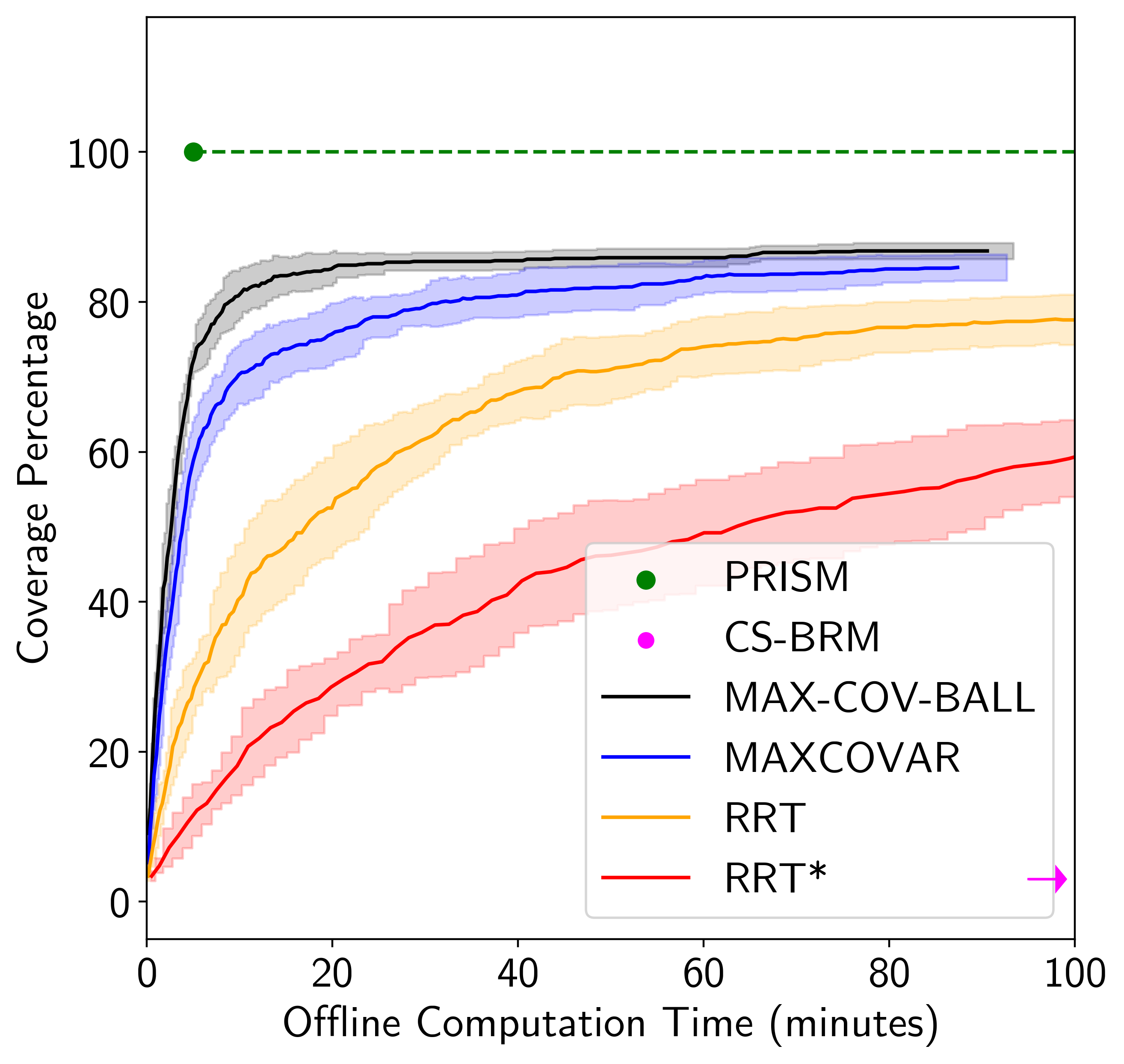}
\caption{Coverage plotted against offline computation time. Left: Easy evaluation set with small actuator noise. Right: Medium-difficulty evaluation set with small actuator noise.}
\label{fig: coverage_per_iteration}
\end{figure}

\begin{figure}[tbp]
\centering
\includegraphics[width=0.4875\columnwidth]{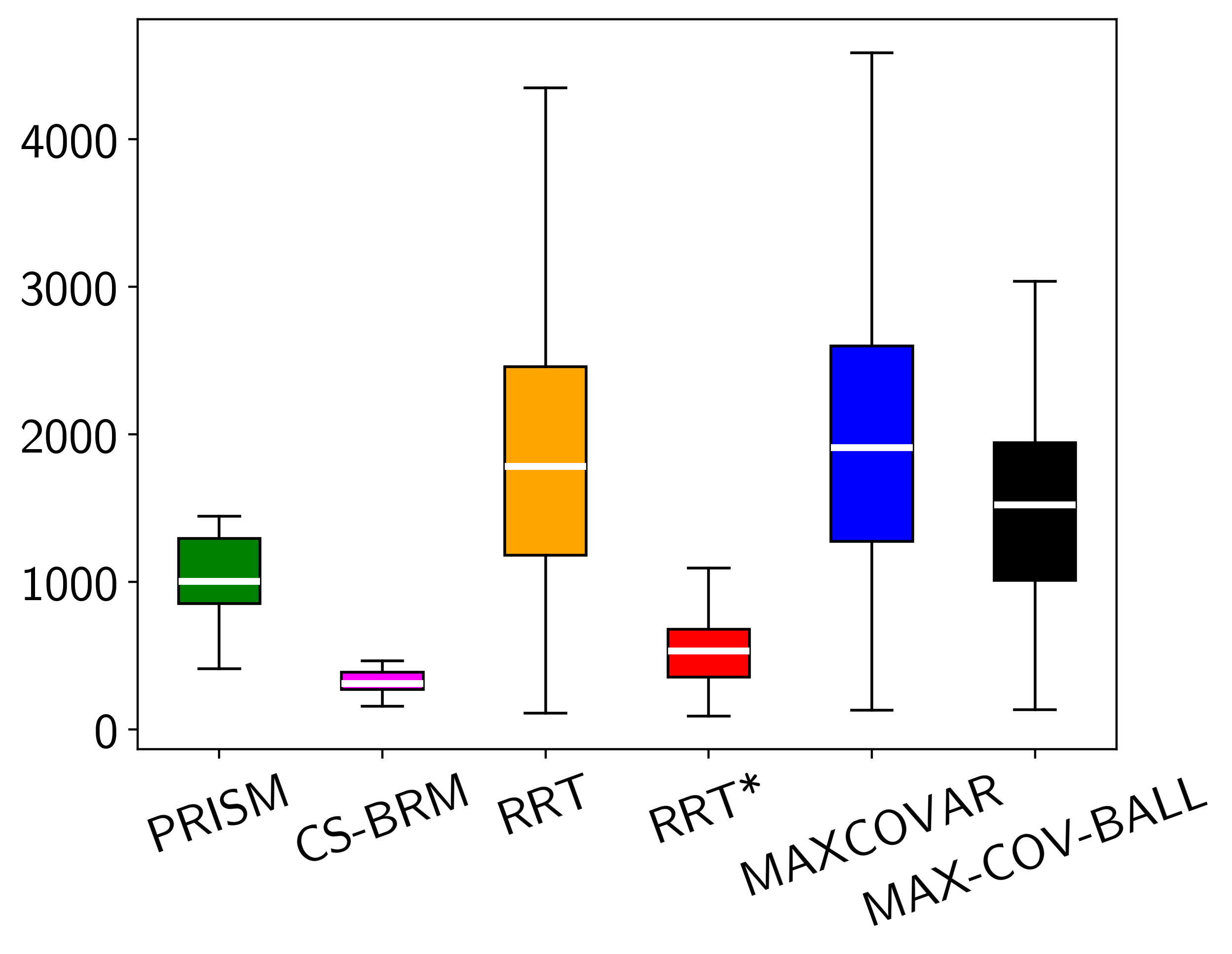}
\includegraphics[width=0.4875\columnwidth]{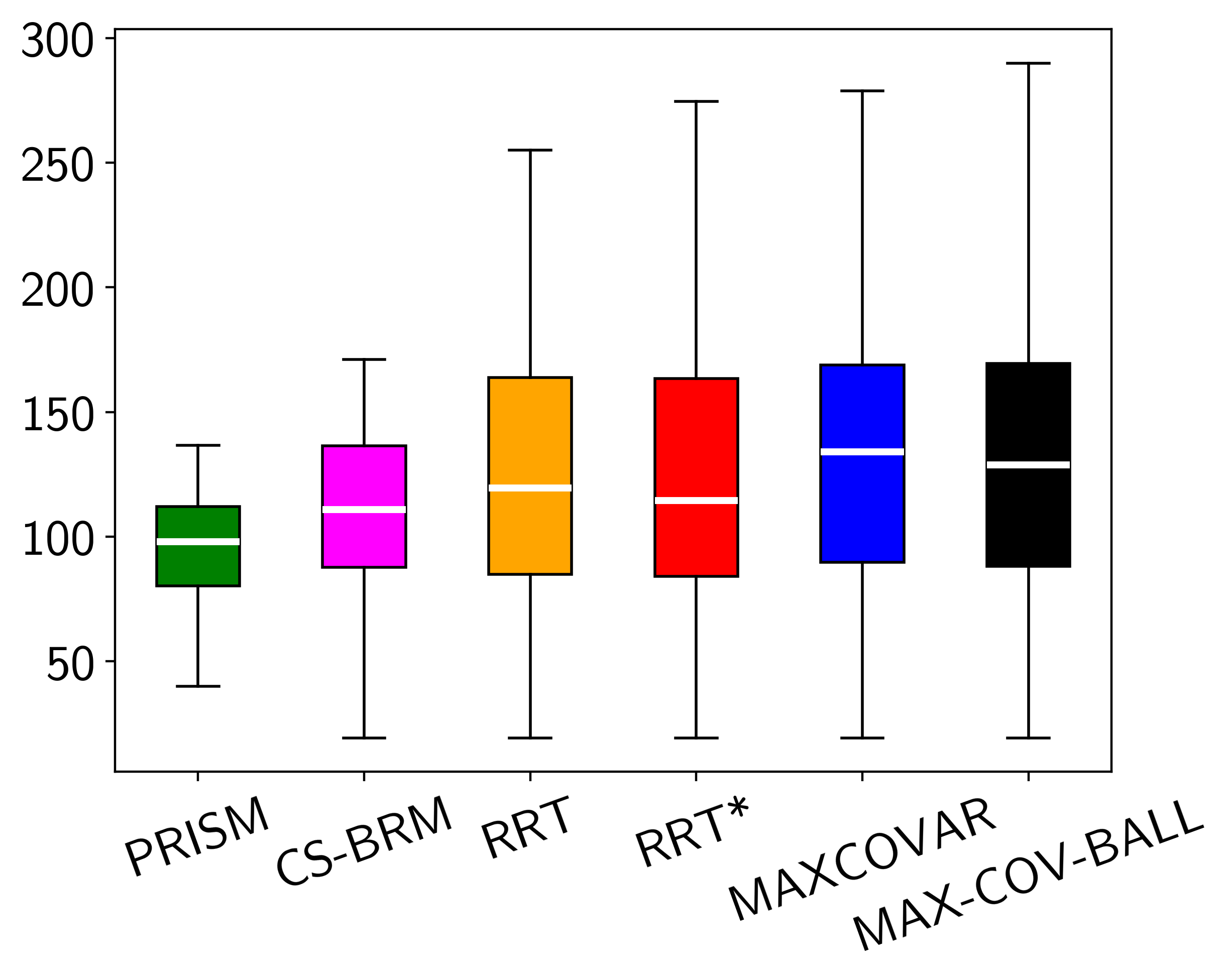}
\vspace{-0.25cm}
\caption{Cost before (left) and after (right) local optimization for PRISM, CS-BRM, RRT, RRT$^\star$, MAXCOVAR, and MAX-COV-BALL in the cluttered environment.}
\label{fig: cost_scatter}
\end{figure}
\begin{figure}[tbp]
\centering 
\includegraphics[width=0.4875\columnwidth]{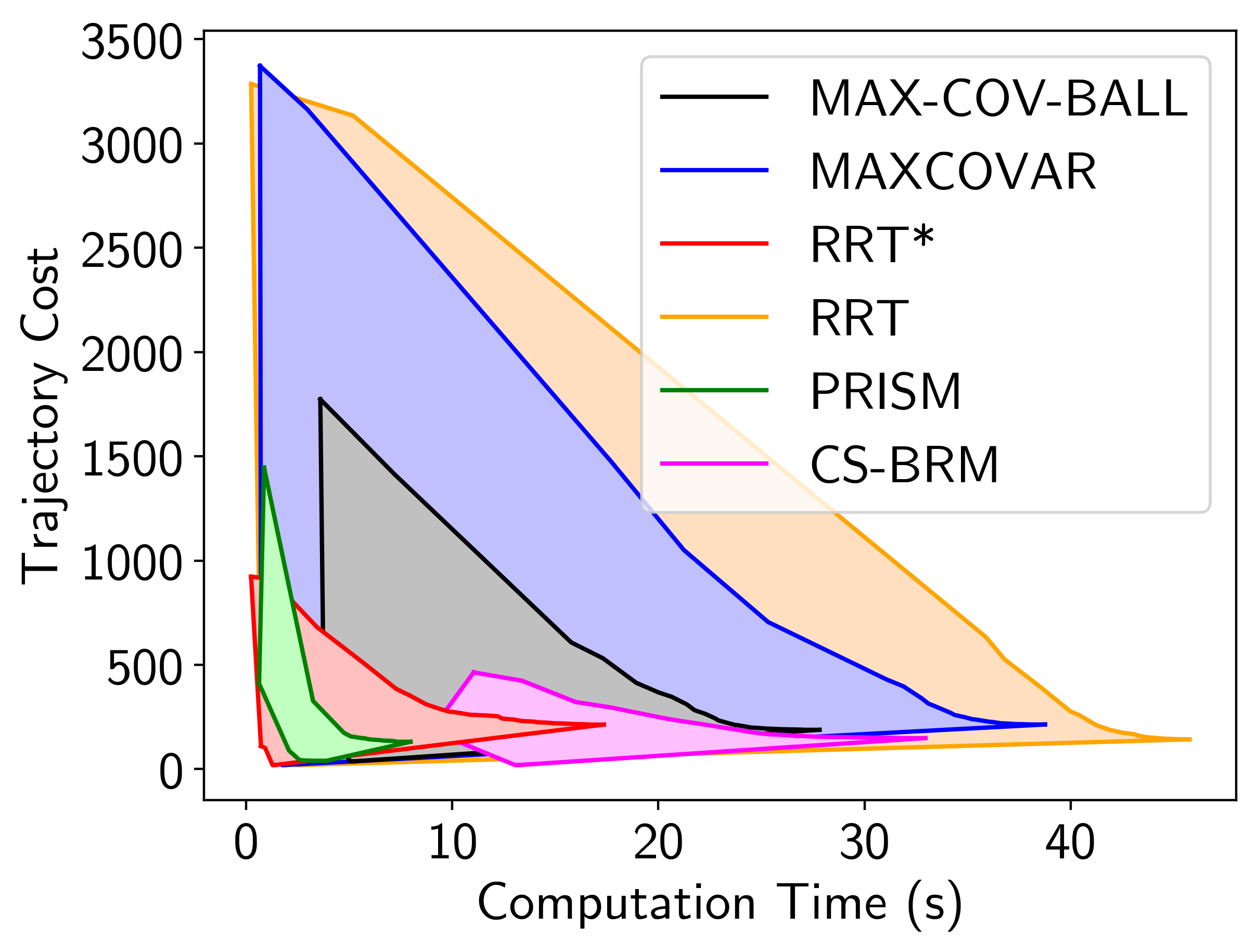}
\includegraphics[width=0.4975\columnwidth]{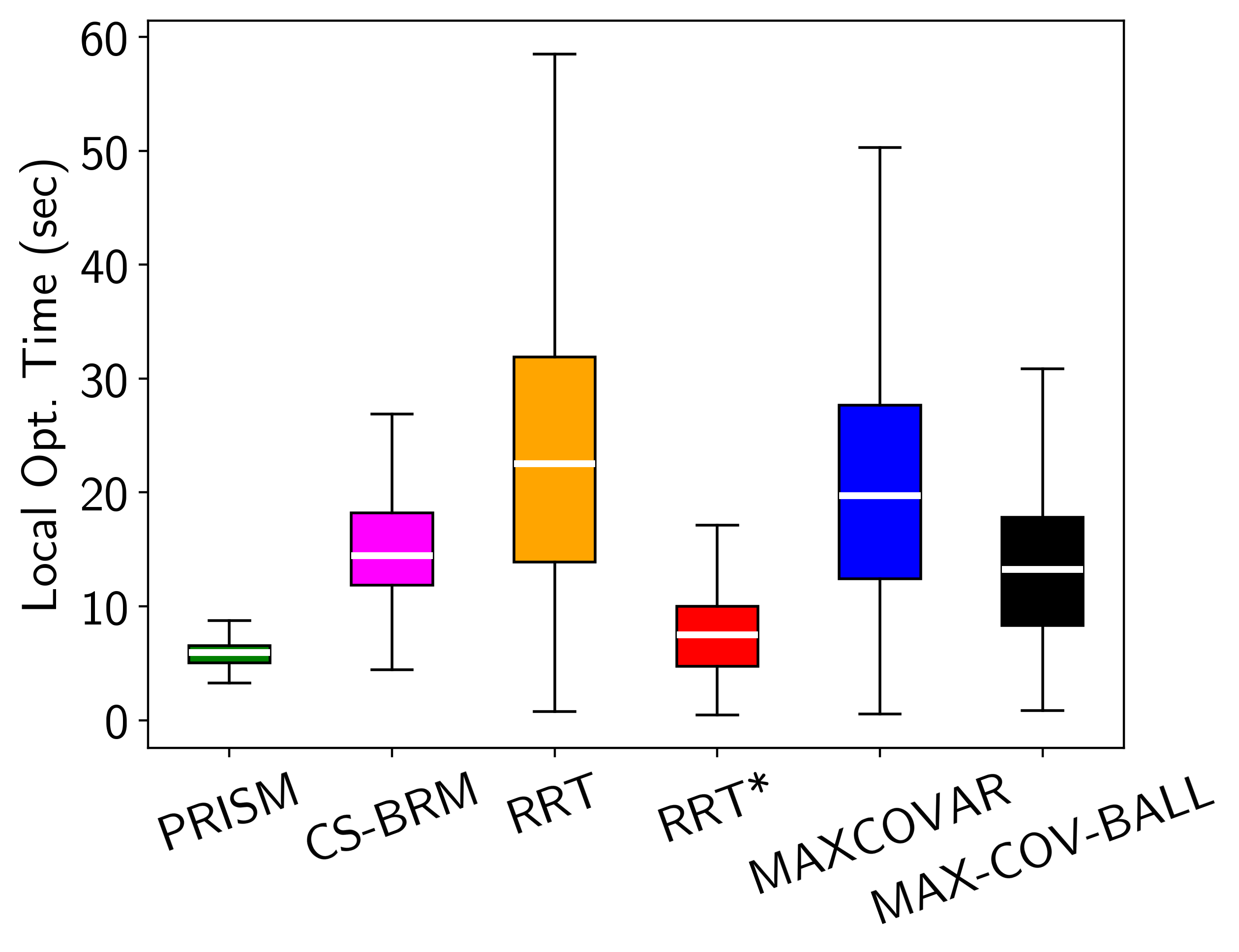}
\caption{Left: Cost vs.\ online computation time during local optimization, with the area shaded between the cost curves for the trajectories with lowest and highest initial feasible cost for each method. Right: Local optimization time.}
\label{fig: pareto_front}
\end{figure}

We evaluate PRISM by conducting motion planning experiments for a 6DoF quadrotor in a 2D plane. The quadrotor state and control matrices are modeled as in  \cite{aggarwal2025tac, rose2025efficient}, with the actuator noise matrix $D$ given as in \cite{rose2025efficient}
such that the actuator noise matrix $D$ integrates the continuous actuator noise over each discrete time step. We use $\Delta t = 0.2$ sec and a planning horizon of $N = 20$ time steps. We restrict the safe control set $\mathcal{U}$ to $[-5, 5] \times [-5, 5]$ and enforce all chance constraints with a maximum failure probability $\epsilon = 0.01$.

We evaluate PRISM against MAXCOVAR \cite{aggarwal2024sdp}, MAX-COV-BALL \cite{rose2025efficient}, CS-BRM \cite{zheng2024cs}, RRT \cite{lavalle2001randomized} in belief space, and RRT$^\star$\cite{karaman2011sampling} in belief space (similar to \cite{pedram2022gaussian}). We run MAXCOVAR, MAX-COV-BALL, RRT, and RRT$^\star$ until 2000 nodes are added to each tree. We initialize the CS-BRM using the same position nodes as PRISM to ensure good coverage of the position space, then apply the velocity heuristic to add extra nodes with nonzero velocity and find edges as described in \cite{zheng2024cs}. We construct roadmaps using each method in a confined office environment with varying corridor width, and in a cluttered obstacle field environment with varying actuator noise magnitude $W_c$. We evaluate roadmap coverage for both sets of experiments. We also apply PRISM's local optimization module to paths generated by each method in both environments, and compare path costs before and after local optimization. We run all experiments on an AMD Ryzen Threadripper 3960X 24-core processor and solve all optimization problems using the MOSEK Optimizer API \cite{mosek}.

\subsection{Coverage Evaluation - Confined Office Environment}
First, we evaluate coverage and offline computation time in a confined office environment with primary corridor width varying between 1 and 1.8 m after accounting for the geometry of the robot (see \cref{fig:varying_office_envs}) and fixed noise magnitude $W_c = 0.05$ m/s$^3$. The office environments also have narrow corridors that are only 5 cm wider than the diameter of the robot, offering little room to maneuver. We generate a medium-difficulty evaluation set that consists of distributions with stationary initial means and $\Sigma_\mathcal{I} = 0.5 cI$ as calculated in the environment with 1 m passage width, with $c$ defined as in \cref{thm: completeness_with_obstacles}. PRISM is guaranteed by \cref{thm: completeness_with_obstacles} to achieve 100\% coverage on this evaluation set for all corridor widths, and is the only method that consistently maintains high coverage as the passage width narrows. Coverage for all methods across varying passage widths is shown in \cref{fig: office_comp_time}.

For RRT \cite{lavalle2001randomized}, RRT$^\star$ \cite{karaman2011sampling}, MAXCOVAR \cite{aggarwal2025tac}, and MAX-COV-BALL \cite{rose2025efficient}, we find that as the passage width narrows, more samples are required to find each feasible node, increasing offline computation time. MAX-COV-BALL scales best, likely because it is sample efficient by design (see \cref{fig: office_comp_time}). However, PRISM requires less than 1 minute to generate a planning graph for each passage width, and the planning graph only contains 21 nodes. Similarly, CS-BRM scales well across passage widths, requiring around two minutes on average to generate a graph with 83 nodes; however, CS-BRM achieves poor coverage with this sparse graph.

\subsection{Cost Evaluation - Confined Office Environment}
Only PRISM is able to successfully find paths through the very narrow passages in the confined office environment. We evaluate cost after local optimization on different sets of initial distributions in the office environment with 1.8 m passage width. We generate three sets of distributions where all sampling-based methods generally achieve high coverage -- the first set has $x \in [1, 3]$ and $y \in [7, 10]$ (i.e. the mean is in the top left, and there is an obvious shortcut through a narrow passage); the second set has $x, y$ uniformly sampled throughout the collision-free state space; the third set has $x \in [9, 11]$ and $y \in [5, 10]$ (the mean is in the top right, so there are no useful shortcuts through narrow passages). All sets have $\Sigma = 0.1 cI$, with $c$ defined as in \cref{thm: completeness_with_obstacles}. We apply PRISM's local optimization module to feasible trajectories generated by PRISM and all sampling-based methods for these sets. Representative trajectories after local optimization are shown in \cref{fig: cost_office}. 

PRISM achieves the lowest median path cost after local optimization across all three sets. However, PRISM performs relatively best on initial distributions from the top left of the environment; in these cases, PRISM takes a direct route to the goal through a narrow passage. When the quadrotor starts in the top right, this advantage vanishes, but PRISM still outperforms other methods in median final path cost by 3-5\%. When the initial means are randomly sampled throughout the state space, PRISM sometimes finds better modes than other methods and performs about 10\% better than other methods in median cost. Ratios of optimized cost for all sampling-based methods to the optimized cost for PRISM are shown in \cref{fig: cost_office}.

\subsection{Coverage Evaluation - Cluttered Obstacle Field}
We evaluate coverage in a cluttered environment under varying actuator noise ($W_c = 0.05 \text{ m}/\text{s}^3, W_c = 0.25 \text{ m}/\text{s}^3, W_c = 0.5 \text{ m}/\text{s}^3$), with $\mathcal{P}^+$ for each set shown in \cref{fig:varying_office_envs}. For each level of actuator noise, we generate three evaluation sets of varying difficulty. The easy and medium-difficulty sets consist of distributions that satisfy all assumptions required for \cref{thm: completeness_with_obstacles}. The easy set has $\Sigma = 0.1 cI$, and the medium-difficulty set has $\Sigma = 0.5 cI$, with $c$ defined as in \cref{thm: completeness_with_obstacles}. The hard set consists of distributions with non-stationary initial means and large initial covariance.

We show coverage results for these sets in \cref{fig: stationary_coverage_results}. Consistent with \cref{thm: completeness_with_obstacles}, PRISM achieves 100\% coverage on the easy and medium-difficulty evaluation sets regardless of actuator noise magnitude, while performance degrades for all baseline methods as the actuator noise magnitude increases. PRISM achieves 97-100\% coverage on the hard evaluation set, although \cref{thm: completeness_with_obstacles} is insufficient to prove coverage. All other methods achieve less than 45\% coverage on the hard evaluation set regardless of actuator noise magnitude.

PRISM takes between 4 and 5 minutes to build a roadmap for each level of actuator noise, scaling well to a more cluttered environment. In contrast, CS-BRM takes hours to generate a roadmap for each noise level; its velocity heuristic creates thousands of edges that all must be checked for collisions to compute path costs, slowing down roadmap generation.
\cref{fig: coverage_per_iteration} visualizes roadmap coverage as a function of offline computation time for the easy and medium-difficulty evaluation sets with small actuator noise. Because PRISM does not rely on incremental sampling, it achieves 100\% coverage much more efficiently than all other methods. We also find that MAX-COV-BALL achieves higher coverage for a given time than MAXCOVAR, and MAXCOVAR achieves higher coverage for a given time than RRT, which is faster than RRT$^\star$.

\begin{table}[t] 
\centering
\caption{Minimum, median, mean, 75th percentile, 90th percentile, and maximum time (in seconds) needed to compute a feasible path across all cluttered environment scenarios.}
\label{tab: online_comp_time}
\begin{tabular}{|c|c|c|c|c|c|c|}
\hline
\textbf{Method} & \textbf{Min} & \textbf{Med.} & \textbf{Mean} & \textbf{P75} & \textbf{P90} & \textbf{Max} \\
\hline
RRT & \textbf{0.03} & 0.72 & \textbf{0.65} & \textbf{0.74} & 0.77 & \textbf{0.80} \\
\hline
RRT$^\star$ & \textbf{0.03} & 0.71 & \textbf{0.65} & \textbf{0.74} & 0.77 & 0.82 \\
\hline
MAXCOVAR & 0.65 & 0.71 & 0.72 & \textbf{0.74} & \textbf{0.76} & 0.99 \\
\hline
MAX-COV-BALL & 1.22 & 3.60 & 3.49 & 3.71 & 3.78 & 3.89 \\
\hline
\hline
CS-BRM & 8.53 & 11.86 & 11.79 & 12.98 & 13.28 & 15.12 \\
\hline
\textbf{PRISM} & 0.38 & \textbf{0.59} & \textbf{0.65} & 0.80 & 0.86 & 1.10 \\
\hline
\end{tabular}
\end{table}
\cref{tab: online_comp_time} presents statistics for the time required to search for a feasible path across all cluttered environment scenarios for each method. CS-BRM is slow to find a feasible path due to the computational cost incurred by checking for collisions. MAX-COV-BALL is slowed by solving multiple SDPs to connect each node to the tree. PRISM takes 0.65 seconds on average to find a feasible path, tied with the fastest baselines.

\subsection{Cost Evaluation - Cluttered Environment}

We apply PRISM's local optimization module (see \cref{subsec: local_refinement}) to the feasible trajectories found by all methods for the easy evaluation set with small process noise and compare the final costs. We visualize the cost results in \cref{fig: cost_scatter}. After local optimization, PRISM consistently achieves low cost, with median cost of 97.9. CS-BRM and RRT$^\star$ achieve median costs of 311.3 and 531.8 prior to local optimization, and reduce median costs to 110.8 (2.8x reduction) and 114.5 (4.6x reduction) after local optimization. Nevertheless, both have higher median, 25th percentile, and 75th percentile cost than PRISM after local optimization. RRT, MAXCOVAR, and MAX-COV-BALL perform worse, probably because they are more likely to return initial guesses that belong to suboptimal modes. \cref{tab: local_opt_time} and \cref{fig: pareto_front} present results on the local optimization time for each method, and \cref{fig: pareto_front} also visualizes the evolution of trajectory cost over time. Although PRISM typically has higher initial feasible cost than CS-BRM and RRT$^\star$, local optimization converges more efficiently for PRISM trajectories and is feasible to run online. This is because PRISM's global planner produces trajectories with fewer edges than other methods, which speeds up local optimization, especially the shortcut finding phase.

\begin{table}[t] 
\centering
\caption{Median local optimization time, median final trajectory length, and median ratios of local optimization time to final trajectory length and initial feasible trajectory length.} \label{tab: local_opt_time}
\begin{tabular}{|p{2.15cm}|p{0.65cm}|p{1.25cm}|p{1.35cm}|p{1.25cm}|}
\hline
\textbf{Method} & \textbf{Comp. Time (s)} & \textbf{Final Traj. Length (s)} & \textbf{(Time/Final Traj. Length)} & \textbf{(Time/Init. Traj. Length)}\\
\hline
RRT & 22.5 & 12.4 & 1.74 & 0.14 \\
\hline
RRT$^\star$ & 7.51 & 12.0 & \textbf{0.57} & 0.17 \\
\hline
MAXCOVAR & 19.7 & 14.0 & 1.40 & 0.15 \\
\hline
MAX-COV-BALL & 13.2 & 13.4 & 0.95 & \textbf{0.13} \\
\hline 
\hline
CS-BRM & 14.4 & 11.5 & 1.28 & 0.43 \\
\hline
\textbf{PRISM} & \textbf{5.93} & \textbf{9.8} & 0.60 & 0.20\\
\hline
\end{tabular}
\end{table}
\section{Conclusion}\label{sec: conclusion}
We introduce PRISM, a new multi-query planning algorithm for belief space planning under state and control constraints. The main idea of PRISM is to separate mean and covariance steering into a covariance shrinking phase followed by a deterministic mean steering phase. This approach allows PRISM to guarantee completeness over a planning problem with tightened constraints and fixed covariance and to maintain high coverage over the entire belief space. PRISM also includes a novel online local optimization method that co-optimizes the mean and covariance trajectories to reduce path cost. Compared to state-of-the-art sampling-based methods for constructing multi-query roadmaps in belief space \cite{rose2025efficient, aggarwal2024sdp, zheng2024cs, karaman2011sampling}, PRISM achieves superior coverage and produces lower-cost trajectories when planning in cluttered or confined spaces.
\appendix \section{} \label{sec:appendix}
\textbf{Proof of \cref{lemma: delta_shrinkability}}
\begin{proof}
From the proof of \cref{lemma: delta_shrinkability_induction}, $c_{1, \delta}^{(N)} = \min \mathcal{C}_\delta$ and $c_{2, \delta}^{(N)} = \max \mathcal{C}_\delta$ if $\mathcal{C}_\delta$ is nonempty. Suppose now that $\Sigma_1 \prec c_{1, \delta}^{(N)}I$ and that $\Sigma_1$ is $\delta$-shrinkable over $N$ time steps. Then, there exists $K_{1:N-1}, \Sigma_{1:N}$ which are a feasible solution to \cref{prob: covariance_shrinking_delta}. When $K_{1:N-1}$ is applied to $\lambda_{\min}(\Sigma_1)I$, the resulting covariance trajectory is $\Sigma^{(a)}_{1:N}$. The new solution $K_{1:N-1}, \Sigma^{(a)}_{1:N}$ satisfies all dynamics, control, and non-position state constraints in \cref{prob: covariance_shrinking_delta}. Further, because $\Sigma^{(a)}_N \preceq \Sigma_N$, we have that $\lambda_{\max}(\Sigma^{(a)}_N) \leq \lambda_{\min}(\Sigma_1)-\delta$, and so $K_{1:N-1}, \Sigma^{(a)}_{1:N}$ is a feasible solution to \cref{prob: covariance_shrinking_delta} for $cI$ with $c < c_{1, \delta}^{(N)}$. This is a contradiction; therefore, $\Sigma_1$ cannot be $\delta$-shrinkable over $N$ time steps. The same argument can be used to show that $\Sigma_1 \succ c_{2, \delta}^{(N)}I$ $\implies$ $\Sigma_1$ cannot be $\delta$-shrinkable over $N$ time steps. Now, consider $c_{1, \delta}^{(N)}I \preceq \Sigma_1 \preceq c_{2, \delta}^{(N)}I$. By \cref{lemma: delta_shrinkability_induction}, it follows that $\Sigma_1$ is $\delta$-shrinkable over $T$ time steps, with $T = \mathbb{T}(\lambda_{\max}(\Sigma_1), \lambda_{\min}(\Sigma_1), \delta, N)$.
\end{proof}
\textbf{Proof of \cref{lemma: delta_shrinkability_induction}}:
\begin{proof}
 Recall that $c_{1, \delta}^{(N)} = \inf \mathcal{C}_\delta$ and $c_{2, \delta}^{(N)} = \sup \mathcal{C}_\delta$, with $\mathcal{C}_\delta \!:=\! \{c \!\in\! \mathbb{R}_{\geq 0}\!:\!\! cI \text{ is }\delta-\text{strictly shrinkable} \text{ in } N \text{ steps} \}$ for fixed $\delta$. Because \cref{prob: covariance_shrinking_delta} is bounded by equalities and non-strict inequalities, $\mathcal{C}_\delta$ must be closed; if it is nonempty, $c_{1, \delta}^{(N)} = \min \mathcal{C}_\delta$ and $c_{2, \delta}^{(N)} = \max \mathcal{C}_\delta$. Suppose that ($K_{1:N-1}, \Sigma_{1:N}$) is a feasible solution to \cref{prob: covariance_shrinking_delta}. Then, when $K_{1:N-1}$ is applied to any $\Sigma_1^{(a)} \preceq \Sigma_1$, we can propagate the covariance trajectory $\Sigma^{(a)}_{1:N}$ using \cref{eq: covariance_dynamics}. The new solution $K_{1:N-1}, \Sigma^{(a)}_{1:N}$ may not satisfy $\lambda_{\max}(\Sigma^{(a)}_N) \leq \lambda_{\min}(\Sigma_1^{(a)})-\delta$, but it will satisfy all other constraints in \cref{prob: covariance_shrinking_delta} and will satisfy $\Sigma^{(a)}_i \preceq \Sigma_i$ $\forall i \in [N]$. $\Sigma_1 \preceq c_{2, \delta}^{(N)}I$ and $c \in [c_{1, \delta}, \lambda_{\max}(\Sigma_1)]$, so $c \in \mathcal{C}_\delta$. Define $z = \left\lceil \frac{\lambda_{\max}(\Sigma_1)-c}{\delta}\right\rceil$. Because $c \in \mathcal{C}_\delta$, for every $i \in [z-1]$, $\lambda_{\max}(\Sigma_1) - \delta i \in \mathcal{C}_\delta$ and so there exists a feedback control policy $K_{1:N-1}$ that steers any $\Sigma_1 \preceq (\lambda_{\max}(\Sigma_1) - \delta i)I$ to $\Sigma_N \preceq (\lambda_{\max}-\delta(i+1))I$ while obeying all control and non-position state constraints in \cref{prob: covariance_shrinking_delta}. There also exists a feedback control policy $K_{1:N-1}$ that steers $\Sigma_1 \preceq cI$ (with $cI \succeq (\lambda_{\max}(\Sigma_1)-\delta z)I$) to $\Sigma_N \preceq (c-\delta)I$ while obeying all control and non-position state constraints in \cref{prob: covariance_shrinking_delta}. Concatenating all of these control policies results in a feedback control policy $K_{1:(z+1)(N-1)}$ that steers $\Sigma_1 \preceq \lambda_{\max}(\Sigma_1)I$ to $\Sigma_{z(N-1)+N} \preceq (c-\delta)I$ over $z(N-1)+N$ steps while obeying all control and non-position state constraints in \cref{prob: covariance_shrinking_delta}.
\end{proof}
\textbf{Proof of \cref{corollary: finite_time_shrinkability}}:
\begin{proof}
\cref{corollary: finite_time_shrinkability} is a special case of \cref{lemma: delta_shrinkability_induction}.
\end{proof}
\textbf{Proof of \cref{corollary: maintainability}}
\begin{proof}
Any valid solution to \cref{prob: covariance_shrinking_delta} with $\Sigma_1 = c_{1, \delta} I$ will satisfy $\Sigma_N \preceq (c_{1, \delta}^{(N)}-\delta) I \preceq c_{1, \delta}^{(N)} I$.
\end{proof}
\textbf{Proof of \cref{lemma: r_delta_strict_shrinkability}}
\begin{proof}
Because $\Sigma_1$ is $\delta$-strictly shrinkable in $N$ steps and $\mu_1$ is stationary, \cref{prob: covariance_shrinking_delta} must have a feasible solution with covariance trajectory $\Sigma_{1:N}$. Then, \cref{prob: covariance_shrinking_r} is feasible for any $r \geq \max_k \lambda_{\max}(P\Sigma_kP^T)$. \cref{prob: covariance_shrinking_r} has a convex and nonempty feasible set and a continuous and convex objective function; therefore, by the Weierstrass extreme value theorem, there must exist $r^\star$ such that $r^\star$ is the minimizer of $r$ over the feasible set of \cref{prob: covariance_shrinking_r}.

If $\mathbb{B}_p(P\mu_1, \Phi^{-1}(1-\epsilon)\sqrt{r^\star}) \in \mathcal{S}$, then $\mathbf{x} \in \mathbb{B}_p(P\mu_1, \Phi^{-1}(1-\epsilon)\sqrt{r^\star}) \implies \mathbf{x} \in \mathcal{S}$. It follows that for any $\mathbf{x} \in \mathbb{R}^p$ such that $||\mathbf{x}||_2 = 1$, ${a_i^s}^T(P\mu_1 + \Phi^{-1}(1-\epsilon)\sqrt{r^\star}\mathbf{x}) \leq b_i^s$ for all $a_i^s, b_i^s$ such that $\mathcal{S} = \{\mathbf{x} \in \mathbb{R}^p : \bigcap_i {a_i^s}^T\mathbf{x} \leq b_i^s\}$. Because $||a_i^s||_2 = 1$, $||\mathbf{x}||_2 = 1$, ${a_i^s}^T\mathbf{x} = 1$ when $\mathbf{x} = a_i^s$. Therefore, ${a_i^s}^TP\mu_1 + \Phi^{-1}(1-\epsilon)\sqrt{r^\star}\leq b_i^s$. Because $r^\star \geq \max_k \lambda_{\max}(P\Sigma_kP^T)$ and $\lambda_{\max}(P\Sigma_kP^T) \geq {a_i^s}^TP\Sigma_kP^Ta_i^s$, it then follows that ${a_i^s}^TP\mu_1 + \Phi^{-1}(1-\epsilon)\sqrt{{a_i^s}^TP\Sigma_kP^Ta_i^s} \leq b_i^s$ for all $i, k$. Rearranging yields $\Phi^{-1}(1-\epsilon)^2({a_i^s}^TP\Sigma_kP^Ta_i^s) \leq (b_i^s - {a_i^s}^TP\mu_1)^2$ for all $i, k$; therefore, $\mathcal{N}(\mu_1, \Sigma_1)$ is $\delta$-strictly shrinkable in $\mathcal{S}$ by the definition of \cref{prob: covariance_shrinking_delta_S}. 
\end{proof}
\textbf{Proof of \cref{thm: r_c_shrinkability}}
\begin{proof}
Suppose $\Sigma_1 \prec c_{1, \delta}^{(N)}I$. By \cref{lemma: delta_shrinkability}, $\mathcal{N}(\mu_1, \Sigma_1)$ is not $\delta$-strictly shrinkable in $N$ steps, and so cannot be $\delta$-strictly shrinkable in $\mathcal{S}$ in $N$ steps. Now, suppose that $c_{1, \delta}^{(N)}I \preceq \Sigma_1 \preceq c_{d_1} I$. By \cref{lemma: r_delta_strict_shrinkability}, $\lambda_{\max}(\Sigma_1)I$ is $\delta$-strictly shrinkable in $\mathcal{S}$ in $N$ steps and by \cref{corollary: constrained_path_existence}, there exists a control policy over $T = \mathbb{T}(\lambda_{\max}(\Sigma_1), \lambda_{\min}(\Sigma_1), \delta, N)$ steps under which $\Sigma_1$ is $\delta$-strictly shrinkable in $\mathcal{S}$.

Finally, suppose that $\Sigma_1 \succ c_{d_2}I$. It follows that $r^\star(\Sigma_1) > r^\star(c_{d_2}I)$ and so any $N$-step control policy applied to $\Sigma_1$ produces a covariance trajectory $\Sigma_{1:N}$ such that $\max_k \lambda_{\max}(P\Sigma_kP^T) = r^\star(\Sigma_1) > r^\star(c_{d_2})$. $\mathcal{S} \in \text{Int}(\mathbb{B}_p(P\mu_1, d_2))$, so for any $\mathbf{x}$ such that $||\mathbf{x}||_2 = 1$ and for any $a_i^s, b_i^s$ such that $\mathcal{S} = \{\mathbf{x} \in \mathbb{R}^p : \bigcap_i {a_i^s}^T\mathbf{x} \leq b_i^s\}$, ${a_i^s}^T(P\mu_1 + d_2 \mathbf{x}) > b_i^s$. Now, for any $k$, $\sqrt{{a_i^s}^TP\Sigma_kP^Ta_i^s} = \sqrt{{a_i^s}^T \sum_j \lambda_j \mathbf{v}_j\mathbf{v}_j^T a_i^s}$ by the spectral theorem. For some $k$, $r^\star(\Sigma_1)$ is an eigenvalue of $P\Sigma_kP^T$, we can say that $\max_k \sqrt{{a_i^s}^TP\Sigma_kP^Ta_i^s} \geq \sqrt{r^\star(\Sigma_1)}{a_i^s}^T\mathbf{v}_j$ for some $\mathbf{v}_j$. Because $\Phi^{-1}(1-\epsilon)\sqrt{r^\star(\Sigma_1)}{a_i^s}^T\mathbf{v}_j \geq d_2 {a_i^s}^T\mathbf{v}_j$, it follows that $\max_k {a_i^s}^TP\mu_1 + \Phi^{-1}(1-\epsilon)\sqrt{{a_i^s}^TP\Sigma_kP^Ta_i^s} > b_i^s$ and therefore that $\Sigma_1$ is not $\delta$-shrinkable in $N$ steps in $\mathcal{S}$.
\end{proof}
\textbf{Proof of \cref{corollary: constrained_path_existence}}
\begin{proof}
\cref{corollary: constrained_path_existence} follows from the same argument as \cref{lemma: delta_shrinkability_induction} for the $\delta$-shrinkability in $\mathcal{S}$ case.
\end{proof}
\textbf{Proof of \cref{lemma: decomposition}}
\begin{proof}
Suppose that $\mathcal{X}_p^- \setminus \mathcal{P}^+$ is equal to the Minkowski sum $\mathcal{S}^- \oplus \mathbb{B}_p(0, \epsilon_p)$ for some $\epsilon_p > 0$, with $\mathcal{S}^-$ connected. $\mathcal{X}_p^- \setminus \mathcal{P}^+$ must therefore also be connected and can be written exactly as the union of a finite number of convex polytopes $\bigcup_{i} \mathcal{S}_i$ with $\text{Int}(\mathcal{S}_i) \neq \emptyset$ for all $i$.

Consider $\mathcal{S}_i$. There must exist some point $x \in \text{Int}(\mathcal{S}_i)$ s.t. $x \in \mathbb{B}_p(c, \epsilon_p)$ for some $c \in \mathcal{S}^-$, but $\mathbb{B}_p(c, \epsilon_p) \not \subset \mathcal{S}_i$. Otherwise, for every $\mathbb{B}_p(c, \epsilon_p) \in \mathcal{X}_p^- \setminus \mathcal{P}^+$ s.t. $c \in \mathcal{S}^-$, $\mathbb{B}_p(c, \epsilon_p) \subset \mathcal{S}_i$ or $\mathbb{B}_p(c, \epsilon_p) \cap \text{Int}(\mathcal{S}_i) = \emptyset$. This is impossible unless $\mathcal{S}_i = \mathcal{X}_p^- \setminus \mathcal{P}^+$, because $\mathcal{S}^-$ is connected. Then, $\text{Int}(\mathbb{B}_p(c, \epsilon_p) \cap \mathcal{S}_i) \neq \emptyset$, and $\text{Int}(\mathbb{B}_p(c, \epsilon_p) \cap ((\mathcal{X}_p^- \setminus \mathcal{P}^+) \setminus \mathcal{S}_i)) \neq \emptyset$ for some $c$. Therefore, $\text{Int}(\mathbb{B}_p(c, \epsilon_p) \cap \mathcal{S}_j) \neq \emptyset$ for some $j$. Then, we can inscribe a polytope $\mathcal{S}_{i \cap j} \subset \mathbb{B}_p(c, \epsilon_p)$ closely approximating $\mathbb{B}_p(c, \epsilon_p)$ (i.e. s.t. $\mathbb{B}_p(c, \epsilon_p-\delta) \subset \mathcal{S}_{i \cap j}$ for small $\delta$) s.t. $\text{Int}(\mathcal{S}_{i \cap j} \cap \mathcal{S}_i) \neq \emptyset$, $\text{Int}(\mathcal{S}_{i \cap j} \cap \mathcal{S}_j) \neq \emptyset$. Suppose we add $\mathcal{S}_{i \cap j}$ to the cover $\bigcup_i \mathcal{S}_i$ for every $i, j$ such that a ball can be constructed such that $\text{Int}(\mathbb{B}_p(c, \epsilon_p) \cap \mathcal{S}_i) \neq \emptyset$ and $\text{Int}(\mathbb{B}_p(c, \epsilon_p) \cap \mathcal{S}_j) \neq \emptyset$ for some $j$.

Now consider $\mathcal{S}_A, \mathcal{S}_B$ s.t. $\mathcal{X}_p^- \setminus \mathcal{P}^+ = \mathcal{S}_A \cup \mathcal{S}_B = \bigcup_{i} \mathcal{S}_i$. $\mathcal{X}_p^- \setminus \mathcal{P}^+$ is connected and therefore $\mathcal{S}_A \cap \mathcal{S}_B \neq \emptyset$. Then, there must exist $\mathbb{B}_p(c, \epsilon_p) \in \mathcal{X}_p^- \setminus \mathcal{P}^+$ s.t. $c \in \mathcal{S}^-$ and $\text{Int}(\mathbb{B}_p(c, \epsilon_p) \cap \mathcal{S}_i) \neq \emptyset$ and $\text{Int}(\mathbb{B}_p(c, \epsilon_p) \cap \mathcal{S}_j) \neq \emptyset$ for some $\mathcal{S}_i \in \mathcal{S}_A, \mathcal{S}_j \in \mathcal{S}_B$ (otherwise, $\mathcal{S}_A = \mathcal{X}_p^- \setminus \mathcal{P}^+, \mathcal{S}_B = \emptyset$ or $\mathcal{S}_A = \emptyset, \mathcal{S}_B = \mathcal{X}_p^- \setminus \mathcal{P}^+$). Therefore, $\mathcal{S}_{i \cap j}$ is in the cover and either $\mathcal{S}_{i \cap j} \in \mathcal{S}_A$ or $\mathcal{S}_{i \cap j} \in \mathcal{S}_B$. Either way, $\text{Int}(\mathcal{S}_A \cap \mathcal{S}_B) \neq \emptyset$.
\end{proof}
\textbf{Proof of \cref{thm: controllability}}
\begin{proof}
Suppose that $\mu_\mathcal{I}, \mu_\mathcal{G} \in \text{Int}(\mathcal{S})$ and $(A, B)$ is controllable. It follows that $\exists \epsilon > 0$ such that $\mathbb{B}_p(P\mu_\mathcal{I}, \epsilon) \subset \mathcal{S}$ and such that $\mathbb{B}_p(P\mu_\mathcal{G}, \epsilon) \subset \mathcal{S}$, or equivalently ${a_i^s}^T\mu_\mathcal{I} \leq b_i^s - \epsilon$ and ${a_i^s}^T\mu_\mathcal{G} \leq b_i^s - \epsilon$ for all $i$ such that $\mathcal{S} := \{\mathbf{x} \in \mathbb{R}^p: \bigcap_i {a_i^s}^T\mathbf{x} \leq b_i^s \}$. It follows by convexity that for any $\alpha \in [0, 1]$, $\mathbb{B}_p(P(\alpha \mu_\mathcal{I} + (1-\alpha)\mu_\mathcal{G}), \epsilon) \subset \mathcal{S}$ and also that $A(\alpha \mu_\mathcal{I} + (1-\alpha)\mu_\mathcal{G}) = \alpha \mu_\mathcal{I} + (1-\alpha)\mu_\mathcal{G}$. We will assume without loss of generality that $\mathbb{B}_p(0, \epsilon) \subset \mathcal{U}^-$ and $\mathbb{B}_p(0, \epsilon) \subset \mathcal{X}_s^-$. Now, consider the minimum-fuel optimal control to reach $(\mu_\mathcal{G}-\mu_\mathcal{I})\delta + \mu_\mathcal{I}$ from $\mu_\mathcal{I}$ over $N-1$ control steps, with $\mu_1 = \mu_\mathcal{I}$, $\mu_N = \mu_\mathcal{G}$. If $(A, B)$ is controllable, such a trajectory must exist if $N \geq n$. The control inputs at time steps $k \in [N-1]$ can be expressed as $\mathbf{v}_k = B^T{A^T}^{N-2-k}(\sum_{t=1}^{N-1}A^tBB^T{A^T}^t)^{-1}(\mu_\mathcal{G}-\mu_\mathcal{I})\delta$. Because $\mu_\mathcal{I}$ is stationary, the state trajectory can be expressed as $\mu_k = \mu_1 + \sum_{t=1}^{k-1}A^{k-2-t}B\mathbf{v}_t$. Defining $C_1 = \max_k ||B^T{A^T}^{N-2-k}(\sum_{t=1}^{N-1}A^tBB^T{A^T}^t)^{-1}(\mu_\mathcal{G}-\mu_\mathcal{I})||_2$ and $C_2 = \max_k ||\sum_{t=1}^{k-1}A^{k-2-t}B||_2 C_1$, we can see that $C_1, C_2$ are finite constant expressions of $A, B, \mu_\mathcal{G}$, and $\mu_\mathcal{I}$, with $||\mathbf{v}_k||_2 \leq C_1 \delta$ for all $k$ and $||\mu_k - \mu_\mathcal{I}||_2 \leq C_2 \delta$ for all $k$. Therefore, if $0 < \delta \leq \min(\frac{\epsilon}{C_2}, \frac{\epsilon}{C_1})$, the state and control trajectory satisfy all constraints, and so there exists a solution to \cref{prob: mean_steering_single_polytope} with the initial state equal to $\mu_\mathcal{I}$ and the goal state equal to $\mu_\mathcal{I} + \delta(\mu_\mathcal{G}-\mu_\mathcal{I})$ over $N$ steps. We can chain $\lceil1/\delta\rceil$ of these trajectories to reach $\mu_\mathcal{G}$; therefore, \cref{prob: mean_steering_single_polytope} must have a feasible solution from $\mu_\mathcal{I}$ to $\mu_\mathcal{G}$ in $N\lceil \frac{1}{\delta} \rceil$ steps.
\end{proof}
\textbf{Proof of \cref{thm: completeness_without_obstacles}}
\begin{proof}
In the absence of position constraints, PRISM does not build a graph offline (see \cref{subsec: planning_without_position_constraints}). Because $\mu_\mathcal{I} = A\mu_\mathcal{I}$, PRISM begins by repeatedly solving \cref{prob: shrinking_without_position_constraints} to shrink $\Sigma_1$ to $\Sigma_T \preceq c_{1, \delta}^{(N)}I$. By \cref{corollary: finite_time_shrinkability}, there exists a control policy satisfying all control and non-position state constraints that shrinks $\Sigma_1$ to $\Sigma_T \preceq c_{1, \delta}^{(N)}I$ over $T = \mathbb{T}(\lambda_{\max}(\Sigma_1), c_{1, \delta}^{(N)}, \delta, N)$ time steps. Therefore, solving \cref{prob: shrinking_without_position_constraints} will be (recursively) feasible and solving it $\lceil\frac{\lambda_{\max}(\Sigma_1) - c_{1, \delta}^{(N)}}{\delta} \rceil$ times will reach $\Sigma_T \preceq c_{1, \delta}^{(N)}I$. Because $||\mu_\mathcal{I} - \mu_\mathcal{G}||_2$ is finite, we can construct a safe polytope $\mathcal{S}$ of finite volume such that $\mu_\mathcal{I}, \mu_\mathcal{G} \in \text{Int}(\mathcal{S})$. Then, by \cref{thm: controllability}, \cref{alg: prism_mean_steering} will find a finite-time belief steering edge connecting $\mathcal{N}(\mu_\mathcal{I}, c_{1, \delta}^{(N)}I)$ to $\mathcal{N}(\mu_\mathcal{G}, c_{1, \delta}^{(N)}I)$. The final distribution satisfies $c_{1, \delta}^{(N)}I \preceq \Sigma_\mathcal{G}$, so the sequence of edges found by PRISM is a path from $\mathcal{I}$ to $\mathcal{G}$. Assuming each subproblem (e.g. solving a finite-size SDP or solving a finite-size graph search problem) is solvable in finite time, PRISM will return a trajectory of finite length in finite time.
\end{proof}
\textbf{Proof of \cref{thm: completeness_with_obstacles}}
\begin{proof}
By \cref{lemma: graph_connectivity}, PRISM constructs a strongly connected mean space planning graph in its offline planning phase. The convex sets represented in the graph completely cover $\mathcal{X}_p^- \setminus \mathcal{P}^+$. When planning online, PRISM first finds $\mathcal{S} \subset (\mathcal{X}_p \setminus \mathcal{P})$ such that $\mathbb{B}_p(P\mu_\mathcal{I}, d) \in \mathcal{S}$ by finding separating hyperplanes between $\mathbb{B}_p(P\mu_\mathcal{I}, d)$ and each obstacle in $\mathcal{P}$.

Next, PRISM repeatedly solves \cref{prob: shrinking_subject_to_constraints}. Because $\Sigma_\mathcal{I} \preceq cI$ with $\Phi^{-1}(1-\epsilon)\sqrt{r^\star(c)} = d$, there exists a control policy satisfying all control, non-position state constraints, and position constraints from $\mathcal{S}$ that shrinks $\Sigma_\mathcal{I}$ to $\Sigma_1 \preceq c_{1, \delta}^{(N)} I$ over $T = \mathbb{T}(\lambda_{\max}(\Sigma_\mathcal{I}), c_{1, \delta}^{(N)}, \delta, N)$ steps (by \cref{corollary: constrained_path_existence}). Therefore, \cref{prob: shrinking_subject_to_constraints} must be (recursively) feasible, and solving it repeatedly will lead to $\Sigma_T \preceq c_{1, \delta}^{(N)}I$.

Next, PRISM will find a polytope $\mathcal{S}_i$ in the convex set decomposition of $\mathcal{X}_p^- \setminus \mathcal{P}^+$ such that $\mu_\mathcal{I} \in \text{Int}(\mathcal{S}_i)$. The convex set decomposition is complete and connected (by \cref{asm: connectedness}) and $\mu_\mathcal{I} \in \text{Int}(\mathcal{X}_p^- \setminus \mathcal{P}^+)$, so such a set must exist. By \cref{thm: controllability}, \cref{alg: prism_mean_steering} will find a finite-time belief steering edge connecting $\mathcal{N}(\mu_\mathcal{I}, c_{1, \delta}^{(N)}I)$ to $\mathcal{N}(\mu_i, c_{1, \delta}^{(N)}I)$, where $\mu_i = \text{CENTROID}(\mathcal{S}_i)$. Similarly, PRISM will find a polytope $\mathcal{S}_j$ in the convex set decomposition of $\mathcal{X}_p^- \setminus \mathcal{P}^+$ such that $\mu_\mathcal{G} \in \text{Int}(\mathcal{S}_j)$. Because $\mu_\mathcal{G} \in \text{Int}(\mathcal{X}_p^- \setminus \mathcal{P}^+)$ and the convex set decomposition is complete, such a set must exist. By \cref{lemma: graph_connectivity}, the mean space planning graph is strongly connected, so graph search will find a path (or sequence of edges) connecting $v_i$, the vertex associated with $\mathcal{S}_i$, to $v_j$, the vertex associated with $\mathcal{S}_j$. By \cref{thm: controllability}, \cref{alg: prism_mean_steering} will find a finite-time belief steering edge connecting $\mathcal{N}(\mu_j, c_{1, \delta}^{(N)}I)$ to $\mathcal{N}(\mu_\mathcal{G}, c_{1, \delta}^{(N)}I)$, where $\mu_j = \text{CENTROID}(\mathcal{S}_j)$. The final distribution satisfies $c_{1, \delta}^{(N)}I \preceq \Sigma_\mathcal{G}$, so the sequence of edges found by PRISM is a path from $\mathcal{I}$ to $\mathcal{G}$. Assuming each subproblem (e.g. solving a finite-size SDP or solving a finite-size graph search problem) is solvable in finite time, PRISM will return a trajectory of finite length in finite time.
\end{proof}
\textbf{Proof of \cref{thm: soundness}}
\begin{proof}
Any feasible global trajectory found by PRISM can only contain edges found by solving \cref{prob: stopping_phase_convex} (or \cref{prob: stopping_phase_convex_no_pos_constraints} in the absence of position constraints), edges found by combining a covariance trajectory found by solving \cref{prob: shrinking_subject_to_constraints} (or \cref{prob: shrinking_without_position_constraints} in the absence of position constraints) with zero feedforward control, and edges found by \cref{alg: prism_mean_steering}. PRISM applies local optimization to this sequence of edges. PRISM chains maneuvers such that boundary constraints between edges are always satisfied. Any solution to \cref{prob: stopping_phase_convex} satisfies \cref{eq: mean_dynamics}, \cref{eq: convex_covariance_dynamics}, \cref{eq: schur_complement_constraint}, \cref{eq: convex_ctrl_chance_constraint}, \cref{eq: convex_pos_chance_constraint}, and \cref{eq: convex_nonpos_state_chance_constraint}. Satisfying \cref{eq: convex_covariance_dynamics} and \cref{eq: schur_complement_constraint} implies satisfaction of \cref{eq: covariance_dynamics}. Satisfying \cref{eq: convex_ctrl_chance_constraint} implies satisfaction of \cref{eq: control_chance_constraint}, satisfying \cref{eq: convex_pos_chance_constraint} implies satisfaction of \cref{eq: pos_chance_constraint}, and satisfying \cref{eq: convex_nonpos_state_chance_constraint} implies satisfaction of \cref{eq: nonpos_state_chance_constraint}. Then, any solution to \cref{prob: stopping_phase_convex} is sound by \cref{def: soundness}. Similarly, any solution to \cref{prob: stopping_phase_convex_no_pos_constraints} satisfies \cref{eq: mean_dynamics}, \cref{eq: covariance_dynamics}, \cref{eq: control_chance_constraint} and \cref{eq: nonpos_state_chance_constraint}, and has no position constraints (i.e. $M_{s_k} = 0$ for all $k$).

Now, consider edges found by combining a solution to \cref{prob: shrinking_subject_to_constraints} (or \cref{prob: shrinking_without_position_constraints}) with zero feedforward control, such that $\mathbf{v}_k = 0$ for all $k \in [N-1]$ and $\mu_k = \mu_T$ for all $k \in [N]$. PRISM only finds such edges if $\mu_T = A\mu_T$; therefore, \cref{eq: mean_dynamics} is satisfied for $(\mu_{1:N}, \mathbf{v}_{1:N-1})$. Any solution $(\Sigma_{1:N}, K_{1:N-1})$ to \cref{prob: shrinking_subject_to_constraints} must satisfy \cref{eq: convex_covariance_dynamics} and \cref{eq: schur_complement_constraint} and therefore must satisfy \cref{eq: covariance_dynamics}. Any solution $(\Sigma_{1:N}, K_{1:N-1})$ to \cref{prob: shrinking_subject_to_constraints} must also satisfy $\Phi^{-1}(1-\epsilon)^2({a_j^u}^TY_ka_j^u) \leq (b_j^u - \delta)^2$ for $k \in [N-1]$, $j \in [M_u]$. Therefore, it must also satisfy $\Phi^{-1}(1-\epsilon)\sqrt{{a_j^u}^TY_ka_j^u} \leq b_j^u$, and because $Y_k \succeq K_k\Sigma_kK_k^T$ by \cref{eq: schur_complement_constraint}, must satisfy $\Phi^{-1}(1-\epsilon)\sqrt{{a_j^u}^TK_k\Sigma_kK_k^Ta_j^u} \leq b_j^u$. Because $\mathbf{v}_k =0$ for all $k$, $\Phi^{-1}(1-\epsilon)\sqrt{{a_j^u}^TK_k\Sigma_kK_k^Ta_j^u} + {a_j^u}^T\mathbf{v}_k \leq b_j^u$ for $k \in [N-1]$, $j \in [M_u]$. Similarly, any solution $(\Sigma_{1:N}, K_{1:N-1})$ to \cref{prob: shrinking_subject_to_constraints} must also satisfy $\Phi^{-1}(1-\epsilon)^2({a_i^{x_s}}^TS\Sigma_kS^Ta_i^{x_s}) \leq (b_i^{x_s} - \delta)^2$ and therefore $\Phi^{-1}(1-\epsilon)\sqrt{{a_i^{x_s}}^TS\Sigma_kS^Ta_i^{x_s}} \leq b_i^{x_s}$ for $k \in [N]$, $i \in [M_{x_s}]$. Recalling that $\mu_k = \mu_T$ for all $k$ and that $A\mu_T = \mu_T$ and therefore that $S\mu_T = 0$, it follows that $\Phi^{-1}(1-\epsilon)\sqrt{{a_i^{x_s}}^TS\Sigma_kS^Ta_i^{x_s}} + {a_i^{x_s}}^TS\mu_k \leq b_i^{x_s}$ for $k \in [N]$, $i \in [M_{x_s}]$ for $k \in [N]$, $i \in [M_{x_s}]$. Finally, any solution $(\Sigma_{1:N}, K_{1:N-1})$ to \cref{prob: shrinking_subject_to_constraints} must also satisfy $\Phi^{-1}(1-\epsilon)^2({a_i^{s}}^TP\Sigma_kP^Ta_i^{s}) \leq (b_i^{s} - {a_i^s}^TP\mu_1)^2$ for all $k \in [N]$, $\ell \in [M_s]$. With $\mu_k = \mu_T = \mu_1$ for all $k \in [N]$, ${a_i^s}^TP\mu_k + \Phi^{-1}(1-\epsilon)\sqrt{{a_i^{s}}^TP\Sigma_kP^Ta_i^{s}} \leq b_i^{s}$ for all $k \in [N]$, $\ell \in [M_s]$. Then, any edge combining a solution to \cref{prob: shrinking_subject_to_constraints} (or, by a similar argument, a solution to \cref{prob: shrinking_without_position_constraints}) with zero feedforward control must satisfy \cref{def: soundness}. By \cref{lemma: mean_soundness}, edges found by \cref{alg: prism_mean_steering} must be sound. So, any global PRISM trajectory is a sequence of sound edges. By \cref{lemma: recursive_feasibility}, local optimization preserves soundness of the input sequence of edges; then, any trajectory found by PRISM satisfies \cref{def: soundness}.
\end{proof}
\textbf{Proof of \cref{lemma: graph_connectivity}}
\begin{proof}
By \cref{asm: connectedness} and \cref{lemma: decomposition}, $\mathcal{X}_p^-\setminus\mathcal{P}^+$ can be decomposed into a union of a finite number of convex polytopes such that $\mathcal{X}_p^- \setminus \mathcal{P}^+ = \bigcup_{i=1}^{M_s^-} \mathcal{S}_i$ (e.g. using \cite{shaikh2025exact}) and such that for any $\mathcal{S}_A, \mathcal{S}_B$ such that $\mathcal{X}_p^- \setminus \mathcal{P}^+ = \mathcal{S}_A \cup \mathcal{S}_B = \bigcup_{i=1}^{M_s^-} \mathcal{S}_i$, $\text{Int}(\mathcal{S}_A \cap \mathcal{S}_B) \neq \emptyset$. When $\delta > 0$, $\mathcal{X}_s^-$ and $\mathcal{U}^-$ both contain the origin and have nonempty interior. Then, by \cref{asm: controllability} and \cref{thm: controllability}, if $\mathcal{S}_i, \mathcal{S}_j \subset (\mathcal{X}_p^- \setminus \mathcal{P}^+)$ and $\mathcal{S}_i, \mathcal{S}_j$ have nonempty interior, with $\mu_i, \mu_j \in \text{Int}(\mathcal{S}_i), \text{Int}(\mathcal{S}_j)$, if $\text{Int}(\mathcal{S}_i \cap \mathcal{S}_j) \neq \emptyset$, \cref{prob: mean_steering}($\mu_i, \mu_j, \mathcal{U}^-, \mathcal{X}_s^-, \mathcal{S}_i, \mathcal{S}_j, T$) has a solution for some $T$. Therefore, for any $\mathcal{S}_i, \mathcal{S}_j$ in the union of convex sets covering $\mathcal{X}_p^- \setminus \mathcal{P}^+$ such that $\text{Int}(\mathcal{S}_i \cap \mathcal{S}_j) \neq \emptyset$, there exist vertices $v_i$ and $v_j$ in the mean space graph corresponding to $\mathcal{S}_i$ and $\mathcal{S}_j$, and there exist edges connecting $v_i$ to $v_j$ and $v_j$ to $v_i$. For any $\mathcal{S}_A, \mathcal{S}_B$ such that $\mathcal{X}_p^- \setminus \mathcal{P}^+ = \mathcal{S}_A \cup \mathcal{S}_B = \bigcup_{i=1}^{M_s^-} \mathcal{S}_i$, $\text{Int}(\mathcal{S}_A \cap \mathcal{S}_B) \neq \emptyset$, so it follows that $\exists \mathcal{S}_i \in \mathcal{S}_A$, $\mathcal{S}_j \in \mathcal{S}_B$ such that $\text{Int}(\mathcal{S}_i \cap \mathcal{S}_j) \neq \emptyset$. Thus, the mean space graph must be strongly connected.
\end{proof}
\textbf{Proof of \cref{lemma: mean_soundness}}
\begin{proof}
By \cref{thm: controllability}, \cref{prob: mean_steering_single_polytope} has a feasible solution from $\mu_\mathcal{I}$ to any mean in $\text{Int}(\mathcal{S}_\mathcal{I} \cap \mathcal{S}_\mathcal{J})$, and \cref{prob: mean_steering_single_polytope} also has a feasible solution from any mean in $\text{Int}(\mathcal{S}_\mathcal{I} \cap \mathcal{S}_\mathcal{J})$ to $\mu_\mathcal{J}$. Therefore, \cref{prob: mean_steering} must have a feasible solution $\mu_{1:2T-1}, \mathbf{v}_{1:2T-2}$ from $\mu_\mathcal{I}$ to $\mu_\mathcal{J}$ over $2T-1$ steps for some finite $T$ such that $T = t(N-1) + 1$ for some finite $t \in \mathbb{Z}_{> 0}$. By the constraints given in \cref{prob: mean_steering}, ($\mu_{1:2T-1}, \mathbf{v}_{1:2T-2}$) must satisfy \cref{eq: mean_dynamics} for $k \in [2T-2]$.

$K_{1:2T-2}$ is given by concatenating $K_{1:N-1}^{(c_{1, \delta})}$ $2t$ times, and that $\Sigma_{1:2T-1}$ is given by propagating the covariance dynamics in \cref{eq: covariance_dynamics} with $\Sigma_1 = c_{1, \delta}^{(N)}I$ and the feedback control policy given by $K_{1:2T-2}$. It follows that $\Sigma_{1:N} = \Sigma_{1:N}^{(c_{1, \delta})}$ by definition. It follows also that $\Sigma_N \preceq \Sigma_1$ and therefore that $\Sigma_{N:2N-1} \preceq \Sigma_{1:N}$. Then, $\Sigma_{z(N-1) + i} \preceq \Sigma_i^{(c_{1, \delta})}$ for any $i \in [N]$, $z \in [2t-1]$ and \cref{eq: covariance_dynamics} is satisfied for all $k \in [2T-2]$.

For all $k \in [2T-1]$, $i \in [M_{x_s}]$, ($\mu_{1:2T-1}, \mathbf{v}_{1:2T-2}$) obeys ${a_i^{x_s}}^TS\mu_k \leq b_i^{x_s} - \Phi^{-1}(1-\epsilon)\delta_{b_i^{x_s}}$. Recalling that $\delta_{b_i^{x_s}} = \max_k \sqrt{{a_i^{x_s}}^TS\Sigma_k^{(c_{1, \delta})}S^Ta_i^{x_s}}$ and that $\Sigma_{k} \preceq \Sigma_{\text{mod}(k, N)}^{(c_{1, \delta})}$ for all $k \in [2T-1]$, ${a_i^{x_s}}^TS\mu_k + \Phi^{-1}(1-\epsilon)\sqrt{{a_i^{x_s}}^TS\Sigma_kS^Ta_i^{x_s}} \leq b_i^{x_s}$ for all $k \in [2T-1]$,$i \in [M_{x_s}]$. Similarly, for all $k\in[2T-2]$, $j \in [M_u]$, ($\mu_{1:2T-1}, \mathbf{v}_{1:2T-2}$) obeys ${a_j^u}^T\mathbf{v}_k \leq b_j^u - \Phi^{-1}(1-\epsilon)\delta_{b_j^u}$. Recall that $\delta_{b_j^u} = \max_k \sqrt{{a_j^u}^TK_k^{(c_{1, \delta})}\Sigma_k^{(c_1, \delta)}{K_k^{(c_{1, \delta})}}^Ta_j^u}$, that $K_k = K_{\text{mod}(k, N-1)}^{(c_{1, \delta})}$ and that $\Sigma_{k} \preceq \Sigma_{\text{mod}(k, N)}^{(c_{1, \delta})}$ for all $k\in[2T-1]$. Then, ${a_j^u}^T\mathbf{v}_k + \Phi^{-1}(1-\epsilon)\sqrt{{a_j^u}^TK_k\Sigma_kK_k^Ta_j^u} \leq b_j^u$ for all $k \in [2T-2]$, $j \in [M_u]$.

\cref{alg: prism_mean_steering} inflates $\mathcal{S}_i$, $\mathcal{S}_j$, and $\mathcal{S}_{i \cap j}$ to find a sequence of safe sets in $\mathcal{X}_p\setminus\mathcal{P}$ that contains the lifted belief space trajectory. For all $k \in [2T-1]$, $\mu_k \in \mathcal{S}_k \subset (\mathcal{X}_p^- \setminus \mathcal{P}^+)$, where $\mathcal{S}_k = \mathcal{S}_i$ for $k \in [T-1]$, $\mathcal{S}_T = \mathcal{S}_i \cap \mathcal{S}_j$, and $\mathcal{S}_k = \mathcal{S}_j$ for $k = T+[2T-1]$. For all $\mathcal{P}_i \in \mathcal{P}$, $\mathcal{P}_i := \{\mathbf{x} \in \mathbb{R}^n: \cap_{j=1}^{M_{p_i}} {a_\ell^{p_j}}^TP\mathbf{x} \leq b_\ell^{p_j} \}$, and $\mathcal{P}_i^+ := \{\mathbf{x} \in \mathbb{R}^n: \cap_{j=1}^{M_{p_i}} {a_\ell^{p_j}}^TP\mathbf{x} \leq b_\ell^{p_j} - \Phi^{-1}(1-\epsilon)\sqrt{r^\star} \}$, with $r^\star = \max_k\lambda_{\max}(P\Sigma_k^{(c_{1, \delta})}P^T)$. Then, $\mathcal{S}_k \oplus \mathbb{B}_p(0, \Phi^{-1}(1-\epsilon)\sqrt{r^\star})$ can be separated from each obstacle in $\mathcal{P}$ by a hyperplane. We define $\mathcal{S}_k^+ = \text{INFLATE}(\mathcal{S}_k)$ as the intersection of $\mathcal{X}_p$ and these separating hyperplanes. Then, $\mathcal{S}_k^+ \in \mathcal{X}_p \setminus \mathcal{P}$, and $\mathcal{S}_k^+ = \{\mathbf{x} \in \mathbb{R}^n: \cap_{i=1}^{M_{s_k}^+}{a_i^{x_s^+}}^TP\mathbf{x} \leq b_i^{x_s^+} \}$. Further, $\mathcal{X}_p^- = \{\mathbf{x}\in\mathbb{R}^n: \cap_{i=1}^{M_{x_p}} {a_i^{x_p}}^TP\mathbf{x} \leq b_i^{x_p} - \Phi^{-1}(1-\epsilon)\delta_{b_i}^{x_p} \}$. Because $\delta_{b_i}^{x_p} = \max_k \sqrt{{a_i^{x_p}}^TP\Sigma_k^{(c_{1, \delta})}P^Ta_i^{x_p}} \geq \sqrt{{a_i^{x_p}}^TP\Sigma_kP^Ta_i^{x_p}}$ for all $k$ and $\sqrt{r^\star} \geq \sqrt{a^TP\Sigma_kP^Ta}$ for any $k$ and $a \in \mathbb{R}^p$ such that $||a||_2 = 1$, ${{a_i}^{x_s^+}}^T\mu_k + \Phi^{-1}(1-\epsilon)\sqrt{{{a_i}^{x_s^+}}^TP\Sigma_kP^T{{a_i}^{x_s^+}}} \leq b_i^{x_s^+}$ for all $k$ and all $i$. It follows that that \cref{eq: pos_chance_constraint} is satisfied for all $k \in [2T-1]$, $\ell \in [M_{s_k^+}]$. Finally, $\mu_1 = \mu_\mathcal{I}$ and $\mu_{2T-1} = \mu_\mathcal{J}$, $\Sigma_1 = c_{1, \delta}^{(N)}I$ and $\Sigma_{2T-1} \preceq c_{1, \delta}^{(N)}I$. Then, the boundary constraints are satisfied and so the edge $(\mu_{1:2T-1}, \mathbf{v}_{1:2T-2}, \Sigma_{1:2T-1}, K_{1:2T-2})$ found by \cref{alg: prism_mean_steering} must satisfy \cref{def: soundness}.
\end{proof}
\textbf{Proof of \cref{lemma: recursive_feasibility}}
\begin{proof}
During preprocessing, PRISM's local optimization module finds a set of reference values so that \cref{eq: convex_pos_chance_constraint} -- \cref{eq: convex_ctrl_chance_constraint} hold for each edge. This is feasible if the input $\mathcal{T}$ is a sequence of sound edges (see \cref{sec: preliminaries}), and ensures that \cref{prob: steer_edge} is feasible for the constraints and time allocation for each edge under the chosen set of reference values. Solutions to \cref{prob: steer_edge} must satisfy \cref{def: soundness}, because satisfying \cref{eq: convex_ctrl_chance_constraint} implies satisfaction of \cref{eq: control_chance_constraint}, satisfying \cref{eq: convex_nonpos_state_chance_constraint} implies satisfaction of \cref{eq: nonpos_state_chance_constraint}, satisfying \cref{eq: convex_pos_chance_constraint} implies satisfaction of \cref{eq: pos_chance_constraint} and satisfying \cref{eq: convex_covariance_dynamics} and \cref{eq: schur_complement_constraint} implies satisfaction of \cref{eq: covariance_dynamics}.

PRISM's local optimization module alternates between shortening edges, finding and applying shortcuts, and locally optimizing the trajectory. Suppose $\mathcal{T}$ is a sequence of edges $e_1, \ldots, e_T$ such that $e_i$ is sound and \cref{eq: convex_pos_chance_constraint} -- \cref{eq: convex_ctrl_chance_constraint} hold (after finding appropriate reference values) for all $i$. The edge shortening step consists of solving \cref{prob: steer_edge} for $N \in [N_c]$, where $N_c$ is the current edge length, for each edge and then selecting the lowest-cost solution. \cref{prob: steer_edge} is always feasible when $N = N_c$ and returns sound edges, so this step preserves soundness and completeness. The next step finds shortcuts by solving \cref{prob: steer_edge} between different nodes and then finds the minimum-cost path over the graph of current edges and candidate shortcut edges. Any shortcuts found will satisfy \cref{def: soundness}. If no shortcuts are found, this step will return the input sequence of edges; soundness and completeness are preserved. The next step concatenates all edges (including constraints, reference values, and time allocations) and solves \cref{prob: steer_edge} over the full trajectory. Each edge is a feasible solution to \cref{prob: steer_edge}, so solving \cref{prob: steer_edge} over the concatenated sequence of edges is guaranteed to be feasible and preserve soundness. Converting the resulting trajectory back into a sequence of edges also preserves soundness. Repeating these steps will always result in returning a sound trajectory that satisfies \cref{def: soundness}, and every intermediate trajectory will also satisfy \cref{def: soundness}.
\end{proof}


\small
\renewcommand{\baselinestretch}{0.9}
\bibliographystyle{plain}
\bibliography{references}
\begin{IEEEbiography}[{\includegraphics[width=1in,height=1.25in,clip,keepaspectratio]{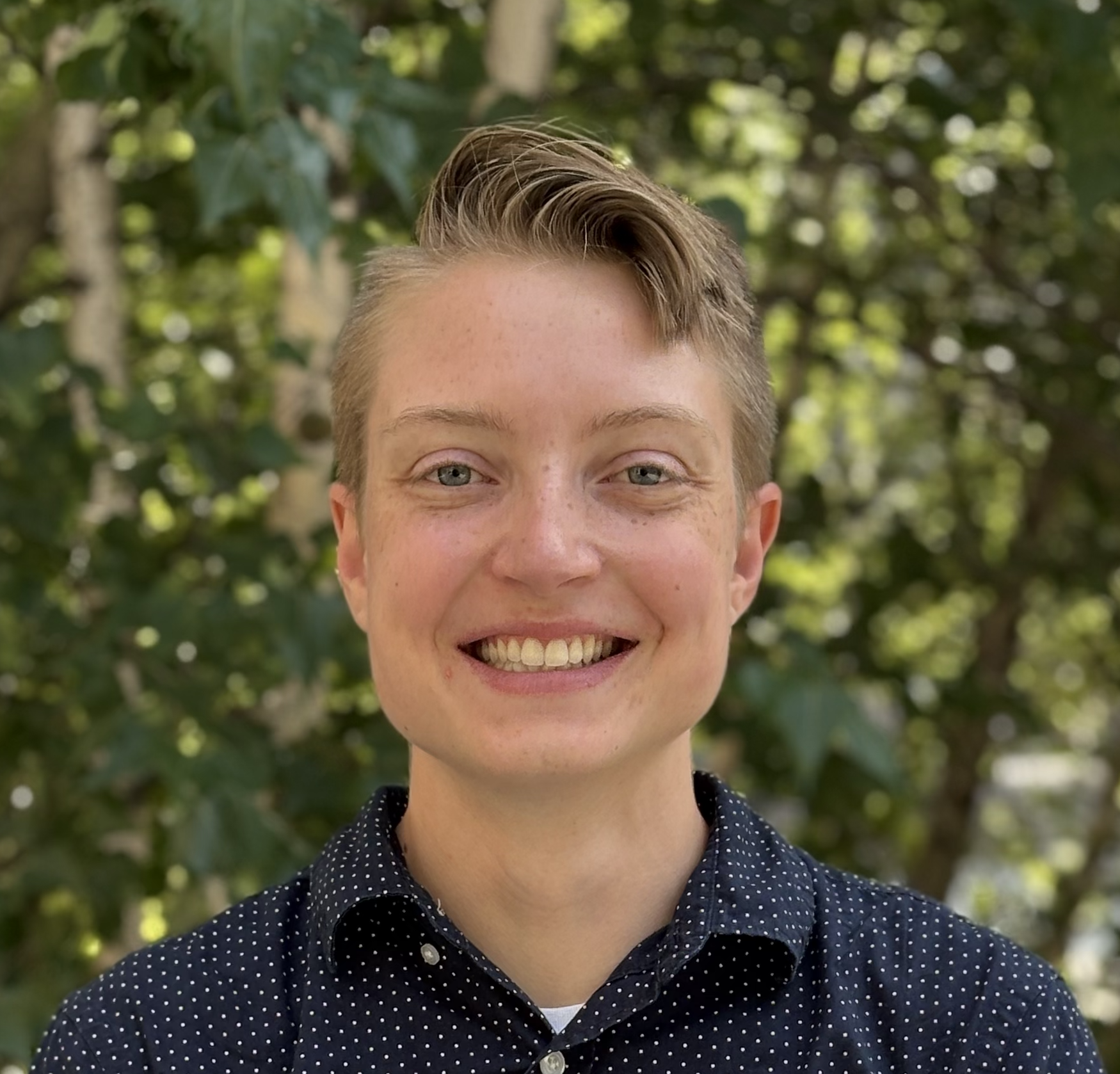}}]%
{Alex Rose} received the S.B. degree and S.M. degrees in aeronautics and astronautics from the Massachusetts Institute of Technology (MIT), Cambridge, MA, USA, in 2021 and 2023, respectively, and is currently working towards the Ph.D. degree in aeronautics and astronautics at MIT.

Alex's research interests include control theory and optimization with applications to space systems and robotics.
\end{IEEEbiography}
\begin{IEEEbiography}[{\includegraphics[width=1in,height=1.25in,clip,keepaspectratio]{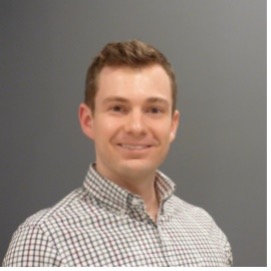}}]%
{Christopher Jewison} is a GN\&C Engineer at Draper. He received his bachelor’s degree (2012) from Cornell University and his master’s degree (2014) and PhD (2017) from MIT.
\end{IEEEbiography}
\begin{IEEEbiography}[{\includegraphics[width=1in,height=1.25in,clip,keepaspectratio]{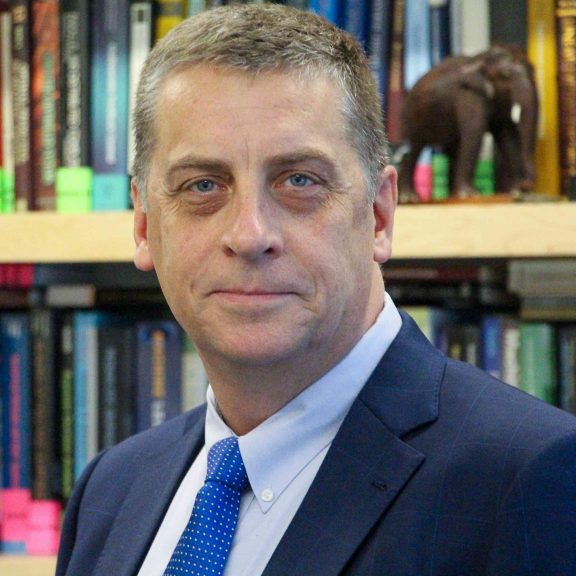}}]%
{Jonathan P. How}
(Fellow, IEEE) received the B.A.Sc. degree in Engineering Science (Aerospace) from the University of Toronto, Toronto, ON, Canada, in 1987, and the S.M. and Ph.D. degrees in aeronautics and astronautics from the Massachusetts Institute of Technology (MIT), Cambridge, MA, USA, in 1990 and 1993, respectively.
In 2000, he joined MIT, where he is currently the Ford Professor of Engineering. Prior to this, he was an Assistant
Professor with Stanford University, Stanford, CA, USA.
Dr. How was the Recipient of the 
IEEE Transactions on Robotics King-Sun Fu Memorial Best Paper Award for 2022 and 2024, the IEEE Control Systems Society Distinguished Member Award in 2020, and the AIAA Intelligent Systems Award in 2020.  He was the Editor-in-Chief for IEEE Control Systems Magazine from 2015 to 2019. He is a Fellow of the AIAA. He was elected to the National Academy of Engineering in 2021.
\end{IEEEbiography}
\end{document}